\DeclareMathAlphabet{\mathpzc}{OT1}{pzc}{m}{it}
\numberwithin{equation}{section}
\begin{document}

\theoremstyle{plain}

\newtheorem{theorem}{Theorem}[section]
\newtheorem{lemma}[theorem]{Lemma}
\newtheorem{example}[theorem]{Example}
\newtheorem{proposition}[theorem]{Proposition}
\newtheorem{corollary}[theorem]{Corollary}
\newtheorem{definition}[theorem]{Definition}
\newtheorem{Ass}[theorem]{Assumption}
\newtheorem{condition}[theorem]{Condition}
\theoremstyle{definition}
\newtheorem{remark}[theorem]{Remark}
\newtheorem{SA}[theorem]{Standing Assumption}

\newcommand{\of}{[\hspace{-0.06cm}[}
\newcommand{\gs}{]\hspace{-0.06cm}]}

\newcommand\llambda{{\mathchoice
		{\lambda\mkern-4.5mu{\raisebox{.4ex}{\scriptsize$\backslash$}}}
		{\lambda\mkern-4.83mu{\raisebox{.4ex}{\scriptsize$\backslash$}}}
		{\lambda\mkern-4.5mu{\raisebox{.2ex}{\footnotesize$\scriptscriptstyle\backslash$}}}
		{\lambda\mkern-5.0mu{\raisebox{.2ex}{\tiny$\scriptscriptstyle\backslash$}}}}}

\newcommand{\1}{\mathds{1}}

\newcommand{\F}{\mathbf{F}}
\newcommand{\G}{\mathbf{G}}
\newcommand{\B}{\mathbf{B}}
\newcommand{\cF}{\mathcal{F}}
\newcommand{\cG}{\mathcal{G}}
\newcommand{\cN}{\mathcal{N}}

\newcommand{\M}{\mathcal{M}}

\newcommand{\la}{\langle}
\newcommand{\ra}{\rangle}

\newcommand{\lle}{\langle\hspace{-0.085cm}\langle}
\newcommand{\rre}{\rangle\hspace{-0.085cm}\rangle}
\newcommand{\blle}{\Big\langle\hspace{-0.155cm}\Big\langle}
\newcommand{\brre}{\Big\rangle\hspace{-0.155cm}\Big\rangle}

\newcommand{\X}{\mathsf{X}}

\newcommand{\tr}{\operatorname{tr}}
\newcommand{\N}{{\mathbb{N}}}
\newcommand{\cadlag}{c\`adl\`ag }
\newcommand{\on}{\operatorname}
\newcommand{\oP}{\overline{P}}
\newcommand{\oO}{\mathcal{O}}
\newcommand{\D}{D(\mathbb{R}_+; \mathbb{R})}
\newcommand{\bS}{\mathbb{S}}

\renewcommand{\epsilon}{\varepsilon}

\newcommand{\fPs}{\mathfrak{P}_{\textup{sem}}}
\newcommand{\fPas}{\mathfrak{P}^{\textup{ac}}_{\textup{sem}}}
\newcommand{\rrarrow}{\twoheadrightarrow}
\newcommand{\cC}{\mathcal{C}}
\newcommand{\cH}{\mathcal{H}}
\newcommand{\cD}{\mathcal{D}}
\newcommand{\cE}{\mathcal{E}}
\newcommand{\cP}{\mathcal{P}}
\newcommand{\cR}{\mathcal{R}}
\newcommand{\cQ}{\mathcal{Q}}
\newcommand{\cM}{\mathcal{M}}
\newcommand{\cU}{\mathcal{U}}
\newcommand{\bth}{\overset{\leftarrow}\theta}
\renewcommand{\th}{\theta}
\newcommand{\cA}{\mathcal{A}}
\newcommand{\fP}{\mathfrak{P}}
\newcommand{\fM}{\mathfrak{M}}
\newcommand{\bx}{\mathsf{x}}

\newcommand{\bR}{\mathbb{R}}
\newcommand{\bN}{\mathbb{N}}
\newcommand{\nnabla}{\nabla}
\newcommand{\f}{\mathfrak{f}}
\newcommand{\g}{\mathfrak{g}}
\newcommand{\oconv}{\overline{\operatorname{conv}}\hspace{0.1cm}}
\newcommand{\usa}{\on{usa}}
\newcommand{\usc}{\on{USC}}
\newcommand{\limmed}{\on{\hspace{0.025cm}lim \hspace{0.05cm} med \hspace{0.05cm}}}
\newcommand{\esssup}{\on{\hspace{0.025cm}ess\hspace{0.05cm} sup \hspace{0.05cm}}}

\newcommand{\C}{\mathsf{C}}
\newcommand{\K}{\mathsf{K}}
\newcommand{\ou}{\overline{u}}
\newcommand{\ua}{\underline{a}}
\newcommand{\uu}{\underline{u}}

\newcommand{\NFLVR}{\(\on{NFLVR}(\cP)\)\hspace{0.05cm}}

\renewcommand{\emptyset}{\varnothing}

\allowdisplaybreaks

\makeatletter
\newcommand{\mylabel}[2]{#2\def\@currentlabel{#2}\label{#1}}
\makeatother

\makeatletter
\@namedef{subjclassname@2020}{%
	\textup{2020} Mathematics Subject Classification}
\makeatother

 \title[]{Robust utility maximization with nonlinear \\continuous semimartingales}
\author[D. Criens]{David Criens}
\author[L. Niemann]{Lars Niemann}
\address{Albert-Ludwigs University of Freiburg, Ernst-Zermelo-Str. 1, 79104 Freiburg, Germany}
\email{david.criens@stochastik.uni-freiburg.de}
\email{lars.niemann@stochastik.uni-freiburg.de}

\keywords{
robust utility maximization; robust market price of risk; duality theory; nonlinear continuous semimartingales; semimartingale characteristics; Knightian uncertainty}

\subjclass[2020]{60G65, 91B16, 93E20}

\thanks{We are grateful to an associate editor and two anonymous referees for many valuable comments and suggestions that helped us to improve the paper.} 
\thanks{DC acknowledges financial support from the DFG project SCHM 2160/15-1 and LN acknowledges financial support from the DFG project SCHM 2160/13-1.}
\date{\today}

\maketitle

\begin{abstract}
In this paper we study a robust utility maximization problem in continuous time under model uncertainty. 
The model uncertainty is governed by a continuous semimartingale with uncertain local characteristics. Here, the differential characteristics are prescribed by a set-valued function that depends on time and path.
We show that the robust utility maximization problem is in duality with a conjugate problem, and we study the existence of optimal portfolios for logarithmic, exponential and power utilities.
\end{abstract}

\section{Introduction}
\subsection{The purpose of this article}
An important problem for a portfolio manager is to maximize the expected utility of his terminal wealth. For complete markets, this problem can be solved by the martingale method developed in \cite{CH89,CH91,KLS87,pliska86}. 
The case of incomplete markets is considerably more difficult. By now, the classical approach to compute the maximized utility (called \emph{value function}) is to use a \emph{duality argument} which was formalized in an abstract manner in the seminal paper \cite{kramkov}, see also \cite{HP91,KLSX91} for other pioneering work. 
The key idea is to pass to a \emph{dual optimization problem}, which is typically of reduced complexity, and to recover, via a bidual relation, the value function of the original problem as the conjugate of the value function corresponding to the dual problem. 

In this paper we are interested in a robust framework, where, instead of a single financial model, a whole family of models is taken into consideration. Financially speaking, we think of a portfolio manager who is uncertain about the real-world measure, but who thinks that it belongs to a certain set of probabilities.

Recently, an abstract duality theory for possibly nondominated robust frameworks was developed in \cite{kupper}, see also \cite{denis} for another approach to a robust duality theory under drift and volatility uncertainty for bounded utilities. 
Compared to the classical case treated in \cite{kramkov}, the theory from \cite{kupper} relies on a measure-independent dual pairing which requires a suitable topological structure. A natural choice for an underlying space is the Wiener space of continuous functions, which can be seen as a \emph{canonical framework}.

In their fundamental work \cite{kramkov}, the authors use their abstract theory to establish duality theorems for general semimartingale market models. When it comes to robust duality theorems in nondominated settings, it seems that only the L\'evy type setting with deterministic uncertainty sets has been studied in detail, see \cite{kupper,denis}. The purpose of this paper is to reduce the gap between the robust and non-robust case in terms of a robust duality theory for nondominated canonical continuous semimartingale markets with time and path-dependent uncertainty sets. 
To this end, we rely on the ideas and abstract results of \cite{kupper} that allow us to derive robust duality theorems for a larger class of stochastic models. 
Therefore, we place ourselves in the continuous path-setting of \cite{kupper}. In the following, we explain our setting in more detail.

\subsection{The setting}
Consider the robust utility maximization problem given by
\begin{equation} \label{eq: intro u}
    u(x) := \sup_{g \in \cC(x)} \inf_{P \in \cP} E^P\big[ U(g) \big],
\end{equation}
where \(U \colon (0, \infty) \to \bR\) is a utility function,
\(\cP\) is a set of (possibly nondominated) probability measures on the Wiener space \(\Omega := C([0,T]; \bR^d)\), with finite time horizon \(T > 0\), and
\(\cC(x) := x\hspace{0.05cm} \cC\) is the set of claims that can be \(\cP\)-quasi surely superreplicated  with initial capital \(x > 0\), i.e.,
\begin{align*}
 \cC  := \Big \{ g \colon \Omega \to [0, \infty] \colon  \text{ \(g\) is universally measurable and }  \exists H \in \cH^\cP \text{ with } 1 +\int_0^T H_s d X_s \geq g \ \cP\text{-q.s.} \Big\}.
\end{align*}
The model uncertainty in this framework is introduced through a set \(\cP\), which consists of semimartingale laws on the Wiener space. We parameterize \(\cP\) via a compact parameter space \(F\) and drift and volatility coefficients \(b \colon F \times [0, T] \times \Omega \to \bR^d\) and \(a \colon F \times [0, T] \times \Omega \to \mathbb{S}^d_+\) such that
\[  
\cP = \big\{ P \in \fPas \colon P \circ X_{0}^{-1} = \delta_{x_0},\ (\llambda \otimes P)\text{-a.e. } (dB^{P} /d\llambda, dC^{P}/d\llambda) \in \Theta   \big\},
\]
where \(\fPas\) denotes the set of semimartingale laws with absolutely continuous characteristics, \(X\) is the coordinate process, \(x_0 \in \bR^d\) is the initial value, \((B^P, C^P)\) are the \(P\)-characteristics of \(X\), and
\[
\Theta (t, \omega) := \big\{(b (f, t, \omega), a (f, t, \omega)) \colon f \in F \big\}, \quad (t, \omega) \in [0,T] \times \Omega.
\]
In words, the set \(\cP\) of feasible real-world measures consists of all continuous semimartingale models whose coefficients take uncertain values in the fully path-dependent set \(\Theta\). 
This time and path dependence constitutes the main novelty of our paper, extending the results of \cite{kupper} where \(\Theta (t, \omega) \equiv \Theta\) is independent of time \(t\) and path \(\omega\).

We prove our main convex duality results under the assumption that \(b\) and \(a\) are continuous and of linear growth, and that \(\Theta\) is convex-valued. Further, we will introduce a robust market price of risk, which seems to be a novel object. In case we deal with unbounded utility functions, we additionally assume either a certain uniform boundedness condition or that the volatility coefficient \(a\) is uniformly bounded and elliptic. In general, however, we do not impose any ellipticity assumption and thence allow the portfolio manager to take incomplete markets into consideration. It seems to us that this feature is new for robust semimartingale frameworks.

Due to the high amount of flexibility of our framework, we are able to cover many prominent stochastic models. This includes the case from \cite{kupper} where \(\Theta (t, \omega) \equiv \Theta\) is independent of time \(t\) and path \(\omega\), which corresponds to the generalized \(G\)-Brownian motion as introduced in \cite{peng2010}, cf. also \cite{neufeld2017nonlinear} for a nonlinear L\'evy setting {\em with jumps}. 
Additionally, we are able to capture a Markovian framework of nonlinear diffusions 
 where \(\Theta (t, \omega) \equiv \Theta (\omega (t))\) depends on \((t, \omega)\) only through the value \(\omega (t)\). Such models have been investigated, for instance, in \cite{CN22b, hol16}.
 Furthermore, our setting can also be used to model path-dependent dynamics such as stochastic delay equations under parameter uncertainty and the random \(G\)-expectation  as discussed in Section 4 from \cite{NVH}, cf. also \cite{nutz} for a related approach.

\subsection{Main contributions} 

Denote the set of absolutely continuous separating measures by
\[
    \cD  := \big\{ Q \in \mathfrak{P}_a(\cP) \colon E^Q [ g ] \leq 1 \text{ for all } g \in \cC \big\}, \quad \cD(y)  := y\hspace{0.05cm} \cD, \ \ y >0,
\]
where \(\mathfrak{P}_a(\cP) := \{ Q \in \mathfrak{P}(\Omega) \colon \exists P \in \cP \text{ with } Q \ll P \}\).
The robust dual problem is given by 
\begin{equation} \label{eq: intro v}
    v(y)  := \inf_{Q \in \cD(y)} \sup_{P \in \cP} E^P\Big[V\Big(\frac{dQ}{dP}\Big) \Big],
\end{equation}
where \(V\) denotes the conjugate of the utility function \(U\).
We focus on logarithmic, exponential and power utility, i.e., \(U\) is assumed to be one of the following
\[
U(x) = \log(x), \quad U(x) = -e^{-\lambda x} \text{ for } \lambda > 0, \quad U(x) = \frac{x^p}{p} \text{ for } p \in (-\infty,0) \cup (0,1). 
\]
For these utility functions we show that the functions \(u\) and \(v\) are conjugates, i.e., 
\begin{equation} \label{eq: intro conjugates}
    u(x) = \inf_{y > 0} \big[v(y) + xy \big], \ \ x > 0, \quad v(y) = \sup_{x > 0} \big[u(x) - xy \big], \ \ y > 0,
\end{equation}
which constitutes our main result. 
Additionally, we prove the existence of an optimal portfolio for a relaxed version of the optimization problem \eqref{eq: intro u}, which accounts for the obstacle that in the nondominated case one cannot rely on classical tools like Koml\'os' lemma for the approximation scheme of an optimal portfolio. 
In order to show \eqref{eq: intro conjugates}, we use the duality results developed in \cite{kupper} and adapt the strategy laid out in \cite[Section~3]{kupper} beyond the case of nonlinear continuous L{\'e}vy processes, i.e., where the set \(\Theta\) is independent of time and path.
	
To apply the main duality results from \cite{kupper}, we prove that \(\cP\) and \(\cD\) are convex and compact, and
that the sets \( \cC\) and \( \cD\) are in duality, i.e.,
\begin{equation} \label{eq: intro first duality}
	\big\{ Q \in \mathfrak{P}_a(\cP) \colon E^Q [g ] \leq 1 \text{ for all } g \in \cC \cap C_b  (\Omega; \mathbb{R})\big \} = \cD,   
\end{equation}
and 
\begin{equation} \label{eq: intro second duality}
	\big\{ g \in C_b^+(\Omega; \bR) \colon E^Q [g] \leq 1 \text{ for all } Q \in \cD\big\} 
	= \cC \cap C_b (\Omega; \bR).
\end{equation}
We emphasise that for these dualities we work with absolutely continuous measures, while equivalent measures were used in \cite[Section~3]{kupper}.
As we explain in the following, we establish \eqref{eq: intro second duality} under a robust no arbitrage condition, which appears to us very natural. The corresponding duality from \cite[Section~3.1]{kupper} is proved under a uniform ellipticity assumption that implies the robust no arbitrage condition for the duality \eqref{eq: intro second duality} and that all absolutely continuous martingale measures are already equivalent martingale measures.
We now comment in more detail on the proofs.

For the first duality \eqref{eq: intro first duality}, we show that \(\cD\) coincides with the robust analogue of the set of absolutely continuous local martingale measures, i.e., 
\[
\fM_a(\cP) := \big\{ Q \in \fP_a(\cP) \colon X \text{ is a local \(Q\)-\(\F\)-martingale} \big\}.
\]
The equality \( \fM_a(\cP) = \cD \) relies on a characterization of local martingale measures on the Wiener space, and it
resembles the fact that for continuous paths, the set of separating measures coincides with the set of local martingale measures. 

Regarding the  second duality \eqref{eq: intro second duality}, the equality \( \fM_a(\cP) = \cD \) further allows us to use the robust superhedging duality 
\begin{align} \label{eq: SHD intro}
\sup_{Q \in \fM_a(\cP)} E^Q \big[f \big] = \min\Big\{ x \in \bR \colon \exists H \in \cH^{\fM_a(\cP)} \text{ with } x + \int_0^T H_s d X_s \geq f ~ Q\text{-a.s. \(\forall\) } Q \in \fM_a(\cP) \Big\},
\end{align}
to show that there is a superhedging strategy for every function in the polar of \(\cD\). 
To prove the duality \eqref{eq: SHD intro} we heavily rely on ideas from \cite{nutz_superhedging} and establish stability properties of a time and path-dependent correspondence related to \(\fM_a (\cP)\). As already mentioned above,
we are able to establish the
duality \eqref{eq: intro second duality} without imposing ellipticity conditions as used in \cite{kupper,denis}.
Rather, we work under a robust \emph{no free lunch with vanishing risk} condition that ensures that the set \(\fM_a(\cP)\) is sufficiently rich. More precisely, we assume that for every \(P \in \cP\) there exists a measure \(Q \in \fM_a(\cP)\) with \(P \ll Q\). This assumption is a continuous time version of the robust no-arbitrage condition introduced in \cite{nutz_nondom} and reduces to the classical no free lunch with vanishing risk condition in case \(\cP\) is a singleton.

Next, we comment on the proofs for convexity and compactness of the sets \(\cP\) and \(\cD\).
For \(\cP\) we adapt a strategy from \cite{CN22b} from a one-dimensional nonlinear diffusion setting to our multidimensional path-dependent framework. 
To establish compactness of the set \(\cD\), we show that it coincides with 
\[
\cM := \big\{ Q \in \fPas \colon Q \circ X_0^{-1} = \delta_{x_0},\ (\llambda \otimes Q)\text{-a.e. } (dB^{Q} /d\llambda, dC^{Q}/d\llambda) \in \tilde{\Theta}   \big\},
\]
where
\[
\tilde{\Theta} (t, \omega) := \{0\}^d \times \big\{a (f,t, \omega) \colon f \in F \big\} \subset \mathbb{R}^d \times \mathbb{S}^d_{+}, \quad (t, \omega) \in [0,T] \times \Omega.
\]
Compactness of \(\cM\) then can be proved as for its companion \(\cP\). 
To derive the equality \(\cD = \cM\), we assume the existence of a \emph{robust market price of risk (MPR)}. That is, the existence of a Borel function \(\theta \colon F \times [0, T] \times \Omega \to \mathbb{R}^d\), subject to a modest boundedness assumption, such that \( b = a \theta\). The robust MPR allows for equivalent measure changes between the set of candidate measures \(\cP\) and the robust version of the set of martingale measures \(\cM\).
By means of an example, we show that mild boundedness assumptions on the MPR are necessary for the identity \(\mathcal{M} = \mathcal{D}\) to hold.

Evidently, by virtue of \eqref{eq: intro v}, to deal with unbounded utility functions, we require some integrability of the Radon--Nikodym derivative \(dQ/dP\) 
for \(Q \in \cD\) and \(P \in \cP\).
To this end, we establish finite polynomial moments for certain stochastic exponentials. We think that this result is of independent interest.
In order to achieve this, we give a boundedness condition on the MPR and a uniform ellipticity and boundedness condition on the volatility from which we require only one to hold.
The first allows us to incorporate incomplete market models, while the second gives additional freedom in the drift coefficient.

\subsection{Comments on related literature}
There is already a vast literature on the robust utility maximization problem \eqref{eq: intro u} with respect to nondominated probability measures. 
In the discrete-time setting, the robust utility maximisation problem has been considered for instance in \cite{bartl,BC18,COW19, nutz_utility}, see also the references therein.
Nonlinear L\'evy frameworks with constant \(\Theta\) were for instance considered in \cite{kupper, denis, lin,nutz_levy} and nonlinear time-inhomogeneous L\'evy settings with time-dependent \(\Theta\) were studied in \cite{liang,liang2,park}. Further, \cite{biagini} investigated a robust Merton problem with uncertain volatility of L\'evy type and volatility state dependent drift.
Compared to these papers, we allow for uncertain drift and volatility with fully path dependent coefficients. In particular, our work includes diffusion models with uncertain parameters, such as real-valued nonlinear affine models as studied in \cite{fadina2019affine}, as well as stochastic volatility models with uncertain volatility processes.

The literature on the robust dual problem in continuous time is less extensive. Indeed, we are only aware of the papers \cite{kupper,denis}, where the problem is investigated from a theoretical perspective, and the only concrete examples we know of are the continuous L\'evy frameworks with uniformly bounded and elliptic coefficients as discussed in these papers. 

In the remainder of this subsection we comment on the differences between our proofs and those from~\cite{kupper} for the L\'evy setting. 
To establish \eqref{eq: intro conjugates} -- \eqref{eq: intro second duality}, we follow the ideas used in \cite[Section 3]{kupper} for the L\'evy case. 
We do however, replace the uniform ellipticity assumption of \cite[Section 3]{kupper} with a robust notion of \emph{no free lunch with vanishing risk}. This is in the spirit of the seminal work \cite{kramkov}. Further, notice that we work with the set 
\(\fM_a(\cP)\) of absolutely continuous local martingale measures as in \cite{nutz_nondom}. This relaxes our condition of no free lunch with vanishing risk, compared to its counterpart  formulated with equivalent local martingale measures.
To take our unbounded, non-elliptic and path-dependent framework into account, we have to develop new results on the equivalence, convexity and compactness of the sets \(\mathcal{P}\) and \(\mathcal{M}\). The difficulty of extending results from a L\'evy setting to its more general path-dependent counterparts has already been acknowledged in the literature (see, e.g., \cite{hol16,K21}).
Another novelty in our treatment is the concept of a robust MPR that is crucial to capture incomplete market situations that are excluded in \cite[Section 3]{kupper}. 
Related to the MPR, we need to investigate the martingale property and establish polynomial moment estimates for certain stochastic exponentials. To achieve this, the mere existence of a market price of risk is not sufficient and we have to impose either an additional boundedness assumption on the MPR, or restrict ourselves to a uniformly elliptic setting. Both conditions are satisfied for the L\'evy framework studied in \cite[Section~3]{kupper}. Finally, let us emphasize again that this is the only point were we use an ellipticity condition.

\subsection{Structure of the article}
In Section \ref{sec: setting} we lay out our setting, in Section \ref{sec: superhedging duality} we provide the superhedging duality for nonlinear continuous semimartingales
and in Section \ref{sec: separating duality} we discuss the duality relation between  \(\cC\) and \(\cD\). 
After stating the separating dualities and the appropriate notion of \emph{no-arbitrage}, we give parameterized conditions that ensure compactness and convexity of \(\cP\) and \(\cM\), respectively. The final part of Section \ref{sec: separating duality} is devoted to the robust market price of risk and to the equality \(\cD = \cM\).
In Section \ref{sec: duality utility} we study the robust utiltiy maximization problem, and its dual problem. 
In Section \ref{sec: examples} we provide three examples of specific models that are captured by our framework.
The proofs for our main results are given in the remaining sections. More precisely, the superhedging duality is proved in Section~\ref{sec: proof superhedging duality}. 
In Section~\ref{sec: proof separating duality} we show the duality between \(\cC\) and \(\cD\) while compactness and convexity of \(\cP\) and \(\cM\) is proved in Section~\ref{sec: compactness}.
In Section~\ref{sec: pf M = D} we verify that \(\cD = \cM\) in case a robust MPR exists.
The duality relation between \(u\) and \(v\) is established in Section~\ref{sec: proof duality utility}.


\newpage

\section{Main Result} \label{sec: results}
\subsection{The Setting}\label{sec: setting}
Fix a dimension \(d \in \mathbb{N}\) and a finite time horizon \(T > 0\). We define \(\Omega\) to be the space of continuous functions \([0,T] \to \mathbb{R}^d\) endowed with the uniform topology. 
The canonical process on $\Omega$ is denoted by \(X\), i.e., \(X_t (\omega) = \omega (t)\) for \(\omega \in \Omega\) and \(t \in [0,T]\). 
It is well-known that \(\mathcal{F} := \mathcal{B}(\Omega) = \sigma (X_t, t \in [0,T])\).
We define \(\F := (\mathcal{F}_t)_{t \in [0,T]}\) to be the canonical filtration generated by \(X\), i.e., \(\mathcal{F}_t := \sigma (X_s, s \in [0, t])\) for \(t \in [0,T]\). 
The set of probability measures on \((\Omega, \mathcal{F})\) is denoted by \(\mathfrak{P}(\Omega)\) and endowed with the usual topology of convergence in distribution.
Moreover, for any \(\sigma\)-field \(\cG\), let \(\cG^* := \bigcap_{P} \cG^P \) be the universal \(\sigma\)-field,
where \(P\) ranges over all probability measures on \(\cG\), and \(\cG^P\) denotes the completion of \(\cG\) w.r.t. \(P\).

Further, we denote the space of symmetric, positive semidefinite real-valued \(d\times d\) matrices by \(\mathbb{S}^d_{+}\), and by \(\bS^d_{++} \subset \bS^d_+\) the set of all positive definite matrices in \(\mathbb{S}^d_{+}\). Finally, recall that a subset of a Polish space is
called analytic if it is the image of a Borel subset of some Polish space
under a Borel map, and that a function \(f\) with values in \( \overline{\bR} := [-\infty, + \infty]\) is upper semianalytic if \(\{ f > c \}\) is analytic for every \(c \in \bR\). Any Borel function is also upper semianalytic.
We define, for two stopping times \(\rho\) and \(\tau\) with values in \( [0,T] \cup \{+\infty\}\), the stochastic interval
\[
\of \rho, \tau \of ~:= \{ (t,\omega) \in [0,T] \times \Omega \colon \rho(\omega) \leq t < \tau(\omega) \}.
\]
The stochastic intervals \( \gs \rho, \tau \of , \of \rho, \tau \gs , \gs \rho, \tau \gs  \) are defined accordingly.
In particular, the equality \( \of 0, T \gs = [0,T] \times \Omega \) holds.

Let \(F\) be a metrizable space and let \(b \colon F \times \of 0, T\gs \to \mathbb{R}^d\) and \(a \colon F \times \of 0, T \gs \to \mathbb{S}^d_{+}\) be two Borel functions such that \((t, \omega) \mapsto b(f, t, \omega)\) and \((t, \omega) \mapsto a (f, t, \omega)\) are predictable for all \(f \in F\).

We define the correspondences, i.e., the set-valued maps, \(\Theta, \tilde{\Theta} \colon \of 0, T \gs \hspace{0.05cm} \twoheadrightarrow \mathbb{R}^d \times \mathbb{S}^d_{+}\) by
\begin{align*}
    \Theta (t, \omega) &:= \big\{(b (f, t, \omega), a (f,t, \omega)) \colon f \in F \big\} \subset \mathbb{R}^d \times \mathbb{S}^d_{+},\\
    \tilde{\Theta} (t, \omega) &:= \{0\}^d \times \big\{a (f,t, \omega) \colon f \in F \big\} \subset \mathbb{R}^d \times \mathbb{S}^d_{+}.
\end{align*}

We denote the set of laws of continuous semimartingales by \(\fPs \subset \mathfrak{P}(\Omega)\).
For \(P \in \fPs\), we denote the semimartingale characteristics of the coordinate process \(X\) by \((B^P, C^P)\), and 
we set 
\[
\fPas  := \big\{ P \in \fPs  \colon P\text{-a.s. } (B^P, C^P) \ll \llambda \big\},
\]
where \(\llambda\) denotes the Lebesgue measure.

We further define, for fixed \(x_0 \in \bR^d\),
\begin{align*}
    \cP &:= \big\{ P \in \fPas \colon P \circ X_0^{-1} = \delta_{x_0},\ (\llambda \otimes P)\text{-a.e. } (dB^{P} /d\llambda, dC^{P}/d\llambda) \in \Theta   \big\} ,\\
    \cM &:= \big\{ Q \in \fPas \colon Q \circ X_0^{-1} = \delta_{x_0},\ (\llambda \otimes Q)\text{-a.e. } (dB^{Q} /d\llambda, dC^{Q}/d\llambda) \in \tilde{\Theta}   \big\}.
\end{align*}

\begin{SA} \label{SA: mbl and nonempty}
\quad
\begin{enumerate}
    \item[\textup{(i)}] \( \cP \neq \emptyset \neq \cM\).

    \item[\textup{(ii)}] \(\Theta\) and \(\tilde{\Theta}\) have a Borel measurable graph, i.e.,
\begin{align*}
\big\{ (t, \omega, b, a) \in [0, T] \times \Omega \times \mathbb{R}^d \times \mathbb{S}^d_+ \colon (b, a) \in \Theta (t, \omega) \big\} &\in \mathcal{B}([0, T]) \otimes \cF \otimes \mathcal{B}(\bR^d) \otimes \mathcal{B}(\bS^d_+), \\ 
\big\{ (t, \omega, b, a) \in [0, T] \times \Omega \times \mathbb{R}^d \times \mathbb{S}^d_+ \colon (b, a) \in \tilde{\Theta} (t, \omega) \big\} &\in \mathcal{B}([0, T]) \otimes \cF \otimes \mathcal{B}(\bR^d) \otimes \mathcal{B}(\bS^d_+).
\end{align*} 
\end{enumerate}
\end{SA}

\begin{remark} \label{rem: p and m nonempty}
\quad
\begin{enumerate} 
    \item[\textup{(i)}] By virtue of \cite[Lemma 2.10]{CN22}, part (i) from Standing Assumption \ref{SA: mbl and nonempty} holds if 
    the functions \(b \colon F \times \of 0, T\gs \to \mathbb{R}^d \) and \(a \colon F \times \of 0, T \gs \to \mathbb{S}^d_{+}\) are continuous and of linear growth, i.e.,
    there exists a constant \(\C > 0\) such that 
	\[
	\|b (f, t, \omega)\|^2 + \|a (f, t, \omega)\| \leq \C \Big ( 1       + \sup_{s \in [0, t]} \|\omega (s)\|^2 \Big )
	\]
	for all \((f, t, \omega) \in F \times \of 0, T\gs\).

    \item[\textup{(ii)}] Thanks to \cite[Lemma 2.8]{CN22}, part (ii) from Standing Assumption \ref{SA: mbl and nonempty} holds once \(F\) is compact and the functions \(b \colon F \times \of 0, T\gs \to \mathbb{R}^d \) and \(a \colon F \times \of 0, T \gs \to \mathbb{S}^d_{+}\) are continuous.
    This is a crucial property in order to use the theory developed in \cite{ElKa15, NVH}.
\end{enumerate}
\end{remark}

Let \(\cQ \subset \mathfrak{P}(\Omega)\) be a set of probability measures.
Recall that a \(\cQ\)-polar set is a set that is \(Q\)-null under every \(Q\in \cQ\), and 
that a property holds \(\cQ\)-quasi surely, if it holds outside a \(\cQ\)-polar set.
For two sets of probability measures \(\cQ, \cR \subset \mathfrak{P}(\Omega)\), we write
\( \cQ \ll \cR\) if every \(\cR\)-polar set is \(\cQ\)-polar. If \(\cQ \ll \cR\) and \(\cR \ll \cQ\),
we denote this by \(\cQ \sim \cR\).

Finally, for any collection \( \cR \subset \fPs\), we first define the filtration  \( \G^\cR =  (\cG^\cR_t)_{t \in [0, T]} \) via
\begin{equation} \label{eq: def G R}
\cG^\cR_t := \bigcap_{s > t} \big(\cF^*_s \vee \cN^{\cR}\big), \quad t \in [0, T], 
\end{equation}
where \( \cF^*_t \) is the universal completion of \( \cF_t \), and
\( \cN^{\cR}\) is the collection of \(\cR\)-polar sets. 
Then, we set
\(\cH^\cR\) to be the set of all \(\G^\cR\)-predictable processes \(H\) with \(H \in L(X, P)\)\footnote{See, for instance, \cite[Section III.6.c]{JS} for more details.}
for all \(P\in \cR\) and such that for every \(P \in \cR\) there exists a constant \(C = C(H, P) > 0\) such that \(P\)-a.s. \( \int_0^\cdot H_s d X_s \geq -C\). 

The following observation becomes useful later. If \(P \in \cR\) is such that \(X\) is a local \(P\)-martingale, then \(\int_0^\cdot H_s d X_s\) is a \(P\)-supermartingale for every \(H \in \cH^\cR\). This follows from the well-known fact that any local martingale that is bounded from below is a supermartingale.


\subsection{Superhedging duality} \label{sec: superhedging duality}
We denote the set of all local martingales measures for \(X\) that are absolutely continuous to the uncertainty set \(\mathcal{P}\) by
\[
 \mathfrak{M}_a (\cP) := \big\{ Q \in \fP_a(\cP) \colon X \text{ is a local \(Q\)-\(\F\)-martingale} \big\}, 
\]
where 
\[
\mathfrak{P}_a(\cP) := \big\{ Q \in \mathfrak{P}(\Omega) \colon \exists P \in \cP \text{ with } Q \ll P \big\}.
\]
The following theorem provides a version of \cite[Theorem 3.2]{nutz_superhedging} which is tailored to our nonlinear semimartingale framework.
It shows that for payoffs bounded from below, the optimal superhedging strategy is \emph{admissible} in a robust sense. This will turn out to be useful in Section~\ref{sec: proofs}.
The proof is given in Section~\ref{sec: proof superhedging duality} below.

\begin{theorem} \label{thm: superhedging duality}
Assume that \(\fM_a(\cP) \neq \emptyset\).
Let \( f \colon \Omega \to \bR_+\)  be an
upper semianalytic function such that
\[ \sup_{Q \in \mathfrak{M}_a(\cP)} E^Q\big[f\big] < \infty. \] 
Then, there exist a strategy \(H \in \cH^{\mathfrak{M}_a (\cP)}\) and a constant \(C > 0\) with
\[
\int_0^\cdot H_s d X_s \geq - C \quad  Q\text{-a.s. for all } Q \in \mathfrak{M}_a (\cP),
\]
such that
\[
\sup_{Q \in \mathfrak{M}_a (\cP)} E^Q \big[f \big] + \int_0^T H_s d X_s \geq f, \quad Q\text{-a.s. for all } Q \in \mathfrak{M}_a (\cP).
\]
\end{theorem}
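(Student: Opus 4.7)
The plan is to follow the strategy of \cite[Theorem~3.2]{nutz_superhedging}, adapted to our nonlinear continuous semimartingale framework. The argument proceeds in three stages: first, establishing stability properties of \(\fM_a(\cP)\); second, defining a robust conditional superhedging value and proving a dynamic programming principle for it; third, applying a robust optional decomposition to extract an admissible hedging strategy.

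First, since \(\fM_a(\cP)\) is cut out by the pathwise condition \((0, dC^{Q}/d\llambda) \in \tilde\Theta\) and \(\tilde\Theta\) has a Borel graph by Standing Assumption~\ref{SA: mbl and nonempty}, I would invoke the measurable selection machinery of \cite{ElKa15,NVH} to show that (i) the correspondence \((t,\omega) \mapsto \fM_a(\cP)[t,\omega]\) of conditioned/translated martingale measures at \((t,\omega)\) has an analytic graph, and (ii) the family \(\fM_a(\cP)\) is stable under pasting along \(\F\)-stopping times. With these in hand, I would define the upper conditional value
\[
\cE_t(f)(\omega) := \sup_{Q \in \fM_a(\cP)[t,\omega]} E^{Q}\big[f^{t,\omega}\big],
\]
where \(f^{t,\omega}\) denotes the path-concatenation of \(f\) at \((t,\omega)\). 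Jankov--von Neumann selection then yields upper semianalyticity of \(\cE_t(f)\) together with the dynamic programming inequality \(\cE_s(f) \geq E^{Q}[\cE_t(f)\mid \cF_s]\) for \(s\leq t\) and every \(Q \in \fM_a(\cP)\). Passing to an \(\fM_a(\cP)\)-quasi sure right-continuous modification \(Y\), which is available since \(X\) is continuous and each \(Q \in \fM_a(\cP)\) is a local martingale measure, one obtains a nonnegative, \(\G^{\fM_a(\cP)}\)-adapted \(Q\)-supermartingale \(Y\) with \(Y_0 = \sup_{Q\in\fM_a(\cP)} E^{Q}[f]\) and \(Y_T \geq f\) quasi surely.

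In the third stage, I would invoke the robust optional decomposition theorem (in the spirit of \cite{nutz_superhedging}) applied to \(Y\) relative to the family \(\fM_a(\cP)\) and the continuous driver \(X\): there exist \(H \in \cH^{\fM_a(\cP)}\) and an \(\fM_a(\cP)\)-quasi surely nondecreasing process \(A\) with \(A_0 = 0\) such that
\[
Y_t = Y_0 + \int_0^t H_s\, dX_s - A_t \quad \fM_a(\cP)\text{-q.s.}
\]
Setting \(C := Y_0 < \infty\), nonnegativity of \(Y\) forces \(\int_0^\cdot H_s\,dX_s \geq -C\) quasi surely, which is the admissibility requirement in the statement. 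The superhedging inequality \(Y_0 + \int_0^T H_s\,dX_s \geq Y_T \geq f\) then follows directly, completing the proof.

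The principal obstacle lies in executing the dynamic programming step together with the optional decomposition for a family of \emph{absolutely continuous} martingale measures rather than equivalent ones. The pasting argument requires that the concatenation of a measure in \(\fM_a(\cP)\) with a regular conditional kernel from \(\fM_a(\cP)[t,\omega]\) still lies in \(\fM_a(\cP)\); this rests on the pathwise characterization via \(\tilde\Theta\) combined with care about absolute continuity along the pasting. Because there is no equivalent dominating measure in general, the classical Föllmer--Kramkov optional decomposition is unavailable, and one must genuinely use the quasi sure version from \cite{nutz_superhedging}, which is ultimately what forces the delicate measurability work in the earlier steps.
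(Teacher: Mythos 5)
Your proposal follows essentially the same route as the paper: represent the dynamic version of \(\fM_a(\cP)\) as \(\cM(t,\omega)\cap\fP_a(\cP(t,\omega))\), verify the stability/measurability properties (Condition (A)) needed for the dynamic programming principle of \cite{ElKa15,NVH}, build the quasi-sure supermartingale \(\cE_t(f)\), and conclude via the robust optional decomposition of \cite{nutz_superhedging}, with admissibility coming from nonnegativity. The obstacle you correctly single out — pasting and conditioning for \emph{absolutely continuous} rather than equivalent measures — is precisely what the paper resolves in Lemma \ref{lem: stability equivalent measures}, where an Aumann measurable selection of dominating kernels shows that \(\fP_a(\cR(t,\omega))\) inherits Condition (A).
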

To prove Theorem \ref{thm: superhedging duality} we have to verify the prerequisites of \cite[Theorem 3.2]{nutz_superhedging}.
To this end, in Section~\ref{sec: proof superhedging duality} we establish stability properties of \(\fM_a(\cP)\) that ensure the dynamic programming principle for the (dynamic) superhedging price associated to \(\fM_a(\cP)\).

\subsection{Separating duality for nonlinear continuous semimartingales}
\label{sec: separating duality}
For \(P \in \mathcal{P}\), we define 
\begin{align*}
    \mathfrak{M}_e^P := \big\{ Q \in \mathfrak{P} (\Omega) \colon P \sim Q, \  X \text{ is a local \(Q\)-\(\F\)-martingale} \big\}.
\end{align*}
Further, we denote the set of all local martingales measures for \(X\) that are equivalent to the uncertainty set \(\mathcal{P}\) by
\[
 \mathfrak{M}_e (\cP) := \big\{ Q \in \fP_e(\cP) \colon X \text{ is a local \(Q\)-\(\F\)-martingale} \big\}, 
\]
where 
\[
\mathfrak{P}_e(\cP) := \big\{ Q \in \mathfrak{P}(\Omega) \colon \exists P \in \cP \text{ with } Q \sim P \big\}.
\]

We define
\begin{align*}
    \cC & := \Big \{ g \colon \Omega \to [0, \infty] \colon  \text{ \(g\) is \(\cF_T^*\)-measurable and }  \exists H \in \cH^\cP \text{ with } 1 + \int_0^T H_s d X_s \geq g \ \cP\text{-q.s.} \Big\},
  \\
    \cD & := \big\{ Q \in \mathfrak{P}_a(\cP) \colon E^Q \big[ g \big] \leq 1 \text{ for all } g \in \cC \big\}.
\end{align*}
Here, \(\cC\) is the set of claims that can be \(\cP\)-quasi surely superreplicated  with initial capital \(1\), while 
\( \cD\) is the collection of absolutely continuous separating measures for \(\cC\).

In order to derive the duality between \(\cC\) and \(\cD\),
we impose the following \emph{no-arbitrage} condition.

\begin{definition} \label{def: NFLVR}
    We say that  \NFLVR holds if 
    for every real-world measure \(P \in \cP\) there exists a martingale measure \(Q \in \fM_a(\cP)\) such that \(P \ll Q\).
\end{definition} 

\begin{remark}
	If \NFLVR holds, then \(\cP\sim \fM_a (\cP)\). To see this, note that 
 \( \fM_a(\cP) \subset \fP_a(\cP) \ll \cP\) by definition. 
 Conversely, if \NFLVR holds, it follows that \(\cP \ll \fM_a(\cP)\).
\end{remark}

The following theorem can be viewed as a generalization of \cite[Propositions 5.7, 5.9]{kupper} from a L\'evy setting with constant \(\Theta (t, \omega) \equiv \Theta \subset \bR^d \times \mathbb{S}^d_{++}\) to a general nonlinear semimartingale framework with path dependent uncertainty set \((t, \omega) \mapsto \Theta (t, \omega)\). 
Moreover, it seems to be the first concrete result without an ellipticity assumption, i.e., which also covers incomplete market scenarios, see \cite[Remark 3.2]{kupper}. The theorem is proved in Section \ref{sec: proof separating duality} below.

\begin{theorem} \label{thm: dualities}
\quad 
\begin{enumerate}
\item[\textup{(i)}] It holds that
\begin{equation} \label{eq: first duality}
\mathfrak{M}_a (\cP) = \cD = \big\{ Q \in \mathfrak{P}_a(\cP) \colon E^Q \big[g \big] \leq 1 \text{ for all } g \in \cC \cap C_b  (\Omega; \mathbb{R})\big \}.   
\end{equation}
\item[\textup{(ii)}]
If \NFLVR holds, then 
\begin{equation} \label{eq: second duality}
 \cC \cap C_b (\Omega; \bR) = \big\{ g \in C_b^+(\Omega; \bR) \colon E^Q \big[g\big] \leq 1 \text{ for all } Q \in \cD\big\}.    
\end{equation}
\end{enumerate}
\end{theorem}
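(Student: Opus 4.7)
The plan is to prove~(i) by first identifying \(\cD\) with \(\fM_a(\cP)\)---a set well-suited to separation arguments---and then to upgrade this identification so that the smaller test class \(\cC \cap C_b(\Omega;\bR)\) already suffices; part~(ii) then follows from~(i) combined with the superhedging duality of Theorem~\ref{thm: superhedging duality}. Denote the rightmost set in~\eqref{eq: first duality} by \(\cD_{\textup{cb}}\); the inclusion \(\cD \subseteq \cD_{\textup{cb}}\) is immediate from \(\cC \cap C_b(\Omega;\bR) \subseteq \cC\).

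For \(\fM_a(\cP) \subseteq \cD\), given \(Q \in \fM_a(\cP)\) and \(g \in \cC\) with witness \(H \in \cH^\cP\), the inequality \(1 + \int_0^T H_s\,dX_s \geq g\) holds \(Q\)-a.s.\ since \(Q \ll P\) for some \(P \in \cP\); the integral \(\int_0^\cdot H\,dX\) is a \(Q\)-local martingale bounded from below, hence a \(Q\)-supermartingale (as recalled in the setting section), giving \(E^Q[g] \leq 1\). For the converse \(\cD \subseteq \fM_a(\cP)\), I would test the local martingale property by fixing \(A \in \cF_s\), \(0 \leq s < t \leq T\), \(v \in \bR^d\), and, for \(\tau_n := \inf\{r : \|X_r\| \geq n\} \wedge T\), applying \(E^Q[\cdot] \leq 1\) to the pair of strategies \(H^\pm := \pm c\, v\, \1_A \,\1_{(s,t]}\,\1_{\of 0, \tau_n \gs}\) (which lie in \(\cH^\cP\) as bounded predictable integrands whose associated stochastic integrals are uniformly bounded by \(2cn\|v\|\)). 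This extracts \(E^Q[\1_A \langle v, X^{\tau_n}_t - X^{\tau_n}_s\rangle] = 0\); varying \(v\) and \(A\), each \(X^{\tau_n}\) is a bounded \(Q\)-\(\F\)-martingale and \(\tau_n \nearrow T\) everywhere by continuity of paths, so \(X\) is a local \(Q\)-martingale and \(Q \in \fM_a(\cP)\).

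The crux of~(i) is the inclusion \(\cD_{\textup{cb}} \subseteq \fM_a(\cP)\): the test functions \(g^\pm := 1 \pm c\,\1_A \langle v, X^{\tau_n}_t - X^{\tau_n}_s\rangle\) of the previous step are bounded but \emph{not} continuous in \(\omega\), since \(\omega \mapsto X_{\tau_n(\omega)\wedge r}(\omega)\) is generally discontinuous. My plan is to smooth the stopping by replacing \(\1_{\of 0, \tau_n \gs}\) with the predictable, \(\omega\)-continuous cutoff \(\Phi_n(\omega, r) := \eta_n\bigl(\sup_{u \leq r} \|X_u(\omega)\|\bigr)\) for \(\eta_n \in C^\infty\) with \(\eta_n \equiv 1\) on \([0,n]\) and \(\eta_n \equiv 0\) on \([n+1,\infty)\), and to approximate \(\1_A\) by bounded continuous cylindrical \(\cF_s\)-measurable functions using regularity of \(Q\) on the Polish space \(\Omega\). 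The main obstacle is to exhibit the corresponding approximants as genuine elements of \(\cC \cap C_b(\Omega;\bR)\), since the stochastic integral \(\omega \mapsto \int_0^T \Phi_n(\omega,\cdot)\,dX\) is generally not \(\omega\)-continuous in the uniform topology. I would bypass this by invoking Föllmer's pathwise It\^o calculus (applicable because every \(P \in \cP\) has absolutely continuous characteristics), so that test functions of the form \(\phi(X_t) - \phi(X_s)\) with \(\phi \in C^2_c(\bR^d)\) become explicitly superreplicable via \(\nabla\phi(X)\) on the support of \(\phi\), furnishing candidates in \(\cC \cap C_b(\Omega;\bR)\). Passing to the limit in the approximants then recovers the identity from the previous paragraph and hence \(Q \in \fM_a(\cP)\).

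For part~(ii), one direction is clear: any \(g \in \cC \cap C_b(\Omega;\bR)\) is nonnegative by definition of \(\cC\) and satisfies \(E^Q[g] \leq 1\) for every \(Q \in \cD\) by definition of \(\cD\). For the converse, let \(g \in C_b^+(\Omega;\bR)\) satisfy \(E^Q[g] \leq 1\) for all \(Q \in \cD\). By~(i), \(\cD = \fM_a(\cP)\), and \NFLVR ensures \(\fM_a(\cP) \ne \emptyset\); hence \(\sup_{Q \in \fM_a(\cP)} E^Q[g] \leq 1 < \infty\) and Theorem~\ref{thm: superhedging duality} yields \(H \in \cH^{\fM_a(\cP)}\) and a constant \(C > 0\) with \(\int_0^\cdot H\,dX \geq -C\) and \(1 + \int_0^T H_s\,dX_s \geq g\), both \(Q\)-a.s.\ for every \(Q \in \fM_a(\cP)\). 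Under \NFLVR, \(\cP \sim \fM_a(\cP)\), so both statements pass to \(\cP\)-q.s.\ statements; the only subtle point is to verify \(H \in \cH^\cP\), which follows since the equality of polar sets gives \(\G^\cP = \G^{\fM_a(\cP)}\) (so predictability transfers), while for each \(P \in \cP\) the existence of \(Q \in \fM_a(\cP)\) with \(P \ll Q\) (from \NFLVR) combined with the stability of stochastic integrability under absolute continuity delivers \(H \in L(X,P)\).
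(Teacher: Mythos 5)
Your overall architecture matches the paper's: $\fM_a(\cP)\subset\cD\subset\cD_{\textup{cb}}$, then $\cD_{\textup{cb}}\subset\fM_a(\cP)$, and part (ii) via the superhedging duality of Theorem \ref{thm: superhedging duality} together with the transfer of integrands between $\cP$ and $\fM_a(\cP)$ under \NFLVR (your inline argument for $\cH^\cP=\cH^{\fM_a(\cP)}$ is exactly Lemma \ref{lem: integrands}, and your part (ii) is exactly Proposition \ref{prop: second duality}). The inclusion $\fM_a(\cP)\subset\cD$ is also argued as in the paper (M\'emin's lemma plus the supermartingale property of nonnegative local martingales).

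The genuine gap is in the crux you yourself identify: the inclusion $\cD_{\textup{cb}}\subset\fM_a(\cP)$. The paper reduces this to showing $E^Q[g]\le 0$ for all bounded continuous $g$ dominated by a terminal admissible stochastic integral, and then invokes \cite[Lemma 5.6]{kupper}, which is precisely the nontrivial statement that this continuous test class characterizes local martingale measures. Your substitute construction does not work as described: for $\phi\in C^2_c(\bR^d)$ and any $P\in\fPas$, It\^o's formula gives
\[
\phi(X_t)-\phi(X_s)=\int_s^t\langle\nabla\phi(X_r),dX_r\rangle+\tfrac12\int_s^t\operatorname{tr}\big[\nabla^2\phi(X_r)\,a^P_r\big]\,dr,
\]
so $\phi(X_t)-\phi(X_s)$ is \emph{not} superreplicated by the strategy $\nabla\phi(X)$: the second-order correction has no sign, and it cannot be made nonpositive uniformly, since a nonconstant compactly supported $\phi$ cannot be concave. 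F\"ollmer's pathwise calculus only gives the integral a pathwise meaning; it does not remove this term. Hence the proposed candidates are not exhibited as elements of $\cC\cap C_b(\Omega;\bR)$, and the subsequent limit passage (which would also need a uniform-integrability argument to recover the two-sided identity $E^Q[\1_A\langle v,X^{\tau_n}_t-X^{\tau_n}_s\rangle]=0$ from continuous approximants) is left unsubstantiated. As written, the key implication "tested only against $\cC\cap C_b$ $\Rightarrow$ local martingale measure" is not proved; you would either need to import a result of the type of \cite[Lemma 5.6]{kupper} or carry out a genuinely different construction of admissible continuous test functions (e.g.\ via bounded concave transforms of increments, taking care that the running stochastic integral stays bounded below by a constant, which your unstopped integrands do not satisfy).
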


\begin{remark}[Discussion of  \NFLVR] \label{rem: NFLVR}
Notice that in the single-measure case \(\cP = \{P\}\), \NFLVR is equivalent 
to the existence of an \emph{equivalent} local martingale measure \(Q \in \fM_e^P\). 
Thanks to the seminal work of Delbaen and Schachermayer (cf. \cite{DS} for an overview), the existence of an equivalent local martingale measure is equivalent to the absence of arbitrage in the sense of {\em no free lunch with vanishing risk (NFLVR)}.
Further, observe that \NFLVR is implied by the existence of a measure \(Q \in \fM_e^P\) for every \(P \in \cP\), i.e., if the NFLVR conditions holds under every \(P \in \cP\).
In general, \NFLVR is the continuous time version of the robust no-arbitrage condition NA(\(\cP\)) from \cite{nutz_nondom}.
More precisely, in finite discrete time, \cite[Theorem 4.5]{nutz_nondom} shows that NA(\(\cP\)) is equivalent to both, the 
mere equivalence of \(\fM_a(\cP)\) and \(\cP\), and the seemingly stronger condition that for every \(P \in \cP\) there exists \(Q \in \fM_a(\cP)\) such that \(P \ll Q\).
We impose the latter condition to account for continuous time stochastic integration. It guarantees that \(\fM_a(\cP)\) is sufficiently rich in the sense that it implies the equality
\(\cH^\cP = \cH^{\fM_a(\cP)}\) (see Lemma \ref{lem: integrands} below). This equality is crucial in our proof of the separating duality.
\end{remark}

The dualities \eqref{eq: first duality} and \eqref{eq: second duality} are two of the three main hypothesis from \cite{kupper}. The third main assumption translates to compactness and convexity of \(\cP\) and \(\mathfrak{M}_a (\cP)\). 

Next, we investigate compactness and convexity of \(\cP\) and \(\fM_a(\cP) = \cD\). To treat the second set, we prove that it coincides with \(\cM\) under the existence of a robust market price of risk, which is a notion we introduce in Condition \ref{SA: MPR} below.

Before, we formulate parameterized conditions that ensure compactness and convexity of \(\cP\) and \(\cM\), respectively.

\begin{condition} \label{cond: compact, LG, cont}
\quad 
\begin{enumerate}
\item[\textup{(i)}] \(F\) is compact.
\item[\textup{(ii)}]
The functions \(b \colon F \times \of 0, T\gs \to \mathbb{R}^d\) and \(a \colon F \times \of 0, T \gs \to \mathbb{S}^d_{+}\) are continuous.
\item[\textup{(iii)}]
There exists a constant \(\C > 0\) such that 
	\[
	\|b (f, t, \omega)\|^2 + \|a (f, t, \omega)\| \leq \C \Big ( 1 + \sup_{s \in [0, t]} \|\omega (s)\|^2 \Big )
	\]
	for all \((f, t, \omega) \in F \times \of 0, T\gs\).
\end{enumerate}
\end{condition}

\begin{condition} \label{cond: convex}
    The correspondence \(\Theta\) is convex-valued, i.e., \(\{(b (f, t, \omega), a(f, t, \omega )) \colon f \in F\} \subset \bR^d \times \mathbb{S}^d_+\) is convex for every \((t, \omega) \in \of 0, T\gs\).
\end{condition}

Observe that Condition \ref{cond: convex} also ensures that \(\tilde{\Theta}\) is convex-valued.
Additionally, recall that a continuous semimartingale is a local martingale if and only if its first characteristic vanishes. Hence, \(\cM\) consists of local martingale measures. In fact, by the linear growth condition (iii) from Condition \ref{cond: compact, LG, cont}, the set \(\cM\) then even consists of (true) martingale measures.

The following theorem was established in \cite[Propositions~3.9, 5.7]{CN22b} for one-dimensional nonlinear diffusions, but the argument transfers to our multidimensional path dependent framework. 
Its compactness part extends \cite[Theorem 4.41]{hol16} and \cite[Theorem~2.5]{neufeld} beyond the case where \(b\) and \(a\) are of Markovian structure, uniformly bounded and globally Lipschitz continuous.\footnote{After submitting this paper, we established a more general version of Theorem \ref{thm: compactness} in the (updated) paper \cite{CN22}.} The following result is proved in Section \ref{sec: compactness} below.

\begin{theorem} \label{thm: compactness}
Suppose that the Conditions \ref{cond: compact, LG, cont} and \ref{cond: convex} hold. Then, the sets \( \cP \) and \( \cM\) are both convex and compact.
\end{theorem}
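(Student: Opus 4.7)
The plan is to adapt the proof strategy of \cite[Propositions~3.9, 5.7]{CN22b} from the one-dimensional nonlinear diffusion setting to our multidimensional path-dependent framework. The key step is to recast \(\cP\) (and analogously \(\cM\)) as the solution set of a controlled martingale problem: by Standing Assumption~\ref{SA: mbl and nonempty}(ii) and a Kuratowski--Ryll-Nardzewski measurable selection argument, \(P \in \cP\) if and only if \(P \circ X_0^{-1} = \delta_{x_0}\) and there exist \(\F\)-predictable selectors \(\beta \colon \of 0, T \gs \to \bR^d\) and \(\alpha \colon \of 0, T \gs \to \bS^d_+\) satisfying \((\beta_t, \alpha_t) \in \Theta(t, \cdot)\) for \(\llambda \otimes P\)-a.e.\ \((t, \omega)\), such that for every compactly supported \(\phi \in C^2(\bR^d)\), the process
\[
M^\phi_t := \phi(X_t) - \phi(x_0) - \int_0^t \Big(\langle \beta_s, \nabla \phi(X_s)\rangle + \tfrac12 \on{tr}(\alpha_s \nabla^2 \phi(X_s))\Big) ds
\]
is a local \(P\)-\(\F\)-martingale. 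The analogous characterization holds for \(\cM\) with \(\beta \equiv 0\).

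For \emph{compactness}, I would treat tightness and closedness separately. The linear growth in Condition~\ref{cond: compact, LG, cont}(iii), combined with \(X_0 = x_0\) and a Burkholder--Davis--Gundy/Grönwall estimate applied to the canonical decomposition of \(X\), delivers the uniform moment bound \(\sup_{P \in \cP} E^P[\sup_{t \leq T} \|X_t\|^q] < \infty\) for every \(q \geq 1\); Kolmogorov's continuity criterion then yields tightness of \(\cP\) in \(\mathfrak{P}(\Omega)\). For closedness, let \(P_n \to P\) weakly with \(P_n \in \cP\) and associated selectors \((\beta^n, \alpha^n)\). Condition~\ref{cond: compact, LG, cont}(i)--(ii) implies that \(\Theta(t, \omega)\) is compact as the image of the compact set \(F\) under the continuous map \(f \mapsto (b(f, t, \omega), a(f, t, \omega))\), and Condition~\ref{cond: convex} makes it convex. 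Using a Mazur/Komlós-type convex averaging on \((\beta^n, \alpha^n)\) and exploiting continuity of \(b, a\) to pass to the limit inside the drift and diffusion integrals defining \(M^\phi\), I would extract a limit selector \((\beta, \alpha)\) admissible for \(P\), giving \(P \in \cP\).

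For \emph{convexity}, take \(P = \alpha P_1 + (1-\alpha) P_2\) with \(P_1, P_2 \in \cP\) and \(\alpha \in (0,1)\), and let \(Z^i\) denote the density process of \(P_i\) with respect to \(P\) on \(\F\); note that \(\alpha Z^1 + (1-\alpha) Z^2 \equiv 1\) \(P\)-a.s. Since \(X\) is continuous, its quadratic variation \([X, X]\) is a pathwise object independent of the measure, so \(dC^P/d\llambda\) agrees \(\llambda \otimes P\)-a.e.\ with \(dC^{P_i}/d\llambda\) on \(\{Z^i > 0\}\), and on the overlap \(\{Z^1 Z^2 > 0\}\) the two coincide. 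A Bayes-rule computation yields the corresponding decomposition of the drift, leading to
\[
\big(dB^P/d\llambda, dC^P/d\llambda\big) = \alpha Z^1 \cdot \big(dB^{P_1}/d\llambda, dC^{P_1}/d\llambda\big) + (1-\alpha) Z^2 \cdot \big(dB^{P_2}/d\llambda, dC^{P_2}/d\llambda\big) \quad \llambda \otimes P\text{-a.e.},
\]
which is a convex combination of elements of \(\Theta(t, \omega)\) and hence lies in \(\Theta(t, \omega)\) by Condition~\ref{cond: convex}. The same argument yields convexity of \(\cM\).

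I expect the closedness step to be the main obstacle. The selectors \((\beta^n, \alpha^n)\) are only defined modulo the laws \(P_n\), which themselves vary with \(n\), so aligning the Mazur/Komlós averaging with the weak limit \(P\) and verifying that the averaged integrand produces a local \(P\)-martingale requires careful handling of the predictable structure in the path-dependent setting. The linear growth and continuity hypotheses supply the necessary uniform integrability of \((\beta^n, \alpha^n)\) and the tightness of the drift/diffusion integrals, but the precise execution---especially identifying the characteristics of the limiting semimartingale with values in \(\Theta\)---is the technical heart of the extension beyond \cite{CN22b}.
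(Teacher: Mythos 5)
Your overall architecture (uniform moment bounds via Gr\"onwall for tightness, a separate closedness step, and convexity via the density processes of \(P_i\) with respect to \(P = \alpha P_1 + (1-\alpha)P_2\)) matches the paper's, and the convexity and tightness parts are essentially correct --- the paper handles convexity by citing \cite[Lemma~III.3.38, Theorem~III.3.40]{JS}, which is exactly the Bayes-type decomposition you sketch, and proves tightness with Aldous' criterion rather than Kolmogorov's, an immaterial difference.

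The genuine gap is in the closedness step, precisely where you locate the ``technical heart'' but do not resolve it. A Mazur/Koml\'os convex-averaging argument on the selectors \((\beta^n,\alpha^n)\) is not available here: Koml\'os' lemma requires a single reference measure, and the integrands \((\beta^n,\alpha^n)\) are only defined \(\llambda\otimes P_n\)-a.e.\ for measures \(P_n\) that are in general mutually singular (this is the whole point of the nondominated setting, and is exactly the obstruction the paper flags when it replaces Koml\'os by medial limits elsewhere). There is no common space on which to form convex combinations of the \((\beta^n,\alpha^n)\), so the limit selector \((\beta,\alpha)\) cannot be extracted this way. The paper's mechanism is different in kind: it passes to the \emph{joint laws} \(P^n\circ(X,B^n,C^n)^{-1}\) on the enlarged path space \(\Omega\times\Omega\times C([0,T];\bR^{d\times d})\), proves tightness there (again via linear growth and Aldous), shows the limit's second and third components are a.s.\ Lipschitz, and then establishes that their densities lie in \(\Theta\) by combining the inclusion
\[
m\big(B^n_{t+1/m}-B^n_t,\;C^n_{t+1/m}-C^n_t\big)\in\oconv\Theta([t,t+1/m],X)
\]
(a Bochner mean-value property) with Skorokhod coupling and upper hemicontinuity of \(\omega\mapsto\oconv\Theta([t,t+1/m],\omega)\), letting \(m\to\infty\) and using \(\bigcap_m\oconv\Theta([t,t+1/m],\omega)\subset\Theta(t,\omega)\). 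A final, nontrivial identification step (Stricker's theorem and dual predictable projections onto \(\Phi^{-1}(\F_+)\)) is needed to relate the limiting components to the actual \(P\)-characteristics of \(X\); this step is absent from your outline. Without some substitute for the convex-averaging step --- and the joint-law/upper-hemicontinuity route is the natural one --- the proposed proof of closedness does not go through.
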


In the final part of this section, we provide a condition under that \(\mathfrak{M}_a (\cP) = \cM\). 
This, together with Theorem \ref{thm: compactness}, will provide compactness of \(\mathfrak{M}_a (\cP)\). Compactness is a key ingredient needed for the robust duality theory developed in Section \ref{sec: duality utility} below.

\begin{condition}[Existence of robust market price of risk (MPR)] \label{SA: MPR}
There exists a Borel function \(\theta \colon F \times \of 0, T\gs \to \mathbb{R}^d\) such that 
\(
b = a \theta.
\)
Moreover, for every \(N > 0\), there exists a constant \(C = C_N > 0\) such that 
\[
\sup \big\{ (\langle \theta, a \theta \rangle) (f, t, \omega)  \colon f \in F, t < T_N (\omega) \big\} \leq C 
\]
for all \(\omega \in \Omega\), where \(T_N (\omega) := \inf \{t \in [0, T] \colon \|\omega (t)\| \geq N \} \wedge T\).
\end{condition}

In order to apply the results of \cite{kupper}, it remains to verify \cite[Assumption 2.1]{kupper}, i.e., that
\begin{equation} \label{eq: cond 2.1}
\text{for every \(P \in \cP\) there exists \(Q \in \fM_a(\cP)\) such that \(Q \ll P\).}   
\end{equation}
By means of Condition \ref{SA: MPR}, we establish in Theorem \ref{thm: M = D} below the stronger statement that for every \(P \in \cP\) there exists a measure \(Q \in \fM_e(\cP)\) such that \(Q \sim P\). Theorem \ref{thm: M = D} constitutes the last main result of this section. Its proof is given in Section \ref{sec: pf M = D}.

\begin{theorem} \label{thm: M = D}
Assume that the Conditions \ref{cond: compact, LG, cont} and \ref{SA: MPR} hold.
Then,
\begin{enumerate}
    \item[\textup{(i)}] for every \( P \in \cP\), there exists a measure \(Q \in \cM\) with \(P \sim Q\), and

    \item[\textup{(ii)}] for every \( Q \in \cM\), there exists a measure \(P \in \cP\) with \( Q \sim P\).
\end{enumerate}
\end{theorem}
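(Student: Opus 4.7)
The plan is to apply Girsanov's theorem with the robust MPR $\theta$ from Condition \ref{SA: MPR} as the bridge between the drifted family $\cP$ and the martingale family $\cM$. I focus on part (i); part (ii) is symmetric. Fix $P \in \cP$. Since the correspondence $\Theta$ has a Borel-measurable graph by Standing Assumption \ref{SA: mbl and nonempty}(ii) and $(dB^P/d\llambda, dC^P/d\llambda) \in \Theta$ holds $(\llambda \otimes P)$-a.e., the Jankov--von Neumann selection theorem yields a predictable map $f^P \colon \of 0, T\gs \to F$ which realises the $P$-characteristics of $X$, i.e.\ $b(f^P,\cdot) = dB^P/d\llambda$ and $a(f^P,\cdot) = dC^P/d\llambda$ $(\llambda \otimes P)$-a.e. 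Setting $\theta^P := \theta(f^P,\cdot)$, Condition \ref{SA: MPR} then delivers $b^P = a^P \theta^P$ together with the pathwise bound $\int_0^{T_N(\omega)} \la \theta^P_s, a^P_s \theta^P_s \ra \, ds \le C_N T$.

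With $M^P := X - B^P$ the continuous $P$-local martingale part of $X$, I take as candidate density $Z_t := \mathcal{E}\big(-\int_0^{\cdot}(\theta^P_s)^\top dM^P_s\big)_t$, a positive continuous $P$-local martingale. Novikov's criterion combined with the bound above shows that, for each $N$, the stopped exponential $Z^{T_N}$ is a true $P$-martingale on $[0,T]$. The intermediate goal is to upgrade this to the full martingale property of $Z$ on $[0,T]$; once in hand, $Q := Z_T \cdot P$ is a probability measure equivalent to $P$, and Girsanov's theorem applied to $M^P$ with density $Z$ produces the $Q$-characteristics $(0, \int a^P\, ds)$ for $X$. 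Since $a^P(t,\omega) = a(f^P(t,\omega), t, \omega) \in \{a(f,t,\omega) : f \in F\}$, the pair $(0, dC^Q/d\llambda)$ lies in $\tilde{\Theta}$ $(\llambda \otimes Q)$-a.e., so $Q \in \cM$, while strict positivity of the continuous exponential $Z$ gives $Q \sim P$.

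The main obstacle is the upgrade from the stopped martingale property of $Z^{T_N}$ to the global martingale property of $Z$ on $[0,T]$. The constants $C_N$ in Condition \ref{SA: MPR} may grow arbitrarily with $N$, which rules out simply iterating Novikov or invoking uniform integrability of $(Z_{T_N})_N$ across $N$: for example $E^P[(Z_{T_N})^p] \le \exp((p^2 - p) C_N T / 2)$ is not uniform in $N$. The resolution has to exploit the interplay between the continuous-path structure of $\Omega = C([0,T];\bR^d)$, the linear growth of the characteristics (Condition \ref{cond: compact, LG, cont}(iii)), and the localisation $T_N \uparrow T$ $P$-a.s.: one extends the consistent family of equivalent measures $Q_N := Z_{T_N} \cdot P$ on $\cF_{T_N}$ (consistency follows from optional stopping) to a genuine probability $Q$ on $\cF_T$, using that $P(T_N < T) \downarrow 0$ and that $\bigvee_N \cF_{T_N} = \cF_T$ modulo $P$-null sets. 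Part (ii) is handled in complete analogy: given $Q \in \cM$, select $f^Q$ with $a^Q = a(f^Q, \cdot)$, set $\theta^Q := \theta(f^Q, \cdot)$, and take as density $\mathcal{E}\big(\int_0^{\cdot}(\theta^Q_s)^\top dX_s\big)$; under $Q$ the process $X$ is itself a local martingale, so it plays the role previously played by $M^P$, and the resulting Girsanov-shifted measure $P$ acquires characteristics $(a^Q \theta^Q, a^Q) = (b(f^Q, \cdot), a(f^Q, \cdot)) \in \Theta$, placing $P \in \cP$ with $P \sim Q$.
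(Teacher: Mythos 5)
Your overall architecture matches the paper's: select a predictable parameter process realising the $P$-characteristics, form the stochastic exponential driven by the robust MPR, and apply Girsanov. One minor remark on the selection step: Jankov--von Neumann produces universally (analytically) measurable selectors, not predictable ones; the paper instead applies a measurable implicit function theorem, relative to the predictable $\sigma$-field, to the Carath\'eodory function $(b,a)$ (Lemma~\ref{lem: MPR meas selection}), and this is what actually delivers predictability of the selector $\f$. That point is repairable.

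The genuine gap sits exactly where you flag ``the main obstacle'', and your proposed resolution does not close it. Extending the consistent family $Q_N := Z_{T\wedge T_N}\cdot P$ to a probability on $\cF_T$ that is absolutely continuous with respect to $P$ with density $Z_T$ is \emph{equivalent} to uniform integrability of $(Z_{T\wedge T_N})_N$ under $P$, i.e.\ to the true martingale property of $Z$ --- so the extension argument is circular. Moreover, the condition you invoke, $P(T_N<T)\downarrow 0$, is automatic from continuity of paths on the compact interval $[0,T]$ and is \emph{not} sufficient: strict local martingales (e.g.\ of inverse-Bessel type) satisfy everything you list --- stopped martingale property by Novikov, consistency by optional stopping, $T_N\uparrow T$ $P$-a.s. --- and yet $E^P[Z_T]<1$. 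The quantity that must be controlled is $Q_N(T_N\le T)$, not $P(T_N\le T)$: since $E^P[Z_T]=\lim_N E^P\big[Z_T\1_{\{T_N>T\}}\big]=\lim_N Q_N(T_N>T)$, the martingale property holds iff the \emph{reweighted} measures do not charge the exit events. The paper's Lemma~\ref{lem: girsanov} supplies precisely this: under $Q_N$ the drift density becomes $b^P+a^Pc^P\1_{\of 0,T_N\gs}$ (which vanishes on $\of 0,T_N\gs$ in part (i), and equals $b(\f)=a(\f)\theta(\f)$ there in part (ii)), and this still obeys the linear growth bound of Condition~\ref{cond: compact, LG, cont}(iii); a Gronwall argument then yields $E^{Q_N}\big[\sup_{s\le T\wedge T_N}\|X_s\|^2\big]\le Ce^{CT}$ \emph{uniformly in $N$}, whence $Q_N(T_N\le T)\le Ce^{CT}/N^2\to 0$ by Chebyshev. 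You name linear growth as an ingredient but never use it; without this non-explosion estimate under the candidate measures, the density may fail to integrate to one and both parts of the theorem remain unproved.
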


\begin{corollary} \label{cor: M = D}
Assume that the Conditions \ref{cond: compact, LG, cont} and \ref{SA: MPR} hold. Then,
\begin{enumerate}
    \item[\textup{(i)}] the equalities \(\cM = \mathfrak{M}_e(\cP) = \mathfrak{M}_a(\cP)\) hold, and

    \item[\textup{(ii)}] for every \(P \in \cP\) there exists a measure \(Q \in \fM_e^P\).
\end{enumerate}
In particular, \NFLVR holds.
\end{corollary}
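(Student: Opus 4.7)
The plan is to derive the corollary directly from Theorem \ref{thm: M = D}, interpreting the two directions of that theorem as the two inclusions we need, and then filling in the remaining inclusion \(\fM_a(\cP) \subset \cM\) by a standard characteristics argument.

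\textbf{Step 1: The chain of inclusions.} First I would observe that \(\cM \subset \fM_e(\cP)\): indeed, any \(Q \in \cM\) is a (true) martingale measure for \(X\), as noted in the paragraph following Condition \ref{cond: convex}, and by Theorem \ref{thm: M = D}(ii) there exists \(P \in \cP\) with \(Q \sim P\), so \(Q \in \fM_e(\cP)\). The inclusion \(\fM_e(\cP) \subset \fM_a(\cP)\) is immediate from the definitions. Thus it remains to establish \(\fM_a(\cP) \subset \cM\), and the three sets will coincide.

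\textbf{Step 2: \(\fM_a(\cP) \subset \cM\).} Let \(Q \in \fM_a(\cP)\), so there exists \(P \in \cP\) with \(Q \ll P\) and \(X\) is a local \(Q\)-\(\F\)-martingale. Since \(\{X_0 = x_0\}\) has full \(P\)-measure and hence full \(Q\)-measure, \(Q \circ X_0^{-1} = \delta_{x_0}\). Because \(X\) is a local \(Q\)-martingale on the Wiener space, its first \(Q\)-characteristic vanishes: \(B^Q \equiv 0\). For the second characteristic, the quadratic variation of the continuous process \(X\) is a pathwise object (limit in probability of approximating sums), so any \(P\)-version is also a \(Q\)-version; hence \(C^Q = C^P\) \(Q\)-a.s. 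Since \(P \in \cP\) we have \(dC^P/d\llambda = a(f, \cdot, \cdot)\) for some admissible parameter \((\llambda \otimes P)\)-a.e., and absolute continuity transfers this relation \((\llambda \otimes Q)\)-a.e. Combining, \((dB^Q/d\llambda, dC^Q/d\llambda) \in \{0\}^d \times \{a(f, t, \omega) : f \in F\} = \tilde\Theta(t, \omega)\) \((\llambda \otimes Q)\)-a.e., which shows \(Q \in \cM\).

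\textbf{Step 3: Part (ii) and NFLVR.} Fix \(P \in \cP\). Theorem \ref{thm: M = D}(i) provides \(Q \in \cM\) with \(P \sim Q\); since every element of \(\cM\) is a (true) martingale measure for \(X\), this \(Q\) lies in \(\fM_e^P\), giving (ii). For NFLVR(\(\cP\)), the same \(Q\) satisfies \(Q \in \cM = \fM_a(\cP)\) (by Step 1--2) and \(P \ll Q\), verifying Definition \ref{def: NFLVR}.

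The only nontrivial point is Step 2, and within it the transfer of the second characteristic across an absolutely continuous measure change; this is standard in the continuous-path setting since \(C = [X, X]\) is defined pathwise, but it is the one place where care is required.
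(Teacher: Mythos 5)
Your proposal is correct and takes essentially the same approach as the paper: both directions of Theorem~\ref{thm: M = D} are used in the same roles, and your Step~2 is precisely the characteristics computation that the paper packages as a one-line appeal to Girsanov's theorem (\cite[Theorem~III.3.24]{JS}) to get the identity \(\fM_a(\cP) = \cM \cap \fP_a(\cP)\).
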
 

\begin{proof}
   Notice that, by Girsanov's theorem (\cite[Theorem~III.3.24]{JS}), 
   \[
   \fM_e(\cP) = \cM \cap \fP_e(\cP) \text{ and }  \fM_a(\cP) = \cM \cap \fP_a(\cP).
   \]
   The second part of Theorem \ref{thm: M = D} implies
   \( \cM \subset \fP_e(\cP)\) and therefore \( \fM_e(\cP) = \cM\).
   Hence, we conclude that
   \[
   \fM_a(\cP) \subset \cM = \fM_e(\cP) \subset \fM_a(\cP),
   \]
   which shows the first statement.
   Further, the first part of Theorem \ref{thm: M = D} shows that for every \(P \in \cP\) there exists a measure \(Q \in \fM_e^P\).
\end{proof}

\begin{remark}[Compactness of \(\fM_e(\cP)\)]
    Together with Theorem \ref{thm: compactness}, Corollary \ref{cor: M = D} shows that \(\fM_e(\cP) = \fM_a(\cP)\) is compact. In the single measure case \(\cP = \{P\}\), compactness
    of \(\fM_e(\cP) = \fM_e^P\) is equivalent to  \(\fM_e^P\) being a singleton (given it is nonempty), i.e., the market is complete. Indeed, in case there exist two distinct elements in \(\fM_e^P\), George Lowther \cite{GL} has shown that there exists a local martingale measure \(Q\) absolutely continuous with respect to \(P\) but \(Q \not \in \fM_e^P\). In particular, \(\fM_e^P\) then fails to be closed.
    In the robust case, \(\fM_e(\cP)\) can be compact without the necessity that every physical measure \(P \in \cP\) corresponds to a complete market model.
\end{remark}

It seems to us that the structure of Condition \ref{SA: MPR} is new in the literature on robust dualities. 
In the following we discuss the relation of Condition \ref{SA: MPR} to the classical notion of a MPR and \NFLVR. Further, we relate Condition \ref{SA: MPR} to ellipticity conditions that have previously appeared in the literature, and we show with an example that \(\cM= \mathfrak{M}_a (\cP)\) fails without Condition~\ref{SA: MPR}.

\begin{remark}[Relation to classical MPRs] \label{rem: martingale measures}
  Let us shortly explain why \(\theta\) from Condition \ref{SA: MPR} can be considered as a {\em robust version} of a MPR. Take a real-world measure \(P \in \cP\) and denote the Lebesgue densities of the \(P\)-characteristics of \(X\) by \((b^P, a^P)\). 
    Under Condition \ref{SA: MPR}, in Lemma \ref{lem: MPR meas selection} below we establish the existence of a predictable function \(\f = \f (P) \colon \of 0, T \gs \to F\) such that, for \((\llambda \otimes P)\)-a.a. \((t, \omega) \in \of 0, T\gs\),  
    \[
    (b^P_t (\omega), a^P_t (\omega)) = (b(\f (t, \omega), t, \omega), a (\f (t, \omega), t, \omega)) = (a (\f (t, \omega), t, \omega) \theta (\f (t, \omega), t, \omega), a (\f (t, \omega), t, \omega)).\] 
    This representation shows that \((t, \omega) \mapsto \theta (\f (t, \omega), t, \omega)\) is a MPR in the classical sense for the real-world measure \(P\). 
    As \(P \in \cP\) was arbitrary, this leads to our interpretation of \(\theta\) as a {\em robust version} of a MPR. 
\end{remark}    

\begin{remark}[Relation to \NFLVR]
As observed in the seminal paper \cite{ChuStr}, the existence of a MPR is equivalent to the {\em no unbounded profits with bounded risk (NUPBR)} condition that has been introduced in \cite{KarKar}. In the context of utility maximization, assuming NUPBR is very natural, as it is known to be the minimal a priori assumption needed to proceed with utility optimization, see \cite{KarKar}. 
We stress that Condition~\ref{SA: MPR} is in fact a bit more than only the existence of a MPR. Indeed, we require in addition some mild local boundedness property, which is of technical nature.     
The difference between NUPBR and NFLVR can be captured by the martingale property of a non-negative local martingale (a so-called {\em strict local martingale deflator}). In our setting, we establish such martingale properties with help of the linear growth condition (iii) from Condition \ref{cond: compact, LG, cont}. 
This allows us to verify that for every model \(P \in \cP\) the NFLVR condition holds.
\end{remark}

\begin{remark}[Relation to ellipticity] \label{rem: elliptic}
The results from \cite{kupper,denis} require a uniform ellipticity assumption.
Let us briefly explain how our framework and particularly Condition \ref{SA: MPR} relates to the setting from~\cite{kupper}.
As in \cite{kupper}, we take a compact and convex set \(\Theta \subset \bR^d \times \bS^d_+\). The uniform ellipticity assumption \cite[Assumption~3.1]{kupper} reads as follows:
\begin{equation} \label{eq: kupper ellipticity}
\text{there exists a matrix } \underline{A} \in \bS^d_{++}
\text{ such that } \underline{A} \leq A,
\quad A \in \Theta_2,
\end{equation}
where 
\[ \Theta_2 := \{ A \in \bS^d_+ \colon \exists B \in \bR^d \text{ with } (B,A) \in \Theta \},
\]
and \(\underline{A} \leq A \) means that \(A - \underline{A} \in \bS^d_{+}\).
Notice that \eqref{eq: kupper ellipticity}, together with the boundedness of \(\Theta_2\), is equivalent to the existence of a constant \(\K > 0 \) such that, for all \(\xi \in \mathbb{R}^d\) with \(\|\xi\| = 1\),
\[
\frac{1}{\K} \leq \langle \xi, A \xi \rangle   \leq \K, 
\quad A \in \Theta_2.
\]

To wit, in our notation this can be recovered by setting 
\(F := \Theta\) and defining the functions 
\(b  \colon F \to \bR^d\) and \(a \colon F \to \bS^d_+\)
as the projections on the first and second coordinate, respectively. Then, \eqref{eq: kupper ellipticity} translates to
\begin{equation} \label{eq: kupper ellipticity new}
    \text{there exists a matrix } \underline{A} \in \bS^d_{++} \text{ such that } \underline{A} \leq a(f), \quad f \in F.  
\end{equation}

In this case, we can decompose \(b = a \theta\) with \(\theta = a^{-1} b\). If, in addition, the drift \(b\) is uniformly bounded as in \cite[Assumption 3.1]{kupper}, the (unique) market price of risk \(\theta\) is (globally) bounded and Condition \ref{SA: MPR} holds.
\end{remark}

\begin{remark}[\(\cM = \mathfrak{M}_a (\cP)\) fails without Condition \ref{SA: MPR}] \label{rem: MPR discussion}
The existence of a MPR is necessary for the identity \(\cM = \mathfrak{M}_a (\cP)\) to hold. We now give an example where \(\fM_a (\cP) \subsetneq \cM\) and a MPR exists but it fails to be locally bounded.

Let \(d =1\) and \(x_0 > 0\), and take \(F : = [1, 2] \times [1, 2]\) and 
\[b ((f_1, f_2), t, \omega) := f_1 \cdot |\omega (t)|^{1/2}, \qquad a ((f_1, f_2), t, \omega) := f_2 \cdot | \omega (t) |^{3/2},\] for \(((f_1, f_2), t, \omega) \in F \times \of 0, T\gs\). 
Evidently, Condition \ref{cond: compact, LG, cont} is satisfied.
Moreover, 
\[
\theta ((f_1, f_2), t, \omega) := \frac{f_1}{f_2} \frac{\1_{\{\omega (t) \not = 0\}}}{|\omega (t)|}, \quad ((f_1, f_2), t, \omega) \in F \times \of 0, T\gs, 
\]
is a MPR but it fails to satisfy the local boundedness assumption from Condition \ref{SA: MPR}.
We now prove that \(\fM_a (\cP) \subsetneq \cM\).
Let \(Q\) be a solution measure (i.e., the law of a solution process) for the stochastic differential equation (SDE)
\[
d Y_t = |Y_t|^{3/4} d W_t, \quad Y_0 = x_0, 
\]
where \(W\) is a one-dimensional standard Brownian motion. Such a measure \(Q\) exists by Skorokhod's existence theorem (see, e.g., \cite[Theorem 4, p. 265]{skorokhod}). Furthermore, it is clear that \(Q \in \mathcal{M}\). In the following we show that \(Q \not \in \fM_a (\mathcal{P})\). 
Set \(T_0 := \inf \{t \in [0, T] \colon X_t = 0\}\). As \[
\int_0^1 \frac{x \hspace{0.03cm}dx}{|x|^{3/4 \hspace{0.02cm} \cdot \hspace{0.02cm} 2}} = \int_0^1 \frac{dx}{x^{1/2}} = 2 < \infty,
\]
we deduce from Feller's test for explosion (cf., e.g., \cite[Proposition 2.12]{MU12ECP}) and \cite[Theorem~1.1]{BR16} that \[Q (T_0 < T) > 0.\] 
Set \(\underline{b} (x) := 1/(2x), \overline{b} (x) := 2 /x\) and \(\overline{a} (x) := 2 x^{3/2}\) for \(x > 0\). Furthermore, for suitable Borel functions \(f\colon (0, \infty) \to \bR\) and \(g \colon (0, \infty) \to (0, \infty)\), define 
\[
v (f, g) (x) := \int_1^x \exp \Big( - \int_1^y 2 f (z) dz \Big) \int_1^y \frac{2 \exp (\int_1^\xi 2 f (z) dz )}{g (\xi)} d\xi dy, \quad x > 0.
\]
Notice that 
\begin{equation} \label{eq: FT}
\begin{split}
v (\underline{b}, \overline{a}) (x) &= 4 \big[ x^{1/2} - 1\big] - 2 \log (x) \to \infty, \ \ x \searrow 0, 
\\
v (\overline{b}, \overline{a}) (x) &= \tfrac{2}{21} \big[ x^{-3} + 6 x^{1/2} - 7 \big] \to \infty, \ \ x \nearrow \infty.
\end{split}
\end{equation}
Take a measure \(P \in \mathcal{P}\). By definition, we have \(P\)-a.s. for \(\llambda\)-a.e. \(t < T_0\)
\[
d C^P_t / d \llambda \leq \overline{a} (X_t), \qquad d C^P_t / d \llambda \cdot \underline{b} (X_t) \leq d B^P_t /d \llambda \leq d C^P_t / d \llambda \cdot \overline{b} (X_t).
\]
Hence, taking \eqref{eq: FT} into account, we deduce from \cite[Theorem 5.2]{C20} that \(P (T_0 = \infty) = 1\).
In summary, \(Q (T_0 < T) > 0\) and \(P (T_0 < T) = 0\). As \(P\) was arbitrary, we conclude that \(Q \not \in \fM_a (\cP)\).
\end{remark}

\subsection{Duality theory for robust utility maximization} \label{sec: duality utility}
Let \( U \colon (0, \infty) \to [-\infty, \infty) \) be a utility function, i.e., a concave and non-decreasing function. We define \( U(0) := \lim_{x \searrow 0} U(x) \), and consider the conjugate function
\begin{align*}
    V(y) := \begin{cases} \sup_{x \geq 0} [U(x)-xy],& y > 0, \\
     \lim_{y \searrow 0} V(y),& y = 0, \\
   \infty,&y < 0. \end{cases} 
\end{align*}

We set, for \(x, y > 0\),
\begin{align*}
    \cC(x) & := x \hspace{0.05cm} \cC ,\qquad
    \cD(y)  := y\hspace{0.05cm} \cD,\\
\end{align*}
and
\begin{align*}
    u(x) & :=  \sup_{g \in \cC(x)} \inf_{P \in \cP} E^P \big [U(g) \big], \qquad  v(y) := \inf_{Q \in \cD(y)} \sup_{P \in \cP} E^P\Big[V\Big(\frac{dQ}{dP}\Big) \Big], 
\end{align*} 
with the convention \( \frac{dQ}{dP} := - \infty \) in case \(Q\) is not absolutely continuous with respect to \(P\).

The separating dualities in Theorem \ref{thm: dualities} allow us to establish a conjugacy relation between \(u\) and \(v\) for utility functions bounded from below. 
This is in the spirit of \cite[Theorem 2.1]{kramkov}, as it only requires a no-arbitrage assumption in the sense of Condition \ref{SA: MPR} and finiteness of the value function \(u\).

\begin{theorem} \label{thm: conjugacy bounded}
Assume that the Conditions \ref{cond: compact, LG, cont}, \ref{cond: convex} and \ref{SA: MPR} hold.
Let \(U\) be a utility function with \(U(0) > - \infty\) and assume that \( u(x_0) < \infty\) for some \( x_0 > 0\). Then,
\begin{enumerate}
\item[\textup{(i)}] \(u\) is nondecreasing, concave and real-valued on \( (0, \infty)\),
\item[\textup{(ii)}] \( v\) is nonincreasing, convex, and proper,
\item[\textup{(iii)}] the functions \(u\) and \(v\) are conjugates, i.e., 
\[
u(x) = \inf_{y > 0} \big[v(y) + xy \big], \ \ x > 0, \quad v(y) = \sup_{x > 0} \big[u(x) - xy \big], \ \ y > 0,
\]
\item[\textup{(iv)}] for every \(x > 0\) we have
\[
u(x) = \sup_{g \in \cC(x)} \inf_{P \in \cP} E^P \big[U(g) \big] = \sup_{g \in \cC(x)\cap C_b} \inf_{P\in\cP} E^P \big[U(g) \big].
\]
\end{enumerate}
\end{theorem}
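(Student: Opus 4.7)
The plan is to cast the theorem as a consequence of the abstract robust utility duality developed in \cite{kupper}. All the required structural inputs have been prepared in Section~\ref{sec: separating duality}, so the proof reduces to assembling them and invoking the abstract result; the delicate point is matching its hypotheses in our absolutely-continuous, non-elliptic formulation.

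\textbf{Step 1: Structural inputs.} Under Conditions \ref{cond: compact, LG, cont} and \ref{cond: convex}, Theorem \ref{thm: compactness} yields that $\cP$ and $\cM$ are convex and weakly compact. Adding Condition \ref{SA: MPR}, Corollary \ref{cor: M = D} identifies $\cD = \fM_a(\cP) = \cM$, so $\cD$ is convex and weakly compact as well and \NFLVR holds. Theorem \ref{thm: dualities} then delivers both sides of the polar duality between $\cC \cap C_b(\Omega;\bR)$ and $\cD$, which is exactly the bipolar hypothesis required in \cite{kupper}.

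\textbf{Step 2: Abstract duality.} Since $U(0) > -\infty$, the extension $U \colon [0,\infty) \to [U(0),\infty)$ is bounded from below, so $E^P[U(g)]$ is well-defined in $(-\infty,\infty]$ for every $g \in \cC(x)$ and $P \in \cP$, and both $u$ and $v$ are unambiguously defined. Combined with the finiteness assumption $u(x_0) < \infty$, the abstract duality result from \cite{kupper} now applies and yields directly the conjugacy (iii) together with the continuous-bounded approximation (iv); the latter is the precise content of restricting the primal problem to $\cC(x) \cap C_b(\Omega;\bR)$ via the polar duality established in Step 1.

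\textbf{Step 3: Qualitative properties and main obstacle.} For (i): $u$ is nondecreasing since $\cC(x) \subset \cC(x')$ for $x \leq x'$, and concave because $\cC$ is convex, $U$ is concave, $E^P$ is linear, and infima of concave functions remain concave. Real-valuedness follows from $u(x) \geq U(x) > -\infty$ (the constant claim $x$ lies in $\cC(x)$) together with monotonicity, concavity and $u(x_0) < \infty$, which force a global affine majorant and hence $u < \infty$ on all of $(0,\infty)$. Claim (ii) is then immediate from (iii) by standard Legendre--Fenchel theory: as the conjugate of a real-valued, nondecreasing, concave function, $v$ is automatically nonincreasing, convex and proper. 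The main obstacle lies in Step 2: the exposition of \cite{kupper} treats the L\'evy case with equivalent martingale measures under uniform ellipticity, whereas our framework is absolutely continuous, possibly incomplete and path-dependent. Verifying that the abstract duality machinery transfers faithfully, and that the universally measurable description of $\cC$ fits the topological framework used in \cite{kupper}, is the key bookkeeping challenge.
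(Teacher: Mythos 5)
Your proposal is correct and follows essentially the same route as the paper: verify the hypotheses of the abstract duality theorem of \cite{kupper} via Theorem \ref{thm: dualities} (separating dualities), Theorem \ref{thm: compactness} together with Corollary \ref{cor: M = D} (convexity and compactness of \(\cP\) and \(\cD=\cM\), and \NFLVR), and then invoke \cite[Theorem 2.10]{kupper}; your Step 3 merely spells out properties (i)--(ii) that the paper delegates to the cited theorem. The only point worth making explicit is that \cite[Assumption 2.1]{kupper}, i.e.\ condition \eqref{eq: cond 2.1} that every \(P\in\cP\) admits \(Q\in\fM_a(\cP)\) with \(Q\ll P\), must also be checked; it follows from Corollary \ref{cor: M = D}(ii), which yields an equivalent \(Q\in\fM_e^P\).
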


\begin{proof}
Corollary \ref{cor: M = D} implies that \NFLVR holds. Hence, we deduce from Theorem \ref{thm: dualities} that \(\cC\) and \(\cD\) are in duality and that \(\cD = \fM_a(\cP)\).
Applying Corollary \ref{cor: M = D} once more, Theorem \ref{thm: compactness} shows that the sets 
\(\cP\) and \(\cD = \cM\) are convex and compact.
Using Corollary \ref{cor: M = D} a third time proves that \eqref{eq: cond 2.1} holds.
Thus, the claim follows from \cite[Theorem 2.10]{kupper}.
\end{proof}

We now aim at the extension of Theorem \ref{thm: conjugacy bounded} regarding the existence of an optimal portfolio and to utilities unbounded from below.
The paper \cite{kupper} provides abstract conditions that guarantee both, the existence of an optimal (generalized) portfolio and an extension of Theorem \ref{thm: conjugacy bounded}.
In order to give verifiable parameterized condition in terms of \(b\) and \(a\), we now focus on specific utilities.
That is, \(U\) is assumed to be one of the following
\[
U(x) = \log(x), \quad U(x) = -e^{-\lambda x} \text{ for } \lambda > 0, \quad U(x) = \frac{x^p}{p} \text{ for } p \in (-\infty,0) \cup (0,1). 
\]

The next two conditions are used when we consider utility functions unbounded from above, i.e., in case of logarithmic and power utility with parameter \(p \in (0,1)\). 
We will only require that \emph{one} of them holds.
They provide sufficient integrability of the utility of portfolios and thus entail, in particular, finiteness of the value function \(u\).

\begin{condition} \label{cond: MPR bounded} 
There exists a robust market price of risk \(\theta\) as in Condition \ref{SA: MPR} and
the function \(\langle \theta, a \theta \rangle\) is uniformly bounded.
\end{condition}

\begin{condition}[Uniform ellipticity and boundedness of volatility] \label{cond: unif ellipticity vola}
There exists a constant \(\K > 0\) such that, for all \(\xi \in \mathbb{R}^d\) with \(\|\xi\| = 1\),
\[
\frac{1}{\K} \leq \langle \xi, a (f, t, \omega) \xi \rangle \leq \K
\]
for all \((f,t, \omega) \in F \times \of 0, T\gs\).
\end{condition}

\begin{remark}
From a technical point of view, we require one of the Conditions~\ref{cond: MPR bounded} and~\ref{cond: unif ellipticity vola} to obtain that certain stochastic exponentials have polynomial moments, which are needed to treat unbounded utility functions.
Such moments are readily established under the global boundedness assumption from Condition \ref{cond: MPR bounded} but less obvious under Condition \ref{cond: unif ellipticity vola}.
We give a precise statement in Proposition \ref{prop: integrability of stochastic exponential} below that we believe to be of independent interest.

We emphasise that the scope of the assumptions is different. On one hand, Condition~\ref{cond: MPR bounded} is a boundedness condition but it covers incomplete market models. On the other hand, Condition~\ref{cond: unif ellipticity vola} allows the drift coefficient \(b\) to be unbounded but it enforces a robust version of market completeness. 

Both conditions are satisfied for the L\'evy framework from \cite[Section~3]{kupper} that was discussed in Remark~\ref{rem: elliptic}.
Beyond the L\'evy case, Condition \ref{cond: unif ellipticity vola} holds e.g. for real-valued nonlinear affine processes as studied in \cite{fadina2019affine}. In general, ellipticity assumptions (not necessarily uniform) are quite standard in the literature on linear multidimensional diffusions (see, e.g., \cite{SV}).
\end{remark}

Next, we explain a suitable notion of generalized portfolios. 

\begin{remark}[Existence of an optimal portfolio] 
To show the existence of an optimal portfolio \(g^* \in \cC(x)\) with
\[
u(x) = \sup_{g \in \cC(x)} \inf_{P \in \cP} E^P \big[U(g) \big] = \inf_{P \in \cP} E^P \big[U(g^*) \big],
\]
we cannot rely on classical arguments such as Koml{\'o}s' lemma.
In the realm of robust finance, medial limits have been a suitable substitute
and allow us to construct a (generalized) optimal portfolio as the medial limit of a sequence of near-optimal portfolios.

Here,
a \emph{medial limit} is a positive linear functional \( \limmed \hspace{-0.075cm} \colon \ell^\infty \to \bR\)
satisfying 
\[
\liminf_{n \to \infty} x_n \leq \underset{n \to \infty}\limmed x_n \leq \limsup_{n \to \infty} x_n
\]
for every bounded sequence \( (x_n)_{n \in \bN} \in \ell^\infty\), and for every bounded sequence \( (X_n)_{n\in\bN} \)
of universally measurable functions \( X_n \colon \Omega \to \bR \), the medial limit \( \limmed_{n \to \infty} X_n \)
is again universally measurable with
\[
E^P\Big[\underset{n \to \infty}\limmed X_n \Big] = \underset{n \to \infty}\limmed E^P[X_n],
\]
for every \(P \in \mathfrak{P}(\Omega)\). Note that a medial limit extends naturally to sequences with values in \([-\infty, \infty]\).

We refer to \cite{bartl, kupper, nutz_integrals, nutz_superreplication} for recent applications of medial limits.
Let us also mention that the existence of medial limits is guaranteed when working under ZFC together with Martin's axiom, see \cite{meyer, normann}. 
\end{remark}

\begin{condition} \label{cond: medial limit}
Medial limits exist.
\end{condition}

Provided Condition \ref{cond: medial limit} is in force, we set
\[
    \overline{\cC}  := \big\{ g \colon \Omega \to [0,\infty] \colon g =  \underset{n \to \infty}\limmed\hspace{0.025cm} g_n \text{ with } (g_n)_{n \in \bN} \subset \cC \big\},
\]
and, for \(x, y > 0\), we define
\begin{align*}
   \overline{\cC}(x)  := x \hspace{0.05cm} \overline{\cC} ,\qquad
   \overline{u}(x) := \sup_{g \in \overline{\cC}(x)} \inf_{P \in \cP} E^P \big[U(g) \big] .
\end{align*}

We are in the position to state the main results of this paper. The first of the following two theorems deals with utility functions which are bounded from below and the second deals with utility functions which are unbounded from below. 
Note that in case the utility is unbounded from below we have to rely on generalized portfolios.

The following theorems extend \cite[Theorems 3.4, 3.5]{kupper} from a L\'evy setting to a general nonlinear semimartingale framework. 

\begin{theorem} \label{thm: main positive power exponential}
Assume that the Conditions \ref{cond: compact, LG, cont}, \ref{cond: convex} and  \ref{SA: MPR} hold.
Let \(U\) be either a power utility \(U(x) = x^p / p\) with exponent \( p \in (0,1)\),
or an exponential utility \(U (x) = - e^{- \lambda x}\) with parameter \( \lambda > 0\). In case \(U\) is a power utility assume also that either Condition \ref{cond: MPR bounded} or Condition \ref{cond: unif ellipticity vola} holds. Then,
\begin{enumerate}
\item[\textup{(i)}] \(u\) is nondecreasing, concave and real-valued on \( (0, \infty)\),
\item[\textup{(ii)}] \( v\) is nonincreasing, convex, and proper,
\item[\textup{(iii)}] the functions \(u\) and \(v\) are conjugates, i.e., 
\[
u(x) = \inf_{y > 0} \big[v(y) + xy \big], \ \ x > 0, \quad v(y) = \sup_{x > 0} \big[u(x) - xy \big], \ \ y > 0,
\]
\item[\textup{(iv)}] for every \(x > 0\) we have
\[
u(x) = \sup_{g \in \cC(x)} \inf_{P \in \cP} E^P \big[U(g) \big] = \sup_{g \in \cC(x)\cap C_b} \inf_{P\in\cP} E^P \big[U(g) \big].
\]
\end{enumerate}
If additionally Condition \ref{cond: medial limit} holds, then
\begin{enumerate}
\item[\textup{(v)}]  for every \( x > 0 \) we have \( u(x) = \overline{u}(x)\),
\item[\textup{(vi)}] for every \( x > 0\) there exists \( g^* \in \overline{\cC}(x)\) such that
\[
\overline{u}(x) = \sup_{g \in \overline{\cC}(x)} \inf_{P \in \cP} E^P \big [U(g) \big] = \inf_{P\in\cP} E^P \big [U(g^*) \big].
\]
\end{enumerate}
\end{theorem}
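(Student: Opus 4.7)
The plan is to follow the template of Theorem~\ref{thm: conjugacy bounded} and invoke the abstract duality framework of \cite{kupper}, with the main additional work being an integrability estimate needed to handle the unbounded power utility. Condition~\ref{SA: MPR} combined with Corollary~\ref{cor: M = D} yields \NFLVR{} and the identification $\cD = \fM_a(\cP) = \cM$; hence Theorem~\ref{thm: dualities} provides the separating dualities \eqref{eq: first duality}--\eqref{eq: second duality}, and Theorem~\ref{thm: compactness} (combined with $\cD = \cM$) makes both $\cP$ and $\cD$ convex and compact. These are precisely the abstract hypotheses underlying the proof of Theorem~\ref{thm: conjugacy bounded}, so parts~(i)--(iv) will follow from \cite[Theorem~2.10]{kupper} in the same way, provided that $u(x) < \infty$ for some $x > 0$.

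Second, I would establish this finiteness. For exponential utility it is immediate because $U \le 0$ forces $u \le 0$. For power utility with $p \in (0,1)$ the Legendre conjugate is $V(y) = -\tfrac{1-p}{p}\, y^{p/(p-1)}$, and a H\"older inequality gives, for any $g \in \cC(x)$, $P \in \cP$ and $Q \in \fM_e^P$ (the latter being nonempty by Corollary~\ref{cor: M = D}),
\[ E^P\bigl[g^p\bigr] \;\le\; \bigl(E^Q[g]\bigr)^p\, \bigl(E^Q\bigl[(dP/dQ)^{1/(1-p)}\bigr]\bigr)^{1-p} \;\le\; x^p\, \bigl(E^Q\bigl[(dP/dQ)^{1/(1-p)}\bigr]\bigr)^{1-p}, \]
where $E^Q[g] \le x$ since $Q \in \cD$. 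The density $dP/dQ$ can be written via the measurable selection of Remark~\ref{rem: martingale measures} as a stochastic exponential driven by $\theta$, and the uniform control of its $1/(1-p)$-th moment is exactly what Proposition~\ref{prop: integrability of stochastic exponential} delivers under either Condition~\ref{cond: MPR bounded} or Condition~\ref{cond: unif ellipticity vola}. This secures $u(x) < \infty$ and completes~(i)--(iv). The main technical obstacle of the whole argument sits at this step: without one of the two additional conditions the stochastic exponentials entering $dP/dQ$ need not possess the required polynomial moments, and the entire chain collapses; it is Proposition~\ref{prop: integrability of stochastic exponential} that carries the weight here.

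For~(v)--(vi) I would follow the medial-limit construction used in the proof of \cite[Theorem~3.5]{kupper}. Given $x > 0$, pick a maximizing sequence $(g_n) \subset \cC(x)$ for $u(x)$ and define $g^* := \underset{n \to \infty}{\limmed}\, g_n$, which belongs to $\overline{\cC}(x)$ by construction. Using concavity of $U$ together with the uniform upper bound produced in Step~2 (which ensures uniform integrability of $(U(g_n))_{n \in \bN}$ under every $P \in \cP$ in the power case, and is trivial in the exponential case), a Fatou-type estimate for medial limits together with the compactness of $\cP$ and a minimax argument in the style of \cite{kupper} yields
\[ \inf_{P \in \cP} E^P\bigl[U(g^*)\bigr] \;\ge\; \liminf_{n \to \infty} \inf_{P \in \cP} E^P\bigl[U(g_n)\bigr] \;=\; u(x), \]
which, combined with the trivial inequality $\ou(x) \ge u(x)$, delivers both $\overline{u}(x) = u(x)$ and the attainment of $\overline{u}(x)$ at $g^*$.
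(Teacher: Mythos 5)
Your approach tracks the paper's proof closely: both reduce (i)--(iv) (and, after Condition~\ref{cond: medial limit}, (v)--(vi)) to \cite[Theorem~2.10]{kupper} after checking the abstract hypotheses via Corollary~\ref{cor: M = D}, Theorem~\ref{thm: dualities} and Theorem~\ref{thm: compactness}, and both rest on polynomial moments of $dP/dQ$ to handle the unbounded-above power case. Two small points need repair, however.

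First, Proposition~\ref{prop: integrability of stochastic exponential} is stated and proved only under Condition~\ref{cond: unif ellipticity vola}; it does not cover Condition~\ref{cond: MPR bounded}. In the bounded-MPR case the required moment bound \eqref{eq: moment bound density} is obtained by a different, simpler route (the integral $\int_0^T\langle\theta(\f_s),a(\f_s)\theta(\f_s)\rangle\,ds$ is $Q$-a.s.\ bounded and one applies \cite[Theorem~1]{GriMa03}); both cases are packaged in Lemma~\ref{lem: moments density}, which is what you should cite when invoking ``either condition''. Second, the H\"older estimate you write with exponent $1/(1-p)$ yields $\sup_{n} E^P[g_n^p]<\infty$, which does establish $u(x)<\infty$ but does not by itself give the uniform integrability of $(U(g_n))_n = (g_n^p/p)_n$ that you invoke for (v)--(vi): bounded $p$-th moments do not imply uniform integrability of $p$-th powers. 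You need the same computation with a strictly larger exponent $\epsilon\in(p,1)$ -- i.e.\ control $\sup_n E^P[g_n^\epsilon]$ and conclude by de la Vall\'ee Poussin -- which is exactly what Lemma~\ref{lem: bound sequence} supplies and what Lemma~\ref{lem: power auxiliary}(ii) rests on. With those two fixes your argument matches the paper's; the medial-limit/Fatou/minimax reasoning you sketch for (v)--(vi) is already absorbed into the application of \cite[Theorem~2.10]{kupper} once Lemma~\ref{lem: power auxiliary} is in hand.
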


\begin{theorem} \label{thm: main negative power log}
Assume that the Conditions \ref{cond: compact, LG, cont}, \ref{cond: convex}, \ref{SA: MPR} and \ref{cond: medial limit} hold.
Let \(U\) be either a power utility \(U (x) = x^p / p\) with exponent \( p \in (-\infty,0)\),
or the log utility \(U(x) = \log (x)\). In case \(U\) is the log utility assume also that either Condition \ref{cond: MPR bounded} or Condition \ref{cond: unif ellipticity vola} holds.
Then,
\begin{enumerate}
\item[\textup{(i)}] \(\overline{u}\) is nondecreasing, concave and real-valued on \( (0, \infty)\),
\item[\textup{(ii)}] \( v\) is nonincreasing, convex, and real-valued on \( (0, \infty)\),
\item[\textup{(iii)}] the functions \(\overline{u}\) and \(v\) are conjugates, i.e., 
\[
\overline{u}(x) = \inf_{y > 0} \big [v(y) + xy \big], \ \ x > 0, \quad v(y) = \sup_{x > 0} \big[\overline{u}(x) - xy \big], \ \ y > 0,
\]
\item[\textup{(iv)}] for every \(x > 0\) there exists \(g^* \in \overline{\cC}(x)\) with
\[
\overline{u}(x) = \sup_{g \in \overline{\cC}(x)} \inf_{P \in \cP} E^P \big[U(g) \big] = \inf_{P\in\cP} E^P\big[U(g^*) \big].
\]
\end{enumerate}
\end{theorem}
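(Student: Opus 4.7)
The plan is to verify the abstract hypotheses underlying the duality machinery of \cite{kupper} in the form that treats utilities unbounded from below, and then to invoke that machinery in our setting. The structural ingredients are already at hand: Corollary \ref{cor: M = D} yields \NFLVR together with the identification $\cD = \fM_a(\cP) = \cM$ and validates \eqref{eq: cond 2.1}; Theorem \ref{thm: dualities} provides the two separating dualities between $\cC$ and $\cD$; and Theorem \ref{thm: compactness}, combined with the identification $\cD = \cM$, shows that both $\cP$ and $\cD$ are convex and compact subsets of $\mathfrak{P}(\Omega)$. Thus the topological/convex-analytic prerequisites are satisfied exactly as in the proof of Theorem \ref{thm: conjugacy bounded}.

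The remaining task, and the crux of the argument, is to supply the integrability estimates that allow the abstract duality to run when $U$ is unbounded from below, i.e., to verify that $v(y) < \infty$ for every $y > 0$. By the construction behind Theorem \ref{thm: M = D}, for every $Q \in \cD = \cM$ and every $P \in \cP$ with $P \sim Q$ the density $Z := dQ/dP$ is a stochastic exponential associated with the robust MPR $\theta$ from Condition \ref{SA: MPR}. For the power utility with $p \in (-\infty,0)$, the conjugate function is, up to sign and additive constant, $y \mapsto y^q$ with conjugate exponent $q = p/(p-1) \in (0,1)$; since $Z$ is a nonnegative local $P$-martingale and hence a $P$-supermartingale, Jensen's inequality immediately yields an $L^q(P)$-bound on $Z$ without any further assumption, which is why no extra hypothesis is imposed in this case. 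For the log utility we have $V(y) = -\log(y) - 1$, so finiteness of $E^P[V(Z)]$ reduces to finiteness of the logarithmic moment $E^P[-\log Z]$, which on using $-\log Z = (\theta \cdot W^P)_T + \tfrac12 \int_0^T \langle \theta, a\theta\rangle\, ds$ is exactly the delicate quantity controlled by Proposition \ref{prop: integrability of stochastic exponential} under either Condition \ref{cond: MPR bounded} or Condition \ref{cond: unif ellipticity vola}.

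Given these moment estimates, parts (i)--(iii) follow from the abstract duality of \cite{kupper}, with $u$ replaced by the relaxation $\overline{u}$; this replacement is unavoidable here because the unboundedness from below of $U$ prevents attainment inside $\cC(x)$ itself and forces one to work in the medial-limit closure $\overline{\cC}(x)$ (for which Condition \ref{cond: medial limit} is required). For part (iv), given $x > 0$ pick a maximizing sequence $(g_n) \subset \cC(x)$ for $\overline{u}(x)$ and set $g^* := \limmed_{n \to \infty} g_n$. The linearity and positivity of the medial limit together with the exchange identity $E^P[\limmed g_n] = \limmed E^P[g_n]$ valid for every $P \in \cP$ yield $g^* \in \overline{\cC}(x)$; combining with Fatou-type bounds provided by concavity of $U$ and a minimax argument on the compact $\cP$, one obtains $\inf_{P \in \cP} E^P[U(g^*)] \geq \overline{u}(x)$, while the reverse inequality is immediate from $g^* \in \overline{\cC}(x)$.

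I expect the main obstacle to be precisely the log-utility branch: under neither global boundedness of $\theta$ nor uniform ellipticity of $a$, the densities $Z$ may fail to possess the necessary logarithmic moment, and the corresponding stochastic exponentials need not even be true $P$-martingales in full generality. The entire force of Conditions \ref{cond: MPR bounded} and \ref{cond: unif ellipticity vola} is absorbed at this one step via Proposition \ref{prop: integrability of stochastic exponential}; once this single quantitative estimate is secured, the rest of the proof is a transcription of the bounded-from-below argument, adapted to $\overline{u}$ and $\overline{\cC}$ through the medial-limit formalism.
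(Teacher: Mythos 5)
Your overall strategy matches the paper's: establish the structural prerequisites (NFLVR via Corollary \ref{cor: M = D}, the separating dualities via Theorem \ref{thm: dualities}, compactness and convexity via Theorem \ref{thm: compactness}, and \eqref{eq: cond 2.1}), then invoke the abstract duality machinery of \cite{kupper}. The paper's proof does exactly this and cites \cite[Theorem~2.16]{kupper}, supplementing it with Lemma~\ref{lem: log auxiliary} in the log case. Your Jensen observation explaining why the power utility with $p<0$ needs no extra integrability hypothesis (concavity of $x\mapsto x^q$ for $q=p/(p-1)\in(0,1)$) is correct and correctly matches the structure of the theorem statement. Your identification of the log case as the place where Conditions~\ref{cond: MPR bounded}/\ref{cond: unif ellipticity vola} enter is also right, with the minor inaccuracy that under Condition~\ref{cond: MPR bounded} the paper's Lemma~\ref{lem: moments density} cites \cite[Theorem~1]{GriMa03} directly rather than going through Proposition~\ref{prop: integrability of stochastic exponential}; only Condition~\ref{cond: unif ellipticity vola} routes through that proposition.

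There is, however, a genuine gap in your treatment of part~(iv). You define $g^*$ as a medial limit of a (near-)maximizing sequence $(g_n)\subset\cC(x)$ and then assert that ``Fatou-type bounds provided by concavity of $U$ and a minimax argument on the compact $\cP$'' yield $\inf_{P\in\cP}E^P[U(g^*)]\geq\overline{u}(x)$. This is precisely where the argument cannot be waved through. The medial-limit exchange identity $E^P[\limmed X_n]=\limmed E^P[X_n]$ holds for \emph{bounded} sequences, and $U$ is unbounded from above in the log case, so some control of the positive parts $U(g_n)^+$ is required to pass the expectation through the limit. The paper handles this via Lemma~\ref{lem: log auxiliary}(ii): the uniform integrability, for each $P\in\cP$, of $\max\{U(g_n+1/n),0\}$ over all $(g_n)\subset\cC(x)$. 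This in turn rests on Lemma~\ref{lem: bound sequence}, which uses H\"older's inequality together with the polynomial moments of $dP/dQ$ supplied by Lemma~\ref{lem: moments density}. Your proposal never establishes (or even mentions) this uniform-integrability ingredient, and concavity of $U$ alone does not produce it. Until that step is filled in, the existence claim in part~(iv) is unjustified; the paper's proof works precisely because \cite[Theorem~2.16]{kupper} packages this requirement into a verifiable hypothesis that Lemma~\ref{lem: log auxiliary}(ii) checks.
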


The proofs of the Theorems \ref{thm: main positive power exponential} and \ref{thm: main negative power log} are given in Section \ref{sec: proof duality utility} below. In the next section we
discuss a variety of frameworks to which our results apply.

\section{Examples} \label{sec: examples}
In this section we mention three examples of stochastic models that are covered by our framework. We stress that it includes many previously studied frameworks but also some new ones which are of interest for future investigations.

\subsection{Nonlinear diffusions} \label{sec: nonlinear diffusions}
Let \(b' \colon F \times \bR^d \to \bR^d\) and \(a' \colon F \times \bR^d \to \mathbb{S}^d_+\) be two Borel functions and set, for 
\((f, t, \omega) \in F \times \of 0, T\gs\),
\[
b (f, t, \omega) = b' (f, \omega (t)), \quad a (f, t, \omega) = a' (f, \omega (t)).
\]
This setting corresponds to a \emph{nonlinear diffusion framework}. In particular, in this case the correspondences \( \Theta\) and \(\tilde{\Theta} \) have \emph{Markovian structure}, i.e., the sets \( \Theta(t, \omega), \tilde{\Theta}(t,\omega) \) depend on \((t, \omega)\) only through the value \( \omega(t) \).
To provide some further insights, we require additional notation.
For each \( x \in \bR^d \), we set 
 \[
\cR (x) := \big\{ P \in \fPas \colon P \circ X_0^{-1} = \delta_{x},\ (\llambda \otimes P)\text{-a.e. } (dB^{P} /d\llambda, dC^{P}/d\llambda) \in \Theta   \big\},
\]
and further define the sublinear operator \( \cE^x \) on the convex cone of upper semianalytic functions \(\psi \colon \Omega \to [- \infty, \infty] \) by 
\[ \cE^x(\psi) := \sup_{P \in \cR(x)} E^P \big[ \psi \big]. \] 
For every \( x \in \bR^d \), we have by construction that \( \cE^x(\psi(X_0)) = \psi(x) \) for every upper semianalytic function \( \psi \colon \bR^d \to \bR \).

Denote, for \(t \in [0, T]\), the shift operator \(\theta_t \colon \Omega \to \Omega\) by \(\theta_t (\omega) := \omega((\cdot + t) \wedge T)\) for all \(\omega \in \Omega\). As in \cite[Proposition~ 2.8]{CN22b}, we obtain the following result.

\begin{proposition} \label{prop: markov property}
For every upper semianalytic function \( \psi \colon \Omega \to \bR \), the equality
\[
\cE^x( \psi \circ \theta_t) = \cE^x ( \cE^{X_t} (\psi))
\]
holds for every \(x \in \bR^d \), and every \( t \in [0, T] \).
\end{proposition}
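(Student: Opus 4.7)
The plan is to establish the two inequalities comprising the claimed equality by verifying two stability properties of the family \(\{\cR(x) : x \in \bR^d\}\)---stability under conditioning and under measurable pasting---which together form the standard dynamic programming machinery of \cite{ElKa15,NVH}, specialized to the multidimensional path-space setting exactly as in \cite[Proposition~2.8]{CN22b}. The crucial simplification compared to the fully path-dependent case is the Markovian structure: since \(\Theta(s,\omega)\) and \(\tilde\Theta(s,\omega)\) depend on \((s,\omega)\) only through the value \(\omega(s)\), the shifted uncertainty set depends only on the endpoint \(X_t\) rather than on the entire history, which is precisely what turns a generic DPP into the form stated.

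As a preliminary step, I would verify that \(x \mapsto \cE^x(\psi)\) is upper semianalytic, so that \(\cE^{X_t}(\psi)\) is a well-defined universally measurable function on \(\Omega\). Using Standing Assumption~\ref{SA: mbl and nonempty}(ii) together with the characterization of semimartingale laws via their Lebesgue densities, the set \(\{(x,P) \in \bR^d \times \fP(\Omega) : P \in \cR(x)\}\) is analytic. Upper semianalyticity of \(\psi\) then transfers to \(x \mapsto \cE^x(\psi)\) by standard projection and integration results for upper semianalytic functions.

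For the upper bound \(\cE^x(\psi\circ\theta_t) \leq \cE^x(\cE^{X_t}(\psi))\), I would fix \(P \in \cR(x)\) and take a regular conditional probability \((P^\omega)_{\omega \in \Omega}\) of \(P\) given \(\cF_t\). Preservation of semimartingale characteristics under conditioning, combined with the Markovian structure of \(\Theta\), yields \(P^\omega \circ \theta_t^{-1} \in \cR(\omega(t))\) for \(P\)-a.e. \(\omega\). Consequently,
\begin{align*}
E^P[\psi \circ \theta_t]
= \int E^{P^\omega \circ \theta_t^{-1}}[\psi]\,dP(\omega)
\leq \int \cE^{\omega(t)}(\psi)\,dP(\omega)
= E^P\big[\cE^{X_t}(\psi)\big]
\leq \cE^x\big(\cE^{X_t}(\psi)\big),
\end{align*}
and taking the supremum over \(P \in \cR(x)\) gives the claim. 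For the reverse inequality, fix \(\varepsilon > 0\). By the Jankov--von Neumann selection theorem applied to the analytic graph above and to the upper semianalytic function \((x,P) \mapsto E^P[\psi]\), there exists a universally measurable selector \(x \mapsto Q_x\) with \(Q_x \in \cR(x)\) and \(E^{Q_x}[\psi] \geq \cE^x(\psi) - \varepsilon\). For any \(P \in \cR(x)\), I would define the pasted measure \(\bar P\) which agrees with \(P\) on \(\cF_t\) and, conditional on \(\cF_t\), distributes the post-\(t\) path according to \(Q_{X_t(\omega)}\) translated to start at \(X_t(\omega)\). Markovianity of \(\Theta\) and the behaviour of characteristics under concatenation give \(\bar P \in \cR(x)\), whence
\begin{align*}
\cE^x(\psi \circ \theta_t)
\geq E^{\bar P}[\psi \circ \theta_t]
= E^P\big[E^{Q_{X_t}}[\psi]\big]
\geq E^P\big[\cE^{X_t}(\psi)\big] - \varepsilon.
\end{align*}
Taking the supremum over \(P \in \cR(x)\) and letting \(\varepsilon \searrow 0\) yields the matching lower bound.

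The main technical obstacle lies in verifying that the conditioning and pasting operations produce measures in \(\cR(x)\)---that is, that the resulting laws are still semimartingale laws with absolutely continuous characteristics taking values in \(\Theta\), and that the pasted object is genuinely measurable in its parameters so that the Fubini step in the lower-bound argument is legitimate. These verifications are exactly the content of the stability lemmas in \cite{CN22,CN22b,NVH,ElKa15}; here the Markovian hypothesis is essential, as it replaces the \((t,\omega)\)-dependent shifted family \(\cR^{t,\omega}\) that would arise in general by the simpler family \(\cR(X_t(\omega))\). A minor additional step is to handle possibly unbounded \(\psi\): one truncates by \(\psi \wedge n\), applies the two inequalities above, and passes to the limit via monotone convergence, exploiting that the right-hand side is monotone in \(\psi\).
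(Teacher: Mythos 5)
Your overall architecture --- first establishing upper semianalyticity of \(x \mapsto \cE^x(\psi)\) via the analytic graph of \(x \mapsto \cR(x)\), then conditioning for the upper bound and measurable selection plus pasting for the lower bound --- is exactly the dynamic programming machinery the paper invokes by citing \cite[Proposition~2.8]{CN22b}, so at the level of strategy you and the paper agree.

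There is, however, a concrete step that fails as written, and it is the pivot of both halves of your argument: the claim that \(P^\omega \circ \theta_t^{-1} \in \cR(\omega(t))\). On the finite horizon the shift is \(\theta_t(\omega) = \omega((\cdot + t)\wedge T)\), so the shifted path is frozen on \((T-t,T]\); consequently \(P^\omega \circ \theta_t^{-1}\) has vanishing differential characteristics on \((T-t,T]\) and does not belong to \(\cR(\omega(t))\) unless \((0,0)\) lies in the relevant values of \(\Theta\). The inequality \(E^{P^\omega\circ\theta_t^{-1}}[\psi] \leq \cE^{\omega(t)}(\psi)\) then genuinely fails: take \(d=1\), \(F\) a singleton with \(b'\equiv 0\) and \(a'\equiv 1\), so that \(\cE^x = E^{W^x}\) is the Wiener expectation, and \(\psi(\omega)=\sin(\omega(T))\); then \(\psi\circ\theta_t=\psi\), so \(\cE^x(\psi\circ\theta_t) = e^{-T/2}\sin(x)\), whereas \(\cE^x(\cE^{X_t}(\psi)) = e^{-(T+t)/2}\sin(x)\). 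The same horizon mismatch invalidates your pasting identity \(E^{\bar P}[\psi\circ\theta_t] = E^P[E^{Q_{X_t}}[\psi]]\), since under \(\bar P\) the shifted path reproduces the \(Q_{X_t}\)-path only up to time \(T-t\) and is constant thereafter. Both steps (and the statement itself, which is borrowed from the infinite-horizon setting of the cited reference, where \(\theta_t\omega = \omega(\cdot+t)\) involves no truncation) become correct once \(\psi\) is additionally assumed \(\cF_{T-t}\)-measurable: then only the restriction of the shifted law to \([0,T-t]\) matters, the conditional law can be extended past \(T-t\) by any measurable selector to land in \(\cR(\omega(t))\), and the pasting identity holds. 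This restricted version covers the only application made in the paper, namely \(\psi = \varphi(X_s)\) with \(s+t\leq T\). You should either impose this measurability restriction explicitly or explain how the frozen tail is handled; as written, the two displayed chains of inequalities are not valid for general path-dependent \(\psi\).
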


Proposition \ref{prop: markov property} confirms the intuition that the coordinate process is a nonlinear Markov process under the family \( \{\cE^x \colon x \in \bR\} \), as it implies the equality
\[ \cE^x(\psi(X_{s+t})) = \cE^x( \cE^{X_t}( \psi (X_s))) \]
for every upper semianalytic function \(\psi \colon \bR^d \to \mathbb{R}\), \(s,t \in [0, T] \) with \(s + t \leq T\), and \(x \in \bR^d\).

Notice that the Conditions \ref{cond: compact, LG, cont} and \ref{cond: convex} are implied by the following conditions:
\begin{enumerate}
\item[\textup{(i)\('\)}] \(F\) is compact.
\item[\textup{(ii)\('\)}] \(b'\) and \(a'\) are continuous.
\item[\textup{(iii)\('\)}] There exists a constant \(\C > 0\) such that 
	\[
	\|b' (f, x)\|^2 + \|a' (f, x)\| \leq \C \big ( 1 + \|x\|^2 \big )
	\]
	for all \((f, x) \in F \times \bR^d\).
\item[\textup{(iv)\('\)}] The set \(\{(b' (f, x), a' (f, x)) \colon f \in F\} \subset \bR^d \times \mathbb{S}^d_+\) is convex for every \(x \in \bR^d\).
\end{enumerate}

Summing up, this setting provides a robust counterpart to classical continuous Markovian financial frameworks. In particular, it allows to combine different Markovian models such as, for instance, Cox--Ingersoll--Ross and Vasi{\u c}ek models. 

\subsection{Random generalized \(G\)-Brownian motions}
An economically interesting situation, previously studied in \cite{NVH}, see also \cite{nutz}, is the case where \(d = 1\) and, for \((t, \omega) \in \of 0, T\gs\),
\begin{align*}
\Theta (t, \omega) &:= [\underline{b}_t (\omega), \overline{b}_t (\omega)] \times [\underline{a}_t (\omega), \overline{a}_t (\omega)], \\
\tilde{\Theta} (t, \omega) &:= \{0\} \times [\underline{a}_t (\omega), \overline{a}_t (\omega)],
\end{align*}
where \(\underline{b}, \overline{b} \colon \of 0, T \gs \to \bR\) and \(\underline{a}, \overline{a} \colon \of 0, T \gs \to \bR_+\) are predictable functions such that
\[
\underline{b} \leq \overline{b}, \quad \underline{a} \leq \overline{a}.
\]
In this case, the sets \(\mathcal{P}\) and \(\mathcal{M}\) are given by 
\begin{align*}
    \mathcal{P} &= \big\{ P \in \fPas \colon P \circ X_0^{-1} = \delta_{x_0}, \ (\llambda \otimes P)\text{-a.e. } \underline{b} \leq d B^P / d \llambda \leq \overline{b}, \ \underline{a} \leq d C^P / d \llambda \leq \overline{a} \big\}, \\
     \mathcal{M} &= \big\{ P \in \fPas \colon P \circ X_0^{-1} = \delta_{x_0}, \ (\llambda \otimes P)\text{-a.e. } d B^P / d \llambda = 0, \ \underline{a} \leq d C^P / d \llambda \leq \overline{a} \big\}.
\end{align*}
The idea behind the set \(\mathcal{P}\) of real-world measures is that drift and volatility take flexible values in the random intervals \([\underline{b}, \overline{b}]\) and \([\underline{a}, \overline{a}]\), which capture uncertainty stemming for instance from an estimation procedure. Here, the boundaries of the intervals can depend on the whole history of the paths of the process \(X\) in a predictable manner. 

This setting is included in our framework. For instance, we can model it by taking \(F := [0, 1] \times [0, 1]\) and, for \(((f_1, f_2), t, \omega) \in F \times \of 0, T\gs\), 
\begin{align*}
b ( (f_1, f_2), t, \omega) &:= \underline{b}_t (\omega) + f_1 \cdot ( \overline{b}_t (\omega) - \underline{b}_t (\omega)), \\
a ( (f_1, f_2), t, \omega) &:= \underline{a}_t (\omega) + f_2 \cdot ( \overline{a}_t (\omega) - \underline{a}_t (\omega)). 
\end{align*}
For these choices of \(F, b\) and \(a\), part (i) of Condition \ref{cond: compact, LG, cont} and Condition \ref{cond: convex} are evidently satisfied. 
Furthermore, parts (ii) and (iii) of Condition \ref{cond: compact, LG, cont} transfer directly to the boundary functions (in the sense that (ii) and (iii) are satisfied once \(\underline{b}, \overline{b}, \underline{a}\) and \(\overline{a}\) are continuous and of linear growth). We consider these assumptions to be relatively mild from a practical perspective. 
Finally, let us also comment on Condition~\ref{SA: MPR}, i.e., the existence of a locally bounded MPR. Clearly, in case the coefficients \(\underline{b}, \overline{b}, \underline{a}\) and \(\overline{a}\) are locally bounded, and \(\underline{a}\) is locally bounded away from zero, the function 
\[
\theta ( (f_1, f_2), t, \omega) := \frac{\underline{b}_t (\omega) + f_1 \cdot (\overline{b}_t (\omega) - \underline{b}_t (\omega))}{\underline{a}_t (\omega) + f_2 \cdot (\overline{a}_t (\omega) - \underline{a}_t (\omega))}, \quad ( (f_1, f_2), t, \omega) \in F \times \of 0, T\gs, 
\]
is a robust MPR as described in Condition \ref{SA: MPR}. This setting is uniformly elliptic and therefore, corresponds to a complete market situation. 

Let us also give an example for a related incomplete situation in that Condition~\ref{SA: MPR} is satisfied. Consider the case where \(F := [0, 1]\) and
\begin{align*}
b ( f, t, \omega) := f \cdot \overline{b}_t (\omega), \qquad
a ( f, t, \omega) := f \cdot \overline{a}_t (\omega). 
\end{align*}
In this case, provided we presume that \(\overline{b}\) is locally bounded and \(\overline{a}\) is locally bounded away from zero, the function 
\[
\theta (f, t, \omega) := \frac{\overline{b}_t (\omega)}{\overline{a}_t (\omega)}, \quad (f, t, \omega) \in [0, 1] \times \of 0, T\gs, 
\]
is a robust MPR as described in Condition \ref{cond: MPR bounded}. As \(f = 0\) is possible, the volatility coefficient \(a\) is allowed to vanish.

\subsection{Stochastic delay equations with parameter uncertainty}
In the paper \cite{Fed11}, the author investigates the optimal portfolio strategy for a stochastic delay equation that is used to model a pension fund that provides a minimum guarantee and a surplus that depends on the past performance of the fund itself. We present a framework in the spirit of \cite{Fed11} with uncertain parameters. 
Consider a stochastic process \(Y\) with dynamics
\[
d Y_t = \big[ (r + \sigma^2 \lambda ) Y_t - \gamma (Y_t - Y_{(t - \tau) \vee 0}) \big] dt + \sigma d W_t.
\]
Here, the constants \(r, \sigma, \lambda\) and the function \(\gamma \colon \mathbb{R} \to \bR_+\), that we presume to be of linear growth, are model parameters and \(W\) is a one-dimensional standard Brownian motion. 
The term 
\(
\gamma (Y_t - Y_{(t - \tau) \vee 0})
\)
is related to a surplus part of benefits of a fund and \(\tau \in [0, T]\) represents the time people remain in the fund. The dynamics of \(Y\) follow a so-called {\em delay equation} that has a non-Markovian structure due to the presence of the surplus term. 

In the following, we shortly explain how uncertainty can be introduced to such a framework. 
Let \(\underline{r}, \overline{r}, \underline{\lambda}, \overline{\lambda}, \underline{\sigma}^2\) and \(\overline{\sigma}^2\) be constants such that 
\[
\underline{r} \le \overline{r}, \quad \underline{\lambda} \leq \overline{\lambda}, \quad 
0 < \underline{\sigma}^2 \leq \overline{\sigma}^2,
\]
and define
\[
F := [\underline{r}, \overline{r}] \times [\underline{\lambda}, \overline{\lambda}] \times [\underline{\sigma}^2, \overline{\sigma}^2].
\]
Clearly, \(F\) is compact and part (i) from Condition \ref{cond: compact, LG, cont} holds.
Next, we introduce the robust coefficients \(b\) and \(a\) as
\begin{align*}
b (f, t, \omega) := (f_1 + f_3 f_2) \omega (t) - \gamma (\omega (t) - \omega ((t - \tau) \vee 0)), \quad
a (f, t, \omega) := f_3,
\end{align*}
for \(( (f_1, f_2, f_3), t, \omega) \in F \times \of 0, T\gs\).
The functions \(b\) and \(a\) satisfy (ii) and (iii) from Condition \ref{cond: compact, LG, cont}. Furthermore, a short computation shows that also Condition \ref{cond: convex} holds.

Since \(\underline{\sigma}^2 > 0\), the function 
\[
\theta ( (f_1, f_2, f_3), t, \omega) := \frac{ (f_1 + f_3 f_2) \omega (t) - \gamma (\omega (t) - \omega ((t - \tau) \vee 0))}{f_3}, \quad ((f_1, f_2, f_3), t, \omega) \in F \times \of 0, T\gs, 
\]
is a MPR that satisfies Condition \ref{SA: MPR}. Finally, we notice that Condition \ref{cond: unif ellipticity vola} holds. Hence, our main Theorems \ref{thm: main positive power exponential} and \ref{thm: main negative power log} apply in this setting.

\section{Proofs} \label{sec: proofs}
In this section we present the proofs of our main results. The structure is chronological with the appearance of the results in the previous sections.

\subsection{Superhedging duality: proof of Theorem \ref{thm: superhedging duality}} \label{sec: proof superhedging duality}
We start with some auxiliary preparations and then finalize the proof.
For \(\omega, \omega' \in \Omega\) and \(t \in [0, T]\), we define the concatenation
\[
\omega \otimes_t \omega' :=  \omega \1_{[ 0, t)} + (\omega (t) + \omega' - \omega' (t)) \1_{[t, T]}.
\]
Furthermore, for a probability measure \(Q \in \mathfrak{P}(\Omega)\) and a transition kernel \(\omega \mapsto Q^*_\omega\), we set
\[
(Q\otimes_\tau Q^*) (A) := \iint \1_A (\omega \otimes_{\tau(\omega)} \omega') Q^*_\omega (d \omega') Q(d \omega), \quad A \in \mathcal{F}.
\]

\begin{definition}
A family \( \{\cR(t, \omega) \colon (t,\omega) \in \of 0, T\gs\} \subset \mathfrak{P}(\Omega)\) is said to have the {\em Property (A)} if 
\begin{enumerate}
   \item[\textup{(i)}]  the graph
\(
\on{gr} \cR = \big\{ (t, \omega, Q) \in \of 0, T \gs \hspace{0.05cm} \times\hspace{0.05cm} \mathfrak{P}(\Omega) \colon Q \in \cR(t, \omega) \big\}
\)
is analytic;
\item[\textup{(ii)}]
for any \((t, \alpha) \in \of 0, T\gs\), any stopping time \(\tau\) with \(t \leq \tau \leq T\), and any \(Q \in \cR(t, \alpha)\) there exists a family \(\{Q (\cdot | \mathcal{F}_\tau) (\omega) \colon\) \(\omega \in \Omega \}\) of regular \(Q\)-conditional probabilities given \(\mathcal{F}_\tau\) such that \(Q\)-a.s. \(Q (\cdot | \mathcal{F}_\tau) \in \cR(\tau, X)\);
\item[\textup{(iii)}]
for any \((t, \alpha) \in \of 0, T\gs\), any stopping time \(\tau\) with \(t \leq \tau \leq T\), any \(Q \in \cR(t, \alpha)\) and any \(\mathcal{F}_\tau\)-measurable map \(\Omega \ni \omega \mapsto Q^*_\omega \in \mathfrak{P}(\Omega)\) the following implication holds:
\[
Q\text{-a.s. } Q^* \in \cR (\tau, X)\quad \Longrightarrow \quad Q \otimes_\tau Q^* \in \cR(t, \alpha).
\]
\end{enumerate}
\end{definition}

\begin{definition}
	We say that a family \(\{ \cR (t, \omega) \colon (t, \omega) \in \of 0, T\gs\} \subset \fP (\Omega)\) is \emph{adapted} if \[\cR (t, \omega) = \cR(t, \omega(\cdot \wedge t))\] for all \((t, \omega) \in \of 0, T \gs\).
\end{definition}

\begin{lemma} \label{lem: stability equivalent measures}
    Let \(\{\cR (t, \omega) \colon (t, \omega) \in \of 0, T\gs\}\) be an adapted family that satisfies Condition (A) and that has a Borel measurable graph. Then, the family \(\{\fP_a(\cR (t, \omega)) \colon (t, \omega) \in \of 0, T\gs\}\), where
     \[
    \fP_a(\cR (t, \omega)) = \{P \in \fP(\Omega) \colon \exists R \in \cR(t,\omega) \text{ with } P \ll R \}, \quad (t, \omega) \in \of 0, T\gs,
    \]
    is adapted and satisfies Condition (A).
\end{lemma}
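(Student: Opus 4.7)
Adaptedness of $\{\fP_a(\cR(t,\omega))\}$ is immediate because the definition depends on $(t,\omega)$ only through the set $\cR(t,\omega)$, which is itself adapted. For the analyticity of the graph required in Property~(A)(i), I would write
\[
\operatorname{gr}\fP_a(\cR) = \pi\bigl(\{(t,\omega,P,R) \in \of 0,T\gs \times \fP(\Omega)^2 : R \in \cR(t,\omega),\ P \ll R\}\bigr),
\]
where $\pi$ denotes the projection that forgets the $R$-coordinate. The graph of $\cR$ is Borel by hypothesis, and the set $\{(P,R) \in \fP(\Omega)^2 : P \ll R\}$ is a well-known analytic subset of $\fP(\Omega)^2$. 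Hence the argument of $\pi$ is analytic and so is its image, giving (i).

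For Property~(A)(ii), fix $(t,\alpha)$, a stopping time $\tau$ with $t \leq \tau \leq T$, and $P \in \fP_a(\cR(t,\alpha))$. Choose $R \in \cR(t,\alpha)$ with $P \ll R$ and set $Z := dP/dR$. Apply (A)(ii) to $\cR$ to obtain regular conditional probabilities $R^\omega := R(\cdot|\cF_\tau)(\omega)$ with $R$-a.s.\ $R^\omega \in \cR(\tau(\omega),\omega)$. On the $P$-full set $\{E^R[Z|\cF_\tau] > 0\}$ set
\[
P^\omega(A) := \frac{E^{R^\omega}[Z \1_A]}{E^{R^\omega}[Z]}, \qquad A \in \cF.
\]
A direct check shows $\omega \mapsto P^\omega$ is a regular version of $P(\cdot|\cF_\tau)$ and $P^\omega \ll R^\omega$ by construction. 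Since $P \ll R$ transfers the $R$-null exceptional set into a $P$-null one, we conclude $P$-a.s.\ $P^\omega \in \fP_a(\cR(\tau,X))$.

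For Property~(A)(iii), take $P \ll R \in \cR(t,\alpha)$ and an $\cF_\tau$-measurable kernel $\omega \mapsto P^*_\omega$ with $P$-a.s.\ $P^*_\omega \in \fP_a(\cR(\tau,X))$. The set
\[
\bigl\{(\omega,\widetilde R) \in \Omega \times \fP(\Omega) : \widetilde R \in \cR(\tau(\omega),\omega),\ P^*_\omega \ll \widetilde R\bigr\}
\]
is analytic by the arguments used for (i), so the Jankov--von~Neumann selection theorem produces a universally measurable kernel $\omega \mapsto R^*_\omega$ realizing these properties $P$-a.s. Applying (A)(iii) to $\cR$ gives $R \otimes_\tau R^* \in \cR(t,\alpha)$. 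It remains to show $P \otimes_\tau P^* \ll R \otimes_\tau R^*$: if $(R \otimes_\tau R^*)(A) = 0$, then $R^*_\omega(\{\omega' : \omega \otimes_\tau \omega' \in A\}) = 0$ for $R$-a.a.\ $\omega$, hence for $P$-a.a.\ $\omega$ (using $P \ll R$), hence $P^*_\omega(\{\omega' : \omega \otimes_\tau \omega' \in A\}) = 0$ for $P$-a.a.\ $\omega$ (using $P^*_\omega \ll R^*_\omega$), which gives $(P \otimes_\tau P^*)(A) = 0$. Therefore $P \otimes_\tau P^* \in \fP_a(\cR(t,\alpha))$.

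The main technical points are the Radon--Nikodym construction in (ii), which demands care in pushing densities through conditional expectations to produce honest regular conditional probabilities that inherit absolute continuity, and the measurable selection in (iii), which produces only a universally measurable kernel. One must therefore verify that the concatenation property (A)(iii) of $\cR$ accommodates this weaker measurability; this is standard in the nonlinear-semimartingale framework (the references to the theory of El~Karoui--Tan and Neufeld--Nutz--V.~Handel cited in the paper) but worth flagging explicitly.
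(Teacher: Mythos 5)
Your overall strategy coincides with the paper's: Borel/analytic graph via a measurable Radon--Nikodym functional and a projection, the Bayes formula for part (ii), and a measurable selection of dominating kernels for part (iii). Parts (i) and (ii) are essentially correct (the paper in fact shows that $\{(P,R): P\ll R\}$ is Borel, via a jointly Borel density $D(\cdot,P,R)$ from Dellacherie--Meyer and the Borel map $(P,R)\mapsto E^R[D(\cdot,P,R)]$; analyticity would also suffice for the projection argument, but Borelness is used again in part (iii)).

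Part (iii), however, has two genuine gaps. First, the issue you only flag must actually be closed: Condition (A)(iii) for $\cR$ is stated for $\cF_\tau$-measurable kernels, and a Jankov--von Neumann selection gives only universal measurability, so you cannot invoke (A)(iii) as stated. The paper resolves this by showing the relevant graph is in $\cF_\tau^*\otimes\mathcal{B}(\fP(\Omega))$, using Galmarino's test to place the projection set in $\cF_\tau^*$, applying Aumann's selection theorem to obtain an $\cF_\tau^*$-measurable selector, and then replacing it by an $\cF_\tau$-measurable kernel that agrees $P$-a.s.\ (Kallenberg, Lemma 1.27); the a.s.\ modification is harmless for the hypothesis of (A)(iii). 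Second, and more substantively, your selection produces $R^*_\omega\in\cR(\tau(\omega),\omega)$ only on the set where a dominating measure for $P^*_\omega$ exists, which is $P$-full but need not be $R$-full; yet to apply (A)(iii) to $\cR$ with the pair $(R,R^*)$ you need $R$-a.s.\ $R^*\in\cR(\tau,X)$. You must therefore extend the kernel off that set by something lying in $\cR(\tau,X)$ $R$-a.s., e.g.\ by $R(\cdot\,|\,\cF_\tau)(\omega)$ — this is precisely why the paper defines the correspondence $\gamma$ by cases, equal to the set of dominating measures on $\pi_1(\mathcal{Z})$ and to $\{R(\cdot\,|\,\cF_\tau)(\omega)\}$ off it, before selecting. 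With these two repairs your argument matches the paper's proof; the final absolute-continuity computation for $P\otimes_\tau P^*\ll R\otimes_\tau R^*$ is correct as written.
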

\begin{proof}
    \emph{Step 1.} We start by showing that the correspondence \(\fP_a (\cR)\) has an analytic graph. As \(\cF\) is countably generated, \cite[Theorem V.58, p. 52]{DM} (and the subsequent remarks) grants the existence of a Borel function 
    \( D \colon \Omega \times \fP(\Omega) \times \fP(\Omega) \to \bR_+\) such that \(D(\cdot, P, R)\) is a version of the Radon–Nikodym derivative of the absolutely continuous part of \(P\) with respect to \(R\) on \(\cF\).
    Let
    \begin{align} \label{eq: phi}
    \fP(\Omega) \times \fP(\Omega) \ni (P, R) \mapsto \phi (P, R) := E^R\big[ D(\cdot, P, R) \big] \in [0, 1].
    \end{align}
    Notice that \(\phi\) is Borel by \cite[Theorem~8.10.61]{bogachev}.
    Let \(\pi \colon \of 0, T \gs \times \fP (\Omega) \times \fP (\Omega) \to \of 0, T \gs \times \fP (\Omega)\) be the projection to the first three coordinates. As \(\cR\) is assumed to have a Borel measurable graph, we conclude from the identity
    \begin{align*}
    \on{gr} \fP_a (\cR) 
    &= \pi \big( \big\{ (t, \omega, P, R) \colon (t, \omega, R) \in \on{gr} \cC, \phi (P, R) = 1 \big\} \big), 
    \end{align*}
    that \(\fP_a (\cR)\) has an analytic graph.

    \emph{Step 2.} Next, we show part (ii) from Condition (A). Let \((t, \alpha) \in \of 0, T\gs\) and take a stopping time \(t \leq \tau \leq T\) and a measure \(Q \in \fP_a(\cR (t, \alpha))\). By definition, there exists a measure \(P \in \cR (t, \alpha)\) such that \(Q \ll P\). As \(\cR\) satisfies part (ii) from Condition (A), \(P\)-a.s. \(P (\cdot | \cF_\tau) \in \cR (\tau, X)\). To conclude that \(Q\)-a.s. \(Q (\cdot | \cF_\tau) \in \fP_a (\cR(\tau, X))\), it suffices to show that \(Q\)-a.s. \(Q (\cdot | \cF_\tau) \ll P(\cdot | \cF_\tau)\). With the notation \(Z := d Q / d P\), the generalized Bayes theorem (\cite[Theorem 6, p. 274]{shiryaevProb}) yields that, for all \(A \in \cF\), \(Q\)-a.s.
    \[
Q (A | \cF_\tau) = \frac{ E^P [ \1_A Z | \cF_\tau]}{E^P [ Z | \cF_\tau]}. 
    \]
    Due to the fact that \(\cF\) is countably generated, this formula holds \(Q\)-a.s. for all \(A \in \cF\). Hence, \(Q\)-a.s. \(Q (\cdot | \cF_\tau) \ll P(\cdot | \cF_\tau)\), which proves that \(Q\)-a.s. \(Q ( \cdot | \cF_\tau) \in \fP_a (\cR(\tau, X))\).

    {\em Step 3.} It remains to prove part (iii) from Condition (A). Take \((t, \alpha) \in \of 0, T\gs\), let \(\tau\) be a stopping time with \(t \leq \tau \leq T\), fix a measure \(Q \in \fP_a (\cR(t, \alpha))\) and an \(\mathcal{F}_\tau\)-measurable map \(\Omega \ni \omega \mapsto Q^*_\omega \in \mathfrak{P}(\Omega)\) such that \(Q\)-a.s. \(Q^* \in \fP_a (\cR( \tau, X))\). 
    Using that \(\cR\) is adapted, \(\on{gr} \cR\) is Borel and that \(\omega \mapsto Q^*_\omega\) is \(\cF_\tau\)-measurable, we obtain that
    \begin{align*}
    	\mathcal{Z} &:= \big\{ (\omega, P) \in \Omega \times \fP (\Omega) \colon P \in \cR (\tau (\omega), \omega), Q^*_\omega \ll P \big\}
    	\\&= \big\{ (\omega, P) \in \Omega \times \fP (\Omega) \colon (\tau (\omega), \omega (\cdot \wedge \tau(\omega)), P) \in \on{gr} \cR, \phi ( Q^*_{\omega}, P) = 1\big\} 
    	\in \cF_\tau \otimes \mathcal{B}(\fP(\Omega)).
    \end{align*}
Denoting the projection to the first coordinate by \(\pi_1 \colon \Omega \times \fP(\Omega) \to \Omega\), it follows that the set 
\[
\pi_1 (\mathcal{Z}) = \big\{ \omega \in \Omega \colon \exists P \in \cR (\tau(\omega), \omega), Q_\omega^* \ll P \big\} 
\]
is analytic and consequently, an element of \(\cF^*\). Using Galmarino's test, we also see that 
\[
\pi_1 (\mathcal{Z}) = \big\{ \omega \in \Omega \colon \omega (\cdot \wedge \tau (\omega)) \in \pi_1 (\mathcal{Z}) \big\}.
\]
Hence, it follows from the universally measurable version of Galmarino's test (\cite[Lemma~2.5]{NVH}) that \(\pi_1 (\mathcal{Z})\in \cF^*_\tau\). 
As \(Q \in \fP_a (\cR (t, \alpha))\), there exists a measure \(P \in \cR (t, \alpha)\) such that \(Q \ll P\).
We define a nonempty-valued correspondence \(\gamma \colon \Omega \twoheadrightarrow \fP(\Omega)\) by 
\[
\gamma (\omega) := \begin{cases} \big\{ R \in \fP (\Omega) \colon R \in \cR (\tau (\omega), \omega), \ Q^*_\omega \ll R \big\}, & \omega \in \pi_1 (\mathcal{Z}), \\
	\big\{ P (\cdot | \cF_\tau) (\omega) \big\}, & \omega \not \in \pi_1 (\mathcal{Z}). \end{cases}
\]
Notice that
\[
\on{gr} \gamma = \big[ \mathcal{Z} \cap (\pi_1 (\mathcal{Z}) \times \fP(\Omega)) \big] \cup \big[ \on{gr} P (\cdot | \cF_\tau) \cap (\Omega \backslash \pi_1 (\mathcal{Z}) \times \fP(\Omega)) \big].
\]
As \(\omega \mapsto P(\cdot | \cF_\tau) (\omega)\) is \(\cF_\tau\)-measurable and the image space is Polish,\footnote{cf. \cite[Exercise 3.10.53]{bogachev}} we observe that \(\on{gr} P (\cdot | \cF_\tau) \in \cF_\tau \otimes \mathcal{B}(\fP(\Omega))\) and therefore, that \(\on{gr} \gamma \in \cF^*_\tau \otimes \mathcal{B}(\fP(\Omega))\).
By Aumann's theorem (\cite[Theorem 5.2]{himmelberg}), there exists an \(\cF^*_\tau\)-measurable function \(\Omega \ni \omega \mapsto \overline{P}_\omega \in \fP(\Omega)\) such that \(P\)-a.s. \(\overline{P} \in \gamma\). It is well-known (\cite[Lemma~1.27]{Kallenberg}) that \(\overline{P}\) coincides \(P\)-a.s. with an \(\cF_\tau\)-measurable function \(\Omega \ni \omega \mapsto P^*_\omega \in \fP(\Omega)\). As \(P \in \cR (t, \alpha)\) and because \(\cR\) satisfies part (ii) from Property (A), we have \(P\)-a.s. \(P (\cdot | \cF_\tau) \in \cR(\tau, X)\). Consequently, \(P\)-a.s. \(P^* \in \cR(\tau, X)\). Further, as \(Q \ll P\) and because \(Q (\pi_1 (\mathcal{Z})) = 1\), \(Q\)-a.s. \(Q^* \ll P^*\).
We are in the position to complete the proof. 
Using that \(\cR\) satisfies part (iii) from Condition (A), we have
    \(
    P \otimes_\tau P^* \in \cR (t, \alpha).
    \)
    Hence, it suffices to show that \(Q \otimes_\tau Q^* \ll P \otimes_\tau P^*\).
    Let \(A \in \cF\) such that \((P \otimes_\tau P^*) (A) = 0\). Then, by definition of \(P \otimes_\tau P^*\), we have \(P^*_\omega (\{ \omega' \colon \omega \otimes_{\tau (\omega)} \omega' \in A\}) = 0\) for \(P\)-a.a., and because \(Q \ll P\) also \(Q\)-a.a., \(\omega \in \Omega\). Since \(Q\)-a.s. \(Q^* \ll P^*\), we get that \(Q^*_\omega (\{ \omega' \colon \omega \otimes_{\tau (\omega)} \omega' \in A\}) = 0\) for \(Q\)-a.a. \(\omega \in \Omega\), which implies that \((Q \otimes_\tau Q^*) (A) = 0\).
    We conclude that \(Q \otimes_\tau Q^* \in \fP_a (\cR(t, \alpha))\). The proof is complete.
\end{proof}

We call an \(\bR^d\)-valued continuous process \(Y = (Y_t)_{t \in [0, T]}\) a (continuous) \emph{semimartingale after a time \(t^* \in \mathbb{R}_+\)} if the process \(Y_{\cdot + t^*} = (Y_{t + t^*})_{t \in [0, T - t^*]}\) is a \(d\)-dimensional semimartingale for its natural right-continuous filtration.
The law of a semimartingale after \(t^*\) is said to be a \emph{semimartingale law after \(t^*\)} and the set of them is denoted by \(\fPs (t^*)\).
Notice also that \(P \in \fPs(t^*)\) if and only if the coordinate process is a semimartingale after \(t^*\), see \cite[Lemma 6.4]{CN22}.
For \(P \in \fPs (t^*)\) we denote the semimartingale characteristics of the shifted coordinate process \(X_{\cdot + t^*}\) by \((B^P_{\cdot + t^*}, C^P_{\cdot + t^*})\). 
Moreover, we set 
\[
\fPas (t^*) := \big\{ P \in \fPs (t^*) \colon P\text{-a.s. } (B^P_{\cdot + t^*}, C^P_{\cdot + t^*}) \ll \llambda \big\}, \quad \fPas := \fPas (0),
\]
where \(\llambda\) denotes the Lebesgue measure. 
For $(t,\omega) \in \of 0, T\gs$, we define 
\begin{equation*}\label{eq: P(t,w)}
   \cP(t,\omega) := \Big\{ P \in \mathfrak{P}_{\text{sem}}^{\text{ac}}(t)\colon P(X^t = \omega^t) = 1, (\llambda \otimes P)\text{-a.e. } (dB^P_{\cdot + t} /d\llambda, dC^P_{\cdot + t}/d\llambda) \in \Theta (\cdot + t, X) \Big\}, 
\end{equation*} 
and
\begin{equation*}\label{eq: M(t,w)}
   \cM(t,\omega) := \Big\{ P \in \mathfrak{P}_{\text{sem}}^{\text{ac}}(t)\colon P(X^t = \omega^t) = 1, (\llambda \otimes P)\text{-a.e. } (dB^P_{\cdot + t} /d\llambda, dC^P_{\cdot + t}/d\llambda) \in \tilde{\Theta} (\cdot + t, X) \Big\}, 
\end{equation*} 
where we use the standard notation \(X^t := X_{\cdot \wedge t}\).

\begin{corollary} \label{cor: Condition (A)}
The family \(\{\cM (t, \omega) \cap \fP_a(\cP (t, \omega)) \colon (t, \omega) \in \of 0, T\gs\}\) satisfies Condition (A) and is adapted.
\end{corollary}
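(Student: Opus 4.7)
The plan is to assemble three ingredients: (a) the families $\{\cP(t,\omega)\}$ and $\{\cM(t,\omega)\}$ each satisfy Condition~(A) and are adapted with \emph{Borel} graphs; (b) by Lemma~\ref{lem: stability equivalent measures}, the family $\{\fP_a(\cP(t,\omega))\}$ then inherits Condition~(A) and remains adapted; (c) the intersection of two families satisfying Condition~(A) again satisfies Condition~(A).

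For (a), I would invoke the fully path-dependent measurable selection machinery developed in \cite{ElKa15, NVH, CN22}, which under Standing Assumption~\ref{SA: mbl and nonempty} provides Borel graphs for $\{\cP(t,\omega)\}$ and $\{\cM(t,\omega)\}$ together with the stability properties (ii)--(iii) in the definition of Condition~(A). Adaptedness $\cP(t,\omega)=\cP(t,\omega(\cdot\wedge t))$ and the analogue for $\cM$ are immediate from the definition of the characteristics and Galmarino's test. With (a) in hand, ingredient (b) follows by a direct application of Lemma~\ref{lem: stability equivalent measures}.

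For (c), let \(\cR_1(t,\omega) := \cM(t,\omega)\) and \(\cR_2(t,\omega) := \fP_a(\cP(t,\omega))\), and set \(\cR := \cR_1 \cap \cR_2\). I would verify the three items of Condition~(A) as follows. For (i), $\on{gr}\cR = \on{gr}\cR_1 \cap \on{gr}\cR_2$, and the intersection of an analytic set with an analytic set is analytic, so $\on{gr}\cR$ is analytic. For (ii), fix $(t,\alpha)$, a stopping time $t \le \tau \le T$, and $Q \in \cR(t,\alpha)$. Regular conditional probabilities are unique up to $Q$-null sets, so a single family $\{Q(\cdot\,|\,\cF_\tau)(\omega)\}$ witnesses both Condition~(A)(ii) for $\cR_1$ and for $\cR_2$; consequently $Q$-a.s.\ it lies in $\cR_1(\tau, X) \cap \cR_2(\tau, X) = \cR(\tau, X)$. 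For (iii), if $\omega \mapsto Q^*_\omega$ is $\cF_\tau$-measurable with $Q$-a.s.\ $Q^* \in \cR(\tau, X)$, then $Q^* \in \cR_i(\tau, X)$ $Q$-a.s.\ for $i=1,2$, and Condition~(A)(iii) for $\cR_1$ and $\cR_2$ yields $Q \otimes_\tau Q^* \in \cR_1(t,\alpha) \cap \cR_2(t,\alpha) = \cR(t,\alpha)$. Adaptedness of $\cR$ is inherited from $\cR_1$ and $\cR_2$ term by term.

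The main obstacle is ingredient (a): extracting Condition~(A) with a \emph{Borel} (not merely analytic) graph for $\cP$ and $\cM$ in the full path-dependent setting. The Borel measurability of the graphs of $\Theta$ and $\tilde\Theta$ in Standing Assumption~\ref{SA: mbl and nonempty}(ii) is tailored precisely for this purpose, which in turn is what makes the hypothesis of Lemma~\ref{lem: stability equivalent measures} applicable (that lemma really does need the Borel graph in Step~1 of its proof). Once (a) is secured, steps (b) and (c) are routine set-theoretic bookkeeping.
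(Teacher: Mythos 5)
Your proposal is correct and follows essentially the same route as the paper: adaptedness and Condition~(A) with Borel graphs for $\{\cP(t,\omega)\}$ and $\{\cM(t,\omega)\}$ are imported from \cite[Lemmata 6.6, 6.12, 6.17]{CN22}, Lemma~\ref{lem: stability equivalent measures} is applied to $\{\fP_a(\cP(t,\omega))\}$, and the conclusion follows since Condition~(A) is stable under intersections. Your explicit verification of the intersection step (in particular the a.s.\ uniqueness of regular conditional probabilities on the countably generated $\cF$) is sound and merely spells out what the paper leaves implicit.
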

\begin{proof}
Notice that \(\{\cP (t, \omega) \colon (t, \omega) \in \of 0, T\gs\}\) and \(\{\cM (t, \omega) \colon (t, \omega) \in \of 0, T\gs\}\) are adapted by construction. Further, for those two families, 
Condition (A) has been verified in \cite[Lemmata 6.6, 6.12, 6.17]{CN22}, including Borel measurability of their graphs. Hence, thanks to Lemma \ref{lem: stability equivalent measures},
the family \( \{ \fP_a(\cP (t, \omega)) \colon (t, \omega) \in \of 0, T\gs\}\) satisfies Condition (A) and is adapted. Therefore, the intersection \(\{\cM (t, \omega) \cap \fP_a(\cP (t, \omega)) \colon (t, \omega) \in \of 0, T\gs\}\) satisfies Condition (A) and is adapted. 

\end{proof}

The following lemma provides a minor extension of the dynamic programming principle as given by
\cite[Theorem 2.1]{ElKa15}. This is very much in the spirit of \cite[Theorem 2.3]{NVH}, whose proof we follow.
\begin{lemma} \label{lem: supermartingale}
    Let \(\{\cR (t, \omega) \colon (t, \omega) \in \of 0, T\gs\}\) be an adapted family that satisfies Condition (A) and let \(f \colon \Omega \to \overline{\bR}\) be an upper semianalytic function.
    Define, for \((t, \omega) \in \of 0, T\gs\),
    \[
    \cE_t(f)(\omega) := \sup_{P \in \cR(t,\omega)} E^P[f].
    \]
    Let \(s,t \in [0,T ]\) with \(s \leq t\).
    Then, for fixed \(\omega \in \Omega\) and 
    \(P \in \cR(0,\omega)\), we have
    \[
    \cE_s(f) = \esssup^P \big\{
    E^R[ \cE_t(f) | \cF_s ] \colon R \in \cR(0,\omega) \text{ with } R = P \text{ on } \cF_s \big\} \quad \text{\(P\)-a.s.}
    \]
\end{lemma}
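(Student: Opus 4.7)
The plan is to adapt the DPP argument of \cite[Theorem~2.1]{ElKa15}, with the conditional-essential-supremum formulation following \cite[Theorem~2.3]{NVH}, to the family $\{\cR(t,\omega)\}$ in hand. The proof splits into two one-sided inequalities, and both hinge on a \emph{pointwise} DPP
\[
\cE_s(f)(\omega') = \sup_{Q \in \cR(s,\omega')} E^Q\big[\cE_t(f)\big], \qquad \omega' \in \Omega,
\]
which I would establish first. The ``$\leq$'' part here follows from the tower property: for $Q \in \cR(s,\omega')$, part~(ii) of Condition~(A) gives $Q$-a.s.\ $Q(\cdot\mid\cF_t) \in \cR(t,X)$, so $E^Q[f\mid\cF_t] \leq \cE_t(f)$, and integration yields $E^Q[f] \leq E^Q[\cE_t(f)]$. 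The ``$\geq$'' part uses an $\epsilon$-Jankov--von Neumann selection (valid because the graph of $\omega'' \mapsto \cR(t,\omega'')$ is analytic by Condition~(A)(i) and $\cE_t(f)$ is upper semianalytic as the value function of an analytic-graph sublinear expectation) followed by pasting through Condition~(A)(iii).

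For the ``$\geq$'' direction of the lemma, I fix any $R \in \cR(0,\omega)$ with $R = P$ on $\cF_s$, take regular conditional probabilities $R(\cdot\mid\cF_s)$ which lie in $\cR(s,X)$ $R$-a.s.\ by Condition~(A)(ii), and apply the pointwise DPP to get $R$-a.s.\ (hence $P$-a.s.\ since the estimate is $\cF_s$-measurable and $R = P$ on $\cF_s$)
\[
E^R[\cE_t(f)\mid\cF_s] = E^{R(\cdot\mid\cF_s)}\big[\cE_t(f)\big] \leq \cE_s(f).
\]
For the harder ``$\leq$'' direction, for each $\epsilon > 0$ (with a truncation $\cE_t(f) \wedge N$ on the set $\{\cE_s(f) = +\infty\}$), a Jankov--von Neumann selection applied to $\omega' \mapsto \cR(s,\omega')$ produces a universally measurable kernel $\omega' \mapsto Q^{\epsilon,\omega'}$ with $E^{Q^{\epsilon,\omega'}}[\cE_t(f)] \geq \cE_s(f)(\omega') - \epsilon$. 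Setting $R_\epsilon := P \otimes_s Q^\epsilon$, Condition~(A)(iii) yields $R_\epsilon \in \cR(0,\omega)$, by construction $R_\epsilon = P$ on $\cF_s$, and
\[
E^{R_\epsilon}[\cE_t(f)\mid\cF_s](\omega') = E^{Q^{\epsilon,\omega'}}\big[\cE_t(f)\big] \geq \cE_s(f)(\omega') - \epsilon \qquad P\text{-a.s.}
\]
Upward-directedness of the family on the right-hand side (obtained by pasting any two candidates via Condition~(A)(iii)) then allows one to extract a countable approximating sequence along $\epsilon_n \downarrow 0$ and $N_n \nearrow \infty$ and pass to the limit using the standard properties of essential suprema.

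The main obstacle is the measurability of the pasted kernel: Condition~(A)(iii) demands $\cF_s$-measurability of $\omega' \mapsto Q^{\epsilon,\omega'}$, while Jankov--von Neumann only produces a universally measurable selector. This is resolved by passing to a $P$-a.s.\ modification with the help of the universally measurable Galmarino test from \cite[Lemma~2.5]{NVH}, after which the selector coincides $P$-a.s.\ with an honest $\cF_s$-measurable map. A secondary technical point is handling the set $\{\cE_s(f)=+\infty\}$, which is absorbed by the truncation argument indicated above.
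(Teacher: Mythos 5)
Your proposal is correct and follows essentially the same route as the paper: both rely on the $\cite[\textup{Theorem}~2.1]{ElKa15}$ dynamic programming identity $\cE_s(f)=\cE_s(\cE_t(f))$, establish the ``$\geq$'' direction via Condition~(A)(ii) and regular conditional probabilities, and the ``$\leq$'' direction via an $\epsilon$-optimal measurable selector pasted back in through Condition~(A)(iii), handling the possibly infinite value of $\cE_s(f)$ by truncation. The only organizational difference is that the paper first proves an auxiliary identity $\cE_s(f)=\esssup^P\{E^R[f\mid\cF_s]:R\in\cR(0,\omega),\,R=P\text{ on }\cF_s\}$ for general upper semianalytic $f$ and then substitutes $\cE_t(f)$, whereas you work with $\cE_t(f)$ throughout; also, the paper simply cites \cite{ElKa15} for the $\cF_s$-measurable kernel, while you reconstruct the argument (Jankov--von Neumann selector, Galmarino test to get $\cF_s^*$-measurability, $P$-a.s.\ modification to $\cF_s$-measurability) explicitly. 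One small inefficiency: your appeal to upward-directedness of the family $\{E^R[\cE_t(f)\mid\cF_s]\}$ is unnecessary for the final step --- once you have, for each $n$, a measure $R_n$ in the family with $E^{R_n}[\cE_t(f)\mid\cF_s]\geq(\cE_s(f)-\epsilon_n)\wedge N_n$ $P$-a.s., the defining property of the essential supremum already yields $\esssup^P\geq\cE_s(f)$ $P$-a.s.\ upon letting $n\to\infty$; directedness would only be needed to approximate the essential supremum by a monotone sequence, which is not required here.
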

\begin{proof}
    Fix \(\omega \in \Omega\) and \(P \in \cR(0,\omega)\).
    We start by showing, for \(s \in [0,T]\) and every upper semianalytic function \(f \colon \Omega \to \overline{\bR}\),
    \begin{equation} \label{eq: pf supermartingale}
        \cE_s(f) = \esssup^P \big\{
    E^R[ f | \cF_s ] \colon R \in \cR(0,\omega) \text{ with } R = P \text{ on } \cF_s \big\} \quad \text{\(P\)-a.s.}
    \end{equation}
    Let \(R \in \cR(0,\omega)\)  with  \(R = P\) on \(\cF_s \).
    By Condition (A), there exists a family \(\{R (\cdot | \mathcal{F}_s) (\alpha) \colon\) \(\alpha \in \Omega \}\) of regular \(R\)-conditional probabilities given \(\mathcal{F}_s\) such that \(R\)-a.s. \(R (\cdot | \mathcal{F}_s) \in \cR(s, X)\). Hence,
    \begin{equation*} \label{eq: pf supermartingale 2}
        \cE_s(f) \geq E^R[ f | \cF_s ] \quad \text{ \(R\)-a.s.}
    \end{equation*}
    Notice that both sides in \eqref{eq: pf supermartingale 2} are \(\cF_s^*\)-measurable by \cite[Theorem 2.1]{ElKa15} and a suitable version of Galmarino's test, see \cite[Lemma 2.5]{NVH}. As \(R = P\) on \(\cF_s\) we also have \(R = P\) on \(\cF^*_s\) and we conclude that 
    \[
    \cE_s(f) \geq \esssup^P \big\{
    E^R[ f | \cF_s ] \colon R \in \cR(0,\omega) \text{ with } R = P \text{ on } \cF_s \big\} \quad \text{\(P\)-a.s.}
    \]
    To show the converse inequality let \(\epsilon > 0\). As in the proof of \cite[Theorem 2.1]{ElKa15} there exists an \(\cF_s\)-measurable kernel \(\Omega \ni \omega \mapsto Q^*_\omega \in \mathfrak{P}(\Omega)\) such that 
    \(P\)-a.s. \(Q^* \in \cR (t, X)\) and
    \[
    E^{Q^*}[f] \geq \cE_s(f) \1_{\{\cE_s(f) < \infty\}} + \frac{1}{\epsilon}\1_{ \{\cE_s(f) = \infty\}}.
    \]    
    By Condition (A), the measure \(P \otimes_s Q^*\) is contained in \(\cR(0,\omega)\) and coincides with \(P\) on \(\cF_s\). Further, it holds that
    \[
    E^{P \otimes_s Q^*}[ f \mid \cF_s] = E^{Q^*}[f] \geq (\cE_s(f) - \epsilon) \wedge \frac{1}{\epsilon} \quad \text{\(P\)-a.s.},
    \]
    and, as \(\epsilon > 0\) was arbitrary, we conclude that \eqref{eq: pf supermartingale} holds.
    Next, it follows from \cite[Theorem 2.1]{ElKa15} and \eqref{eq: pf supermartingale} that 
    \[
    \cE_s(f) = \cE_s(\cE_t(f)) = \esssup^P \big\{
    E^R[ \cE_t(f) | \cF_s ] \colon R \in \cR(0,\omega) \text{ with } R = P \text{ on } \cF_s \big\} \quad \text{\(P\)-a.s.}
    \]
    The proof is complete.
\end{proof}

We are in the position to give a proof for Theorem \ref{thm: superhedging duality}.
We follow closely the lines of \cite[proof of Theorem 3.2]{nutz_superhedging} and adapt it to our setting.
\begin{proof}[Proof of Theorem \ref{thm: superhedging duality}]
By Corollary \ref{cor: Condition (A)}, the family \(\{\cR(t,\omega) \colon (t, \omega) \in \of 0, T\gs\}\), where
\[
\cR(t,\omega) := \cM (t, \omega) \cap \fP_a(\cP (t, \omega)), \quad (t, \omega) \in \of 0, T\gs,
\]
satisfies Condition (A) and is adapted.
Hence, it follows from \cite[Theorem 2.1]{ElKa15} that for every upper semianalytic \(f \colon \Omega \to \overline{\bR}\) and \(t \in [0,T]\) the function
\[
\cE_t(f)(\omega) := \sup_{Q \in \cR(t,\omega)} E^Q[f], \quad \omega \in \Omega,
\]
is \(\cF^*_t\)-measurable and satisfies \(\cE_s(\cE_t(f)) = \cE_s(f)\) for \(s, t \in [0,T]\) with \(s \leq t\). Let \(f \colon \Omega \to \bR_+\) be as in the statement of the theorem. Notice that \(\fM_a(\cP) = \cR(0, \bx_0)\), where \(\bx_0 \in \Omega\) is the constant path \(\bx_0 (s) = x_0\) for all \(s \in [0,T]\). Hence, we conclude that
\[
\sup_{Q \in \fM_a(\cP)} E^Q[\cE_t(f)] = \pi < \infty,
\]
where \(\pi := \sup_{Q \in \fM_a(\cP)} E^Q \big[f \big]\).
This, together with Lemma \ref{lem: supermartingale}, implies that the process \(t \mapsto \cE_t(f)\)
is a \(Q\)-\(\F^*\)-supermartingale for every \(Q \in \fM_a(\cP)\).
As in the proof of \cite[Theorem 3.2]{nutz_superhedging} we can now construct a \cadlag process \(Y\) that is a \(Q\)-\(\G\)-supermartingale for every \(Q \in \fM_a(\cP)\) and satisfies
\[
Y_0 \leq \pi \quad \text{ and } \quad Y_T = f \quad \text{\(Q\)-a.s. for all \(Q \in \fM_a(\cP)\)}.
\]
As \(\fM_a(\cP)\) is a nonempty and saturated (in the sense of \cite{nutz_superhedging}) set of local martingale measures for \(X\),
the robust optional decomposition theorem \cite[Theorem~2.4]{nutz_superhedging}
grants the existence of a \(\G^{\fM_a(\cP)}\)-predictable process \(H\) such that \(\int_0^\cdot H_s d X_s\) is a \(Q\)-supermartingale for every \(Q \in \fM_a(\cP)\), and
\[
\pi + \int_0^T H_s d X_s \geq f \quad Q\text{-a.s. for all } Q \in \fM_a(\cP).
\]
In particular, as \(f\) is non-negative, this implies
\[
\int_0^t H_s d X_s \geq E^Q\Big[\int_0^T H_s d X_s \mid \cG^{\fM_a(\cP)} \Big] \geq -\pi, \quad \text{ for all } t \in [0, T],\ Q\text{-a.s. for all } Q \in \fM_a(\cP).
\]
This completes the proof.
\end{proof}

\subsection{Separating duality for nonlinear semimartingales: proof of Theorem \ref{thm: dualities}} \label{sec: proof separating duality}

\begin{lemma} \label{lem: integrands}
    Suppose that \NFLVR holds.
    Let \(H\) be a \(\G^\cP\)-predictable process. Then, \(H \in L (X, P)\) for every \(P\in \cP\) if and only if \(H \in L (X, Q)\) for every \(Q\in \fM_a(\cP)\).
     In particular, \( \cH^\cP = \cH^{\fM_a(\cP)}\).
\end{lemma}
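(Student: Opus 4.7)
The plan is to exploit the equivalence $\cP \sim \fM_a(\cP)$ granted by NFLVR (as already noted in the remark following Definition~\ref{def: NFLVR}) together with the stability of the stochastic integral under absolute continuity.

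\textbf{Step 1: Filtrations agree.} Since $\cP \sim \fM_a(\cP)$, the classes of polar sets coincide, i.e.\ $\cN^\cP = \cN^{\fM_a(\cP)}$. By the very definition \eqref{eq: def G R}, this gives $\G^\cP = \G^{\fM_a(\cP)}$, so the predictability requirement is the same on both sides.

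\textbf{Step 2: The integrability class is stable under absolute continuity.} I would invoke the classical fact (see, e.g., \cite[Proposition~III.6.24]{JS}) that if $Q \ll P$ and $X$ is a $P$-semimartingale and $H \in L(X,P)$, then $X$ is also a $Q$-semimartingale, $H \in L(X,Q)$, and the two versions of the stochastic integral coincide $Q$-a.s.

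\textbf{Step 3: Deduce the equivalence of the integrability conditions.} For the direction ``$\cP$ $\Rightarrow$ $\fM_a(\cP)$'', take any $Q \in \fM_a(\cP) \subset \fP_a(\cP)$; by definition there exists $P \in \cP$ with $Q \ll P$, and Step~2 yields $H \in L(X,Q)$. For the converse direction, invoke NFLVR: for any $P \in \cP$ there exists $Q \in \fM_a(\cP)$ with $P \ll Q$, and Step~2 applied with the roles swapped delivers $H \in L(X,P)$.

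\textbf{Step 4: Equality $\cH^\cP = \cH^{\fM_a(\cP)}$.} Combining Steps~1--3, it remains to match the admissibility bounds. For the inclusion $\cH^\cP \subseteq \cH^{\fM_a(\cP)}$, fix $H \in \cH^\cP$ and $Q \in \fM_a(\cP)$, choose $P \in \cP$ with $Q \ll P$, and note that on a set of full $P$-measure (hence of full $Q$-measure) the $P$- and $Q$-versions of $\int_0^\cdot H_s dX_s$ coincide by Step~2; the $P$-a.s.\ bound $\int_0^\cdot H_s dX_s \geq -C(H,P)$ therefore transfers to a $Q$-a.s.\ bound with the same constant. The reverse inclusion is analogous, using NFLVR to obtain $Q \in \fM_a(\cP)$ with $P \ll Q$ for any given $P \in \cP$.

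\textbf{Expected main obstacle.} The only non-trivial ingredient is the invariance statement in Step~2; once this is cited, the argument is a short bookkeeping exercise. A minor care-point is that the constants in the admissibility bounds may depend on the measure, but since $C(H,P)$ transfers to the $Q$-a.s.\ bound (and vice versa), this causes no issue.
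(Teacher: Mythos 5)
Your proof is correct and takes essentially the same approach as the paper's, which likewise combines the equivalence $\cP \sim \fM_a(\cP)$ with the invariance of the class $L(X,\cdot)$ and of the stochastic integral under absolute continuity of measures (the paper cites M\'emin's Lemma~V.2 for your Step~2 rather than Jacod--Shiryaev, but these are interchangeable references for the same classical fact). Your Step~4 simply spells out the transfer of admissibility bounds that the paper compresses into the observation that $\cP\sim\fM_a(\cP)$ makes the ``in particular'' statement immediate once the first assertion is established.
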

\begin{proof}
Recall that \NFLVR implies that \(\cP \sim \fM_a(\cP)\). Hence, it suffices to establish the first part of the statement.
In this regard, suppose \(H \in L (X, P)\) for every \(P\in \cP\),
and let \(Q \in \fM_a(\cP)\). As \(\fM_a(\cP) \subset \fP_a(\cP)\), there exists a measure \(P \in \cP\) with \(Q \ll P\).
Hence, \cite[Lemma V.2]{memin} implies that \(H \in L (X, Q)\).
Conversely, let \(H \in L (X, Q)\) for every \(Q \in \fM_a(\cP)\),
and let \(P \in \cP\). By \NFLVR, there exists a measure \(Q \in \fM_a(\cP)\) with \(P \ll Q\).
Hence, \cite[Lemma V.2]{memin} implies that \(H \in L (X, P)\).
\end{proof}

\begin{proposition}[First Duality]  \label{prop: first duality}
    \(\fM_a(\cP) = \cD = \big\{ Q \in \mathfrak{P}_a(\cP) \colon E^Q \big[g \big] \leq 1 \text{ for all } g \in \cC \cap C_b  (\Omega; \mathbb{R})\big \} =: \cU \)
\end{proposition}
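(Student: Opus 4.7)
I would establish the triple equality via the cyclic chain $\fM_a(\cP) \subseteq \cD \subseteq \cU \subseteq \fM_a(\cP)$. The middle inclusion is immediate from $\cC \cap C_b(\Omega; \bR) \subseteq \cC$. The first, $\fM_a(\cP) \subseteq \cD$, follows by a standard supermartingale argument: given $Q \in \fM_a(\cP)$ with $Q \ll P$ for some $P \in \cP$ and a witness $H \in \cH^\cP$ satisfying $1 + \int_0^T H \, dX \geq g$ $\cP$-q.s. for $g \in \cC$, \cite[Lemma V.2]{memin} transfers $H$ from $L(X, P)$ to $L(X, Q)$, admissibility carries from $P$ to $Q$ via absolute continuity, and since $X$ is a local $Q$-martingale the stochastic integral is a $Q$-supermartingale (being bounded below), so $E^Q[g] \leq 1$. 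The $\G^\cP$-predictability of $H$ remains usable under $Q$ because $\cP$-polar sets are $Q$-null.

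\textbf{Main step ($\cU \subseteq \fM_a(\cP)$).} This is the heart of the proof. Fix $Q \in \cU$ and pick $P \in \cP$ with $Q \ll P$, so $X$ is a continuous $Q$-semimartingale, and set $\tau_n := \inf\{r \in [0, T] : \|X_r\| \geq n\} \wedge T$. The aim is to show that each stopped process $X^{\tau_n}$ is a $Q$-$\F$-martingale, which identifies $X$ as a local $Q$-$\F$-martingale. The key construction is to probe the separating property of $Q$ against admissible strategies whose stochastic integrals can be recast as bounded continuous functionals on $(\Omega, \|\cdot\|_\infty)$. For each $\epsilon \in (0, 1]$, coordinate $i$, times $0 \leq s < t \leq T$, and bounded continuous $\cF_s$-measurable $\phi \in [0, 1]$, I would use the continuous, $r$-nonincreasing, $\G^\cP$-predictable Lipschitz cutoff
\[
\chi_{n, \epsilon}(r, \omega) := \Big(1 - \epsilon^{-1} \big(\sup\nolimits_{u \leq r} \|\omega(u)\| - n\big)^+\Big) \vee 0,
\]
and the strategies $H^\pm := \pm \phi \, \chi_{n, \epsilon} \, e_i \, \1_{(s, t]}$. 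Because $\chi_{n, \epsilon}$ has finite variation in $r$ (hence zero quadratic covariation with $X^i$), integration by parts gives the pathwise identity
\[
\int_s^t \chi_{n, \epsilon} \, dX^i = \chi_{n, \epsilon}(t) X^i_t - \chi_{n, \epsilon}(s) X^i_s - \int_s^t X^i_r \, d\chi_{n, \epsilon}(r),
\]
whose right-hand side is pointwise bounded by $3(n + 1)$ and, via Helly's selection theorem applied to the uniform total-variation bound on the family $\{\chi_{n, \epsilon}(\cdot, \omega)\}_\omega$, jointly continuous in $\omega$. Hence $H^\pm \in \cH^\cP$ and $g^\pm := 3(n+1) \pm \phi \int_s^t \chi_{n, \epsilon} \, dX^i \in \cC(3(n+1)) \cap C_b(\Omega; \bR)$, and $Q \in \cU$ forces $E^Q\big[\phi \int_s^t \chi_{n, \epsilon} \, dX^i\big] = 0$.

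\textbf{Conclusion and main obstacle.} Letting $\epsilon \searrow 0$ with $n$ fixed, the right-hand side of the integration-by-parts identity converges pointwise in $\omega$ to $X^i_{t \wedge \tau_n} - X^i_{s \wedge \tau_n}$ (the measures $d\chi_{n, \epsilon}(r)$ concentrate on $\tau_n$ while $\chi_{n, \epsilon} \to \1_{\{\tau_n > \cdot\}}$), and dominated convergence yields $E^Q[\phi(X^i_{t \wedge \tau_n} - X^i_{s \wedge \tau_n})] = 0$. A monotone-class argument upgrades this identity from bounded continuous $\phi \in [0, 1]$ to all bounded $\cF_s$-measurable $\phi$, so $X^{\tau_n}$ is a $Q$-$\F$-martingale; since $\tau_n \nearrow T$ $Q$-a.s., $X$ is a local $Q$-$\F$-martingale and therefore $Q \in \fM_a(\cP)$. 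The decisive obstacle throughout is that the probing strategies must yield bounded \emph{continuous} superhedgeable claims: the Lipschitz-cutoff/integration-by-parts construction above is what simultaneously secures admissibility (via the pathwise bound on the integral) and continuity in $\omega$ (by converting a stochastic integral into a Stieltjes integral with continuous data), and bridging stochastic integrals to pointwise continuous functionals on $\Omega$ is the real crux of the argument.
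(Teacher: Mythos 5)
Your cyclic chain is correct and, for the two easy inclusions, matches the paper: $\fM_a(\cP)\subseteq\cD$ is proved exactly the same way (pick $P\in\cP$ with $Q\ll P$, invoke \cite[Lemma~V.2]{memin} to transfer the superhedge inequality and admissibility from $P$ to $Q$, then use the fact that a nonnegative local $Q$-martingale is a $Q$-supermartingale), and $\cD\subseteq\cU$ is trivial. For the hard inclusion $\cU\subseteq\fM_a(\cP)$ your route differs: the paper does not construct probing strategies directly. Instead it introduces $\Gamma:=\{g\in C_b(\Omega;\bR)\colon\exists H\in\cH^\cP\text{ with } g\le\int_0^T H\,dX\}$, shows in two lines that $Q\in\cU$ forces $E^Q[g]\le 0$ for every $g\in\Gamma$ (by writing $1+g/c\in\cC\cap C_b$ for a suitable $c>0$), and then appeals to \cite[Lemma~5.6]{kupper}, which asserts precisely that this separation property makes $Q$ a local martingale measure. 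Your construction with the Lipschitz cutoffs $\chi_{n,\epsilon}$, integration by parts to turn the stochastic integral into a bounded continuous Stieltjes functional, and the $\epsilon\searrow 0$ passage, is essentially an inline reproduction of what that cited lemma packages away. The trade-off is clear: the paper is shorter by delegating, while yours is self-contained.

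Two technical points in your probing step deserve care, though neither is fatal. First, as $\epsilon\searrow 0$ the measures $-d\chi_{n,\epsilon}(\cdot,\omega)$ concentrate at $\sigma_n(\omega):=\inf\{r\colon \|\omega(r)\|>n\}\wedge T$ (strict inequality) rather than at $\tau_n$, and the pointwise limit of $\chi_{n,\epsilon}$ is $\1_{\{\sup_{u\le\cdot}\|\omega(u)\|\le n\}}$, so the resulting identity is $E^Q[\phi(X^i_{t\wedge\sigma_n}-X^i_{s\wedge\sigma_n})]=0$; this is perfectly adequate since $\sigma_n$ is a stopping time increasing to $T$, but one should also dispose of the boundary events $\{\sigma_n=s\}$, $\{\sigma_n=t\}$ (e.g.\ by restricting to a countable dense set of levels $n$ or of times). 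Second, invoking Helly's selection theorem is unnecessary overhead: since $\omega^m\to\omega$ uniformly forces $\chi_{n,\epsilon}(\cdot,\omega^m)\to\chi_{n,\epsilon}(\cdot,\omega)$ uniformly and the limiting distribution function is continuous, the Stieltjes measures converge weakly and the continuity of $\omega\mapsto\int_s^t X^i_r(\omega)\,d\chi_{n,\epsilon}(r,\omega)$ follows directly from a standard ``uniformly convergent integrands against weakly convergent measures'' argument.
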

\begin{proof}
To see \( \fM_a(\cP) \subset \cD\), let \(g \in \cC\) and \(Q \in \fM_a(\cP)\). By definition of \(\fM_a (\cP)\), there exists a measure \(P \in \cP\) with \(Q \ll P\). Moreover, by the definition of \(\cC\), there exists a process \(H \in \cH^\cP \subset L(X,P)\) such that \(P\)-a.s.
\begin{equation} \label{eq: pf first duality}
    g \leq 1 + \int_0^T H_s d X_s.
\end{equation}
We deduce from \cite[Lemma V.2]{memin} that \eqref{eq: pf first duality} holds 
\(Q\)-a.s. as well.
As \(1 + \int_0^\cdot H_s d X_s\) is a non-negative local \(Q\)-martingale, it is also a \(Q\)-supermartingale, which shows that \( E^Q[g] \leq 1\). Thus, \(Q \in \cD\).
Notice that \(\cD \subset \mathcal{U}\) by definition.
It remains to show \(\mathcal{U} \subset \fM_a(\cP)\).
To this end, we define
\[
\Gamma := \Big\{ g \in C_b (\Omega; \bR)  \colon \exists H \in \cH^\cP \text{ such that } g \leq \int_0^T H_s d X_s \Big\}. 
\]
Let \(Q \in \mathcal{U}\) and take a function \(g \in \Gamma\). 
As \(g\) is bounded, there exists a constant \(c > 0\) such that \(g + c \geq 0\). Moreover, as \(Q \in \mathcal{U} \subset \fP_a(\cP)\), there exists a process \(H \in \cH^\cP\) such that \(Q\)-a.s.
\(
1 + g/c \leq 1 +  \int_0^T H_s / c d X_s.
\)
Hence, because \(H/c \in \cH^\cP\), we have \(1 + g/c \in \cC \cap C_b (\Omega; \bR)\) and the definition of \(\mathcal{U}\) yields that 
\(
1 +E^Q [ g ] / c \leq 1.
\)
This implies \(E^Q[g] \leq 0\). 
Thanks to \cite[Lemma 5.6]{kupper}, we can conclude that \(Q\) is a local martingale measure. This yields \(Q \in \fM_a(\cP)\).
\end{proof}

\begin{proposition} [Second Duality] \label{prop: second duality}
Suppose that \NFLVR holds.
Then, 
\[
 \cC \cap C_b (\Omega; \bR) = \big\{ g \in C_b^+(\Omega; \bR) \colon E^Q \big[g\big] \leq 1 \text{ for all } Q \in \cD\big\}.  
 \]
\end{proposition}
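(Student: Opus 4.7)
My plan is to prove the two inclusions separately, with the reverse inclusion doing all the real work via the superhedging duality. The forward inclusion $\cC \cap C_b(\Omega; \bR) \subset \{g \in C_b^+(\Omega; \bR) \colon E^Q[g] \leq 1 \text{ for all } Q \in \cD\}$ is immediate from the definitions: every $g \in \cC$ is non-negative by construction, and the bound $E^Q[g] \leq 1$ for every $Q \in \cD$ is precisely what defines $\cD$.

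For the reverse inclusion, I would fix $g \in C_b^+(\Omega; \bR)$ with $E^Q[g] \leq 1$ for all $Q \in \cD$. First I would use the First Duality (Proposition \ref{prop: first duality}) to rewrite this hypothesis as
\[
\sup_{Q \in \fM_a(\cP)} E^Q[g] \leq 1.
\]
Since $g$ is non-negative, continuous (hence Borel, and a fortiori upper semianalytic), and the supremum above is finite, Theorem \ref{thm: superhedging duality} applies and produces a strategy $H \in \cH^{\fM_a(\cP)}$ with
\[
1 + \int_0^T H_s \hspace{0.05cm} dX_s \geq g \quad Q\text{-a.s. for all } Q \in \fM_a(\cP).
\]

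To upgrade this into a certificate of membership in $\cC$, I need the superreplication inequality to hold $\cP$-quasi-surely and I need $H \in \cH^\cP$. This is exactly where the \NFLVR hypothesis enters, in two complementary ways: the remark following Definition \ref{def: NFLVR} gives $\cP \sim \fM_a(\cP)$, so the two notions of polarity agree and the superreplication inequality transfers directly; and Lemma \ref{lem: integrands} supplies the identity $\cH^\cP = \cH^{\fM_a(\cP)}$, so $H$ is automatically admissible under every measure in $\cP$. Since $g$ is continuous and therefore $\cF_T^*$-measurable, this finishes the verification that $g \in \cC \cap C_b(\Omega; \bR)$.

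The proof itself is a short assembly once the earlier machinery is in hand. The real obstacle, and the reason this proposition appears so late in the paper, lies in producing that machinery: the First Duality $\fM_a(\cP) = \cD$, the robust superhedging Theorem \ref{thm: superhedging duality}, and the integrand identity of Lemma \ref{lem: integrands}. The last of these is the technical substitute for the uniform ellipticity assumption used in \cite{kupper}, where absolutely continuous and equivalent martingale measures coincide automatically and both of the above transfer steps would become essentially trivial.
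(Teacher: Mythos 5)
Your proposal is correct and follows essentially the same route as the paper's own proof: the trivial forward inclusion from the definitions, and for the converse the combination of the First Duality ($\cD = \fM_a(\cP)$), Theorem \ref{thm: superhedging duality} applied to the bounded continuous payoff $g$, and then Lemma \ref{lem: integrands} together with $\cP \sim \fM_a(\cP)$ to transfer admissibility and the superreplication inequality back to $\cP$. No discrepancies worth noting.
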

\begin{proof}
    Since, by definition, \( \cC \cap C_b (\Omega; \bR) \subset \{ g \in C_b^+ (\Omega; \bR) \colon E^Q[g] \leq 1 \text{ for all } Q \in \cD\}\), it suffices to prove the converse inclusion \( \cC \cap C_b (\Omega; \bR) \supset \{ g \in C_b^+ (\Omega; \bR) \colon E^Q[g] \leq 1 \text{ for all } Q \in \cD\}\). 
    Let \(g \in C_b^+ (\Omega; \bR) \) be such that \(E^Q[g] \leq 1\) for all \(Q \in \cD\).
    As \(g\) is bounded and continuous, the superhedging duality given by Theorem \ref{thm: superhedging duality}, together with the equality \(\cD = \fM_a(\cP)\) from Proposition \ref{prop: first duality}, grants the existence of a \(\G^{\fM_a(\cP)}\)-predictable process \(H\) such that \(H \in \cH^{\fM_a(\cP)}\), and
    \[
    1 + \int_0^T H_s d X_s \geq g \quad Q\text{-a.s. for all } Q \in \fM_a(\cP).
    \]
    As \(H \in \cH^{\cP}\) by Lemma \ref{lem: integrands}, and \(\cP\sim \fM_a (\cP)\) by the hypothesis that \NFLVR holds, we conclude that \(g \in \cC \cap C_b\).
    The proof is complete.
\end{proof}

Now, Theorem \ref{thm: dualities} follows directly from the previous two propositions. 
\begin{proof}[Proof of Theorem \ref{thm: dualities}] The duality \eqref{eq: first duality} follows from Proposition \ref{prop: first duality}, and the duality \eqref{eq: second duality} follows from Proposition~\ref{prop: second duality}.
\end{proof}

\subsection{\(\cP\) and \(\cM\) are convex and compact: proof of Theorem \ref{thm: compactness}} \label{sec: compactness}

It suffices to prove the claims for \(\cP\), as \(\cM\) is the special case with \(b \equiv 0\).
The convexity follows from \cite[Lemma~III.3.38, Theorem~III.3.40]{JS}, see the proof of \cite[Lemma 5.8]{CN22b} for more details.
Next, we prove compactness. First, we show that \(\cP\) is closed, which follows along the lines of the proof of \cite[Proposition 3.8]{CN22b}, and then we explain that \(\cP\) is also relatively compact.

Let \((P^n)_{n \in \mathbb{N}} \subset \cP\) be such that \(P^n \to P\) weakly. We have to show that \(P \in \cP\), i.e., we have to prove that \(P\in \fPas\) with differential characteristics in \(\Theta\).
For each \(n \in \mathbb{N}\), denote the \(P^n\)-characteristics of \(X\) by \((B^n, C^n)\).

Before we start the main part of this proof, we need a last bit of notation. Let 
\[\Omega' := \Omega \times \Omega \times C([0, T]; \bR^{d \times d})\] and denote the coordinate process on \(\Omega'\) by \(Y = (Y^{(1)}, Y^{(2)}, Y^{(3)})\). Further, set \(\cF' := \sigma (Y_s, s \in [0, T])\) and let \(\F' = (\cF'_s)_{s \in [0, T]}\) be the right-continuous filtration generated by \(Y\).

\emph{Step 1.} We start by showing that \(\{P^n \circ (X, B^n, C^n)^{-1} \colon n \in \mathbb{N}\}\) is tight. 
Since \(P^n \to P\), it suffices to prove tightness of \(\{P^n \circ (B^n, C^n)^{-1} \colon n \in \mathbb{N}\}\). We use Aldous' tightness criterion (\cite[Theorem~VI.4.5]{JS}), i.e., we show the following two conditions:
\begin{enumerate}
    \item[(a)]
    for every \(\varepsilon > 0\), there exists a \(K > 0\) such that
    \[
    \sup_{n \in \mathbb{N}} P^n \Big( \sup_{s \in [0, T]} \|B^n_s\| + \sup_{s \in [0, T]} \|C^n_s\| \geq K \Big) \leq \varepsilon;
    \]
    \item[(b)] 
    for every \(\varepsilon > 0\),
    \[
    \lim_{\theta \searrow 0} \limsup_{n \to \infty} \sup \big\{P^n (\|B^n_L - B^n_S\| + \|C^n_L - C^n_S\| \geq \varepsilon) \big\} = 0, 
    \]
    where the \(\sup\) is taken over all stopping times \(S, L \leq T\) such that \(S \leq L \leq S + \theta\).
\end{enumerate}
By the linear growth assumptions on \(b\) and \(a\) from Condition \ref{cond: compact, LG, cont}, a standard Gronwall argument (see, e.g., \cite[Problem~5.3.15]{KaraShre}) shows that 
\[
\sup_{P \in \cP} E^P \Big[ \sup_{s \in [0, T]} \|X_s\|^2 \Big] < \infty.
\]
Thus, we get
\begin{align} \label{eq: second moment bound}
\sup_{n \in \mathbb{N}} E^{P^n} \Big[ \sup_{s \in [0, T]} \|X_s\|^2 \Big] 
\leq \sup_{P \in \cP} E^P \Big[ \sup_{s \in [0, T]} \|X_s\|^2 \Big]
< \infty.
\end{align}
Using the linear growth assumption once again, we obtain that \(P^n\)-a.s.
\[
\sup_{s \in [0, T]} \|B^n_s\| + \sup_{s \in [0, T]} \|C^n_s\| \leq \C \Big( 1 + \sup_{s \in [0, T]} \|X_s\|^2 \Big),
\]
where the constant \(\C>0\) is independent of \(n\).
By virtue of \eqref{eq: second moment bound}, this bound yields~(a). For (b), take two stopping times \(S, L \leq T\) such that \(S \leq L \leq S + \theta\) for some \(\theta > 0\). Then, using again the linear growth assumptions, we get \(P^n\)-a.s.
\[
\|B^n_L - B^n_S\| + \|C^n_L - C^n_S\| \leq \C (L - S) \Big( 1 + \sup_{s \in [0, T]} \|X_s\|^2 \Big) \leq \C \theta \Big( 1 + \sup_{s \in [0, T]} \|X_s\|^2 \Big),
\]
which yields (b) by virtue of \eqref{eq: second moment bound}. We conclude that the family \(\{P^n \circ (X, B^n, C^n)^{-1} \colon n \in \mathbb{N}\}\) is tight. 
Up to passing to a subsequence, from now on we assume that \(P^n \circ (X, B^n, C^n)^{-1} \to Q\) weakly.

\emph{Step 2.} Next, we show that \(Y^{(2)}\) and \(Y^{(3)}\) are \(Q\)-a.s. absolutely continuous.
For \(M > 0\) and \(\omega \in \Omega\), define 
\[
\tau_M (\omega) := \inf \{t \in [0, T] \colon \|\omega(t)\| \geq M\} \wedge T.
\]
Furthermore, for \(\omega = (\omega^{(1)}, \omega^{(2)}) \in \Omega \times \Omega\), we set 
\[
\zeta_M (\omega) := \sup \Big\{ \frac{\|\omega^{(2)} (t \wedge \tau_M(\omega)) - \omega^{(2)}(s \wedge \tau_M(\omega))\|}{t - s} \colon 0 \leq s < t \leq T\Big\}.
\]
Similar to the proof of \cite[Lemma 3.6]{CN22b}, we obtain the existence of a dense set \(D \subset \bR_+\) such that for every \(M \in D\) the map \(\zeta_M\) is \(Q \circ (Y^{(1)}, Y^{(2)})^{-1}\)-a.s. lower semicontinuous. By the linear growth conditions and the definition of \(\tau_M\), for every \(M \in D\) there exists a constant \(\C = \C (M) > 0\) such that \(P^n(\zeta_M (X, B^n) \leq \C) = 1\) for all \(n \in \mathbb{N}\). As \(\zeta_M\) is \(Q \circ (Y^{(1)}, Y^{(2)})^{-1}\)-a.s. lower semicontinuous, \cite[Example 17, p. 73]{pollard} yields that 
\[
0 = \liminf_{n \to \infty} P^n (\zeta_M (X, B^n) > \C) \geq Q (\zeta_M (Y^{(1)}, Y^{(2)}) > \C). 
\]
Further, since \(D\) is dense in \(\bR_+\), we obtain that \(Y^{(2)}\) is \(Q\)-a.s. Lipschitz continuous, i.e., in particular absolutely continuous. Similarly, we get that \(Y^{(3)}\) is \(Q\)-a.s. Lipschitz and hence, absolutely continuous. 

\emph{Step 3.} 
Define the map \(
\Phi \colon \Omega' \to \Omega
\)
by \(\Phi (\omega^{(1)}, \omega^{(2)}, \omega^{(3)}) := \omega^{(1)}\). Clearly, we have \(Q \circ \Phi^{-1} = P\) and \(Y^{(1)} = X \circ \Phi\). 
In this step, we prove that \((\llambda \otimes Q)\)-a.e. \((dY^{(2)} /d \llambda, dY^{(3)}/ d \llambda) \in \Theta  \circ \Phi\). 
By \cite[Lemma~3.2]{CN22b}, the correspondence \((t, \omega) \mapsto \Theta (t, \omega)\) is continuous with compact values, as \(F\) is compact and \(b\) and \(a\) are continuous by Condition \ref{cond: compact, LG, cont}. Additionally, compactness of \(F\) and continuity of \(b\) and \(a\) provide compactness of 
\( \Theta([t, t+1],\omega) \) for every \( (t,\omega) \in \of 0, T \gs\).
Further, Condition \ref{cond: convex} guarantees that \(\Theta\) has convex values.
Hence, \cite[Lemma 3.4]{CN22b} implies, together with \cite[Theorem 5.35]{hitchi}, that 
\begin{align} \label{eq: conseq upper hemi}
\bigcap_{m \in \mathbb{N}} \oconv \Theta ([t, t + 1/m], \omega) \subset \Theta (t, \omega)
\end{align} 
for all \((t, \omega) \in \of 0, T\gs\). Here, \(\oconv\) denotes the closure of the convex hull.
By virtue of \cite[Corollary~8, p. 48]{diestel}, \(P^n\)-a.s. for all \(t \in [0, T - 1/m]\), we have 
\begin{equation}\label{eq: P as inclusion theta}
\begin{split}
m (B^n_{t + 1/m} - B^n_t, C^n_{t + 1/m} - C^n_t) &\in \oconv ( dB^n / d \llambda, d C^n / d \llambda) ([ t, t + 1/m ]) \\&\subset \oconv \Theta ([t, t + 1/m], X).
\end{split}
\end{equation}
Thanks to Skorokhod's coupling theorem, with little abuse of notation, there exist random variables \[(X^0, B^0, C^0), (X^1, B^1, C^1), (X^2, B^2, C^2), \dots\] defined on some probability space \((\Sigma, \mathcal{G}, R)\) such that \((X^0, B^0, C^0)\) has distribution \(Q\), \((X^n, B^n, C^n)\) has distribution \(P^n\circ (X, B^n, C^n)^{-1}\) and \(R\)-a.s. \((X^n, B^n, C^n) \to (X^0, B^0, C^0)\) in the uniform topology. 
We deduce from \cite[Lemmata 3.2, 3.3]{CN22b} that the correspondence \(\omega \mapsto \Theta ([t, t + 1 /m], \omega)\) is continuous. Furthermore, as \(\oconv \Theta ([t, t + 1/m], \omega)\) is compact (by \cite[Theorem 5.35]{hitchi}) for every \(\omega \in \Omega\), it follows from \cite[Theorem 17.35]{hitchi} that the correspondence \(\omega \mapsto \oconv \Theta ([t, t + 1/m], \omega)\) is upper hemicontinuous and compact-valued. Thus, by virtue of \eqref{eq: P as inclusion theta} and \cite[Theorem 17.20]{hitchi}, we get, \(R\)-a.s. for all \(t \in [0, T - 1/m]\), that 
\[
m (B^0_{t + 1/m} - B^0_t, C^0_{t + 1/m} - C^0_t) \in \oconv \Theta ([t, t + 1/m], X^0).
\]
Notice that \((\llambda \otimes R)\)-a.e. on \(\of 0, T\of\)
\[
(d B^0 / d \llambda, d C^0 / d \llambda) = \lim_{m \to \infty} m (B^0_{\cdot + 1/m} - B^0_\cdot, C^0_{\cdot + 1/m} - C^0_\cdot).
\]
Now, with \eqref{eq: conseq upper hemi}, we get that \(R\)-a.s. for \(\llambda\)-a.a. \(t < T\)
\[
(d B^0 / d \llambda, d C^0 / d \llambda) (t) \in \bigcap_{m \in \mathbb{N}} \oconv \Theta ([t, t + 1/m], X^0) \subset \Theta (t, X^0).
\]
This shows that \( (\llambda \otimes Q)\)-a.e. \((dY^{(2)} /d \llambda, dY^{(3)}/ d \llambda) \in \Theta \circ \Phi\). 

\emph{Step 4.} In the final step of the proof, we show that \(P \in \fPas\) and we relate \((Y^{(2)}, Y^{(3)})\) to the \(P\)-semimartingale characteristics of the coordinate process.
Thanks to \cite[Lemma 11.1.2]{SV}, there exists a dense set \(D \subset \bR_+\) such that \(\tau_M \circ \Phi\) is \(Q\)-a.s. continuous for all \(M \in D\). Take some \(M \in D\). Since \(P^n \in \fPas\), it follows from the definition of the first characteristic that the process \(X_{\cdot \wedge \tau_M} - B^n_{\cdot \wedge \tau_M}\) is a local \(P^n\)-\(\F_+\)-martingale. Furthermore, by the definition of the stopping time \(\tau_M\) and the linear growth assumption, we see that \(X_{\cdot \wedge \tau_M} - B^n_{\cdot \wedge \tau_M}\) is \(P^n\)-a.s. bounded by a constant independent of~\(n\), which, in particular, implies that it is a true \(P^n\)-\(\F_+\)-martingale. Now, it follows from \cite[Proposition~IX.1.4]{JS} that \(Y^{(1)}_{\cdot \wedge \tau_M \circ \Phi} - Y^{(2)}_{\cdot \wedge \tau_M \circ \Phi}\) is a \(Q\)-\(\F'\)-martingale. Recalling that \(Y^{(2)}\) is \(Q\)-a.s. absolutely continuous by Step 2, this means that \(Y^{(1)}\) is a \(Q\)-\(\F'\)-semimartingale with first characteristic \(Y^{(2)}\). Similarly, we see that the second characteristic is given by \(Y^{(3)}\). Finally, we need to relate these observations to the probability measure \(P\) and the filtration \(\F_+\). We denote by \(A^{p, \Phi^{-1}(\F_+)}\) the dual predictable projection of some process \(A\), defined on \((\Omega', \cF')\), to the filtration \(\Phi^{-1}(\F_+)\). Recall from \cite[Lemma 10.42]{jacod79} that, for every \(t \in [0, T]\), a random variable \(Z\) on \((\Omega', \cF')\) is \(\Phi^{-1}(\cF_{t+})\)-measurable if and only if it is \(\cF^*_t\)-measurable and \(Z (\omega^{(1)}, \omega^{(2)}, \omega^{(3)})\) does not depend on \((\omega^{(2)}, \omega^{(3)})\).
Thanks to Stricker's theorem (see, e.g., \cite[Lemma~2.7]{jacod80}), \(Y^{(1)}\) is a \(Q\)-\(\Phi^{-1} (\F_+)\)-semimartingale. 
Notice that each \(\tau_M \circ \Phi\) is a \(\Phi^{-1}(\F_+)\)-stopping time and recall from Step 3 that \((\llambda\otimes Q)\)-a.e. \((d Y^{(2)}/ d \llambda, d Y^{(3)}/ d \llambda) \in \Theta\). Hence, by definition of \(\tau_M\) and the linear growth assumption, for every \(M \in D\) and \(i, j = 1, \dots, d\), we have
\[
E^Q \big[ \on{Var} (Y^{(2, i)})_{\tau_M \circ \Phi} \big] + E^Q \big[ \on{Var}(Y^{(3, ij)})_{\tau_M \circ \Phi} \big] = E^Q \Big[ \int_0^{\tau_M \circ \Phi} \Big(\Big| \frac{d Y^{(2,i)}}{d \llambda} \Big| + \Big| \frac{d Y^{(3,ij)}}{d \llambda} \Big| \Big) d \llambda \Big] < \infty,
\]
where \(\on{Var} (\cdot)\) denotes the variation process.
By virtue of this, we get from \cite[Proposition 9.24]{jacod79} that the \(Q\)-\(\Phi^{-1}(\F_+)\)-characteristics of \(Y^{(1)}\) are given by \(((Y^{(2)})^{p, \Phi^{-1}(\F_+)}, (Y^{(3)})^{p, \Phi^{-1}(\F_+)})\). 
Hence, thanks to \cite[Lemma~2.9]{jacod80}, the coordinate process \(X\) is a \(P\)-\(\F_+\)-semimartingale whose characteristics \((B^P, C^P)\) satisfy \(Q\)-a.s.
\[(B^P, C^P) \circ \Phi = ((Y^{(2)})^{p, \Phi^{-1}(\F_+)}, (Y^{(3)})^{p, \Phi^{-1}(\F_+)}).\] Consequently, we deduce from the Steps~2 and 3, and \cite[Theorem 5.25]{HWY}, that \(P\)-a.s. \((B^P, C^P) \ll \llambda\) and 
\begin{align*}
(\llambda \otimes P) \big( (d B^P / d \llambda&, d C^P / d \llambda) \not \in \Theta \big) 
\\&= (\llambda \otimes Q \circ \Phi^{-1}) \big( (d B^P / d \llambda, d C^P / d \llambda) \not \in \Theta \big)
\\&= (\llambda \otimes Q) \big( E^Q [(d Y^{(2)} / d \llambda, d Y^{(3)} / d \llambda) | \Phi^{-1} (\F_+)_-] \not \in \Theta \circ \Phi \big) = 0,
\end{align*}
where we use \cite[Corollary 8, p. 48]{diestel} for the final equality.
This means that \(P \in \cP \) and therefore, \(\cP\) is closed.

To finish the proof, it remains to show that \(\cP\) is relatively compact. Thanks to Prohorov's theorem, it suffices to prove tightness, which follows from an application of Aldous' tightness criterion as in Step~1 above. We omit the details. \qed

\subsection{Equality of \(\cM\) and \(\cD\): proof of Theorem \ref{thm: M = D}} \label{sec: pf M = D}

We prepare the proof of Theorem \ref{thm: M = D} with two auxiliary lemmata.
\begin{lemma} \label{lem: MPR meas selection}
Assume that the Conditions \ref{cond: compact, LG, cont} and \ref{SA: MPR} hold.
Take \(P \in \cP\) and denote the differential characteristics of \(X\) under \(P\) by \((b^P, a^P)\). Then, there exists a predictable function \(\f \colon \of 0, T\gs \to F\) such that \((\llambda \otimes P)\)-a.e. \((b^P, a^P)  = (a (\f) \theta (\f), a (\f)),\) where \(\theta\) is the robust MPR from Condition \ref{SA: MPR}.
\end{lemma}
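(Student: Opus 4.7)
The plan is to obtain $\f$ via a measurable selection argument applied to the set-valued map that records which parameters in $F$ realize the actual differential characteristics of $X$ under $P$. Concretely, I would define
\[
\Gamma(t,\omega) := \bigl\{ f \in F \colon b(f,t,\omega) = b^P_t(\omega) \text{ and } a(f,t,\omega) = a^P_t(\omega) \bigr\},
\]
and look for a predictable selector of $\Gamma$. Since $P \in \cP$, the definition of $\cP$ gives $(\llambda\otimes P)$-a.e. $(b^P_t(\omega), a^P_t(\omega)) \in \Theta(t,\omega)$, which is exactly the statement that $\Gamma(t,\omega) \neq \emptyset$ outside a $(\llambda\otimes P)$-null set.

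Next I would verify the structural properties of $\Gamma$ needed for selection. The densities $(b^P, a^P)$ can be chosen to be predictable processes (this is the standard representation of absolutely continuous characteristics), while Condition \ref{cond: compact, LG, cont}(ii) gives that $(f,t,\omega) \mapsto (b(f,t,\omega), a(f,t,\omega))$ is continuous in $f$ and Borel jointly. Consequently
\[
\on{gr}\Gamma = \bigl\{(t,\omega,f) \in \of 0,T\gs \times F \colon b(f,t,\omega) = b^P_t(\omega),\ a(f,t,\omega) = a^P_t(\omega)\bigr\}
\]
belongs to $\mathcal{P}(\F) \otimes \mathcal{B}(F)$, where $\mathcal{P}(\F)$ denotes the predictable $\sigma$-field. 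Moreover, for each fixed $(t,\omega)$, the section $\Gamma(t,\omega)$ is a closed subset of the compact metric space $F$ by continuity of $b(\cdot,t,\omega)$ and $a(\cdot,t,\omega)$.

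At this point I would invoke a measurable selection theorem of Aumann/Kuratowski--Ryll-Nardzewski type for closed-valued correspondences with measurable graph: this produces a $\mathcal{P}(\F)$-measurable map $\f_0$ defined on $\{\Gamma \neq \emptyset\}$ with $\f_0(t,\omega) \in \Gamma(t,\omega)$. Extending $\f_0$ by any fixed element $f_* \in F$ on the $(\llambda\otimes P)$-null complement yields a predictable function $\f \colon \of 0,T\gs \to F$ satisfying $(b^P_t(\omega), a^P_t(\omega)) = (b(\f(t,\omega),t,\omega), a(\f(t,\omega),t,\omega))$ for $(\llambda\otimes P)$-a.e. $(t,\omega)$. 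Finally, applying the identity $b = a\theta$ from Condition \ref{SA: MPR} to the right-hand side gives the representation claimed in the lemma.

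I expect the only delicate step to be ensuring that the selector can be taken genuinely predictable (not merely universally measurable). This is handled by working from the outset on $\of 0,T\gs$ equipped with $\mathcal{P}(\F)$ and exploiting that $\on{gr}\Gamma \in \mathcal{P}(\F) \otimes \mathcal{B}(F)$ together with the closed-valuedness of $\Gamma$ in the Polish space $F$. The remaining ingredients are routine.
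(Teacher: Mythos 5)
Your proposal is essentially the same argument as the paper's: both reduce the lemma to a measurable selection problem for the preimage correspondence of the Carath\'eodory map $(f,t,\omega)\mapsto(b(f,t,\omega),a(f,t,\omega))$, and the paper's device of first redefining the target $\pi$ on the exceptional predictable set $G$ is the mirror image of your choice to select on $\{\Gamma\neq\emptyset\}$ and then extend by a fixed $f_*\in F$. The one place where your write-up is slightly loose is exactly the step you flag as delicate: a measurable graph together with closed values does \emph{not} in general yield a genuinely $\mathscr{P}$-measurable selector (Aumann's theorem only gives a selector measurable for the completed $\sigma$-field). What saves you is compactness of $F$: for a correspondence with nonempty closed values into a $\sigma$-compact Polish space, a measurable graph does imply measurability, so Kuratowski--Ryll-Nardzewski then applies and produces a $\mathscr{P}$-measurable selector. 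The paper sidesteps this by citing the Filippov implicit function lemma (Aliprantis--Border, Theorem~18.17) directly, which packages the Carath\'eodory structure and compactness of $F$ into one statement; you should either cite that or the companion result that measurable graph plus $\sigma$-compactness of the target gives measurability, rather than invoking ``Aumann/KRN type'' generically.
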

\begin{proof}
    Let \(\mathscr{P}\) be the predictable \(\sigma\)-field on \(\of 0, T\gs\). Thanks to \cite[Lemma 2.9]{CN22}, the graph \(\on{gr} \Theta\) is 
\( \mathscr{P} \otimes \mathcal{B}(\bR^d) \otimes \mathcal{B}(\mathbb{S}^d_+)\)-measurable. Thus, 
\begin{align*}
G :\hspace{-0.1cm}&= \big\{ (t, \omega) \in \of 0, T\gs \colon (b^P_t (\omega), a^P_t (\omega)) \not \in \Theta ( t, \omega )\big\}
\\&= \big\{ (t, \omega) \in \of 0, T \gs \colon (t,\omega, b^P_t (\omega), a^P_t (\omega)) \not \in \on{gr} \Theta \big\} \in \mathscr{P}.
\end{align*}
We define 
\[
\pi (t, \omega) := \begin{cases} (b (f_0, t, \omega), a (f_0, t, \omega)),& \text{if } (t, \omega) \in G,\\
(b^P_t (\omega), a^P_t (\omega)),& \text{if } (t, \omega) \not \in G, \end{cases}
\]
where \(f_0 \in F\) is arbitrary but fixed. 
Thanks to the measurable implicit function theorem \cite[Theorem~18.17]{hitchi}, as \((b, a)\) is a Carath\'eodory function on \(F \times \of 0, T\gs\) in the sense that it is continuous in the \(F\) and \(\mathscr{P}\)-measurable in the \(\of 0, T\gs\) variable, the correspondence \(\gamma \colon \of 0, T\gs \twoheadrightarrow F\) defined by 
\[
\gamma (t, \omega) := \big\{ f \in F \colon (b (f, t, \omega), a (f, t, \omega)) = \pi (t, \omega) \big\}
\]
is \(\mathscr{P}\)-measurable and it admits a measurable selector, i.e.,  there exists a \(\mathscr{P}\)-measurable function \(\f \colon \of 0, T \gs \to F\) such that \(\pi (t, \omega) = (b (\f(t, \omega), t, \omega), a (\f(t, \omega), t, \omega))\) for all \((t, \omega) \in \of 0, T\gs\). Since \(P \in \cP\), we have \((\llambda \otimes P)\)-a.e. \(\pi = (b^P, a^P)\), and further \(b = a \theta\) by Condition \ref{SA: MPR}. Putting these pieces together, we conclude that \(\f\) has all claimed properties.
\end{proof}

The second lemma can be seen as an extension of Bene\u{s}' condition (\cite[Corollary~3.5.16]{KaraShre}). To prove the lemma we use a local change of measure in combination with a Gronwall type argument (see, e.g., \cite{CFY,C20} for related strategies). 
\begin{lemma} \label{lem: girsanov}
Let \(P \in \fPas\), denote the differential characteristics of \(X\) under \(P\) by \((b^P, a^P)\) and define the continuous local \(P\)-martingale part of the coordinate process \(X\) by
\[
X^c := X - X_0 - \int_0^\cdot b^P_s ds.
\]
Further, let \(c^P\) be a predictable process.
Assume the following three conditions:
\begin{enumerate}
    \item[\textup{(a)}]
    For every \(N \in \mathbb{N}\) there exists a constant \(C = C_N > 0\) such that \(P\)-a.s.
\begin{align} \label{eq: loc bdd QV}
\int_0^{T_N} \langle c^P_s, a^P_s c^P_s \rangle ds \leq C,
\end{align}
where
\[
T_N = \inf \{ t \in [0, T] \colon \|X_t\| \geq N\} \wedge T.
\]
\item[\textup{(b)}]
  There exists a constant \(C > 0\) such that \(P\)-a.s. for \(\llambda\)-a.a. \(t \in [0, T]\)
\begin{align*}
\|b^P_t + a^P_t c^P_t\|^2 + \on{tr} \big[ a^P_t \big] \leq C \Big(1 +  \sup_{s \in [0, t]} \|X_s\|^2 \Big).
\end{align*}  
\item[\textup{(c)}] There exists a constant \(C > 0\) such that \(P\)-a.s. \(\|X_0\| \leq C\).
\end{enumerate}
Then, the stochastic exponential
\[
Z^{P} := \exp \Big( \int_0^\cdot \langle c^P_s , d X^c_s \rangle - \frac{1}{2} \int_0^\cdot \langle c^P_s, a^P_s c^P_s \rangle ds \Big)
\]
is a well-defined \(P\)-martingale.
\end{lemma}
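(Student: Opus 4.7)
The plan is to combine a localization argument with a Gronwall estimate under a locally equivalent measure. First I would introduce the stopped processes $Z^{P,N} := (Z^P)^{T_N}$. By Condition (a), the quadratic variation $\int_0^{T_N}\langle c^P_s, a^P_s c^P_s\rangle\, ds$ is bounded by a deterministic constant $C_N$, so $E^P[\exp(\tfrac{1}{2}\int_0^{T_N}\langle c^P_s, a^P_s c^P_s\rangle\, ds)] \leq e^{C_N/2} < \infty$ and Novikov's criterion shows that $Z^{P,N}$ is a true $P$-martingale with $E^P[Z^P_{T_N}] = 1$. Hence $dP_N/dP := Z^P_{T_N}$ defines a probability measure $P_N \sim P$ on $\cF_{T_N}$. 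By Girsanov's theorem (applied to the stopped process), under $P_N$ the canonical process satisfies, on $\of 0, T_N \gs$,
\[
dX_t = (b^P_t + a^P_t c^P_t)\, dt + dM_t,
\]
where $M$ is a continuous local $P_N$-martingale with $\langle M \rangle = \int_0^\cdot a^P_s\, ds$.

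Next, combining the Burkholder-Davis-Gundy inequality, Cauchy-Schwarz, condition (b) and condition (c), I would derive, for every $t \in [0,T]$, an inequality of the form
\[
E^{P_N}\Big[\sup_{s \leq t \wedge T_N}\|X_s\|^2\Big] \leq K_1 + K_2 \int_0^t \Big(1 + E^{P_N}\Big[\sup_{r \leq s \wedge T_N}\|X_r\|^2\Big]\Big)\, ds,
\]
where the constants $K_1, K_2$ depend only on $C$, $T$ and the BDG constant, and are independent of $N$. Note that the left-hand side is finite for each fixed $N$ because $\|X\| \leq N$ on $\of 0, T_N \gs$, so Gronwall's lemma applies and yields a uniform constant $K^*$ with $\sup_N E^{P_N}[\sup_{s \leq T \wedge T_N}\|X_s\|^2] \leq K^*$. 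Markov's inequality then gives
\[
P_N(T_N < T) = P_N\Big(\sup_{s \leq T_N}\|X_s\| \geq N\Big) \leq \frac{K^*}{N^2} \xrightarrow{N \to \infty} 0.
\]

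Finally, since $X$ is continuous $P$-a.s., $T_N \nearrow T$ and $Z^P_{T_N} \to Z^P_T$ $P$-a.s., while $\mathbf{1}_{\{T_N = T\}} \nearrow 1$ $P$-a.s. Splitting
\[
1 = E^P[Z^P_{T_N}] = E^P\big[Z^P_T\, \mathbf{1}_{\{T_N = T\}}\big] + E^P\big[Z^P_{T_N}\, \mathbf{1}_{\{T_N < T\}}\big] = E^P\big[Z^P_T\, \mathbf{1}_{\{T_N = T\}}\big] + P_N(T_N < T),
\]
and letting $N \to \infty$ produces $E^P[Z^P_T] = 1$ by monotone convergence on the first term and by the previous estimate on the second. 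Since $Z^P$ is a non-negative continuous local $P$-martingale with $Z^P_0 = 1$, the identity $E^P[Z^P_T] = Z^P_0$ upgrades it from a supermartingale to a true $P$-martingale.

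The main obstacle is the uniform Gronwall bound in the middle step: I must execute the BDG and Cauchy-Schwarz estimates carefully on the stochastic interval $\of 0, T_N \gs$ (using the a priori bound $\|X\| \leq N$ to guarantee finiteness of $E^{P_N}[\sup_{s \leq T \wedge T_N}\|X_s\|^2]$ before invoking Gronwall) while ensuring that the resulting constants $K_1, K_2$ do not depend on $N$. Once this is achieved, the $1/N^2$ decay of $P_N(T_N < T)$ drops out immediately and the rest of the argument is a short bookkeeping exercise.
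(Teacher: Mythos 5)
Your proposal is correct and follows essentially the same route as the paper's proof: Novikov for the stopped exponential, a Girsanov change of measure to $P_N$ (the paper's $Q_N$), a Gronwall estimate with constants independent of $N$ based on conditions (b) and (c), Chebyshev to get $P_N(T_N<T)\le K^*/N^2\to 0$, and monotone convergence to conclude $E^P[Z^P_T]=1$. The only cosmetic difference is that you split $E^P[Z^P_{T_N}]$ explicitly over $\{T_N=T\}$ and $\{T_N<T\}$, whereas the paper passes directly to $\lim_N E^P[Z^P_T\1_{\{T_N=T\}}]=\lim_N Q_N(T_N=T)$; these are the same computation.
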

\begin{proof}
For a moment, we fix \(N \in \mathbb{N}\). Thanks to the assumption (a), Novikov's condition implies that the stopped process \(Z^P_{\cdot \wedge T_N}\) is a \(P\)-martingale and the global process \(Z^P\) is a well-defined, non-negative local \(P\)-martingale, i.e., in particular a \(P\)-supermartingale. Thus, \(Z^P\) is a \(P\)-martingale if and only if \(E^P [Z^P_T] = 1\). In the following we prove this property.
Define a probability measure \(Q_N\) via the Radon--Nikodym density \(d Q_N / d P = Z^P_{T \wedge T_N}\). As \(Q_N \sim P\),
Girsanov's theorem (\cite[Theorem~III.3.24]{JS}) yields that \(X\) is a \(Q_N\)-semimartingale with absolutely continuous characteristics whose densities \((b^{Q_N}, a^{Q_N})\) are given by 
\[
b^{Q_N} = b^P + a^P c^P \1_{\of 0, T_N\gs}, \quad a^{Q_N} = a^P.
\]
By assumption (b) and the equivalence \(Q_N \sim P\), there exists a constant \(C > 0\) such that, \(Q_N\)-a.s. for \(\llambda\)-a.a. \(t \in [0, T_N]\), we have
\begin{align} \label{eq: linear growth modi drift}
\|b^{Q_N}\|^2 + \on{tr} \big[ a^{Q_N}_t \big] \leq C \Big(1 +  \sup_{s \in [0, t]} \|X_s\|^2 \Big).
\end{align}
Now, using standard arguments (see \cite[pp. 389--390]{KaraShre}), hypothesis (c) and \eqref{eq: linear growth modi drift}, we get, for all \(t \in [0, T]\), that
\begin{align*}
E^{Q_N} \Big[ \sup_{s \in [0, t \wedge T_N]} \|X_s\|^2 \Big] 
&\leq C \Big( 1 + E^{Q_N} \Big[ \int_0^{t \wedge T_N} \big( \|b^{Q_N}_s\|^2 + \on{tr} \big[ a^{Q_N}_s\big] \big) ds \Big] \Big)
\\&\leq C \Big( 1 + \int_0^t E^{Q_N} \Big[ \sup_{r \in [0, s \wedge T_N]} \|X_r\|^2 \Big] ds \Big).
\end{align*}
where the constant \(C > 0\) is independent of \(N\). Gronwall's lemma yields that 
\[
E^{Q_N} \Big[ \sup_{s \in [0, t \wedge T_N]} \|X_s\|^2 \Big] \leq C e^{C T}, \qquad t \in [0, T].
\]
Hence, by Chebyshev's inequality, we get that 
\begin{align*}
Q_N (T_N \leq T) = Q_N \Big( \sup_{s \in [0, T \wedge T_N]} \|X_s\| \geq N \Big)
 \leq  \frac{ C e^{C T} }{N^2} \to 0 \text{ with } N \to \infty.
\end{align*}
Finally, using the monotone convergence theorem for the first equality, we obtain
\begin{align*}
E^P \big[ Z^P_T \big] &= \lim_{N \to \infty} E^P \big[ Z^P_t \1_{\{T_N > T\}} \big] = \lim_{n \to \infty} Q_N (T_N > T) = 1,
\end{align*}
which completes the proof.
\end{proof}

\begin{proof}[Proof of Theorem \ref{thm: M = D}]
Take \(P \in \cP\) and denote the differential characteristics of \(X\) under \(P\) by \((b^P, a^P)\). By Lemma \ref{lem: MPR meas selection}, there exists a predictable function \(\f\) such that \((\llambda \otimes P)\)-a.e. \((b^P, a^P) = (b (\f), a(\f))\). Let \(\theta\) be the robust MPR from Condition \ref{SA: MPR} and 
define 
\begin{align} \label{eq: ZP}
Z^{P} := \exp \Big(  - \int_0^\cdot \langle \theta (\f_s) , d X^c_s \rangle - \frac{1}{2} \int_0^\cdot \langle \theta (\f_s), a(\f_s) \theta (\f_s) \rangle ds \Big).
\end{align}
Using the Conditions \ref{cond: compact, LG, cont} and \ref{SA: MPR}, it follows from Lemma \ref{lem: girsanov} that \(Z^P\) is a \(P\)-martingale. Hence, we may define a probability measure \(Q \sim P\) via the Radon--Nikodym derivative \(dQ/dP = Z^P_T\).
By Girsanov's theorem (\cite[Theorem~III.3.24]{JS}), \(Q \in \fPas\) and the differential characteristics of \(X\) under \(Q\) are given by \((b (\f) - a (\f) \theta (\f), a(\f)) = (0, a (\f)) \in \tilde{\Theta}\). In particular, this shows that \(Q \in \M\).

Conversely, take \(Q \in \cM\) and let \(a^Q\) be the second differential characteristic of \(X\) under~\(Q\). 
By Lemma~\ref{lem: MPR meas selection}, which we can use because the zero function is a feasible choice for the coefficient~\(b\), there exists a predictable function
\( \f \colon \of 0, T\gs \hspace{0.1cm}\to F \) 
such that \((\llambda \otimes Q)\)-a.e. \(a^Q = a (\f)\). Let \(\theta\) be the robust MPR from Condition \ref{SA: MPR} and define
\[
Z^{Q} := \exp \Big(  \int_0^\cdot \langle \theta (\f_s) , d X_s \rangle - \frac{1}{2} \int_0^\cdot \langle \theta(\f_s), a(\f_s) \theta(\f_s) \rangle ds \Big).
\]
Using the Conditions \ref{cond: compact, LG, cont} and \ref{SA: MPR}, we deduce from Lemma \ref{lem: girsanov} that \(Z^Q\)
is a \(Q\)-martingale. Therefore, we can define a measure \(P \sim Q\) via the Radon--Nikodym derivative \(dP / dQ = Z^Q_T\). As in the previous case, we deduce from Girsanov's theorem that \(P \in \fPas\) with differential characteristics
\((a(\f) \theta(\f), a(\f)) 
= (b (\f), a (\f)) \in \Theta\). We conclude that \(P \in \cP\).
\end{proof}

\subsection{Duality theory for robust utility maximization: proofs of Theorems \ref{thm: main positive power exponential} and \ref{thm: main negative power log}}
\label{sec: proof duality utility}
The idea of proof is to apply the abstract duality results given by \cite[Theorems 2.10 and 2.16]{kupper}. This requires some care to account for the lack of boundedness from above of the power utility \(U(x) = \frac{x^p}{p}\), \( p \in (0,1)\), and the log utility \(U(x) = \log(x)\).

\subsubsection{Some preparations}
The following lemma is a generalization of \cite[Corollary 2]{GriMa03} in the sense that, instead of Brownian motion, we consider a continuous local martingale with uniformly elliptic volatility. The novelty in our proof is the application of time change and comparison arguments to deduce certain moment bounds for the driving local martingale from those of Brownian motion.

\begin{proposition} \label{prop: integrability of stochastic exponential}
Suppose that Condition \ref{cond: unif ellipticity vola} holds.
Let \(Q \in \cM\) and denote the differential characteristics of \(X\) under \(P\) by \((b^Q = 0, a^Q)\). Furthermore, take a predictable process \(c^Q\) of linear growth, i.e., such that there exists a constant \(\C > 0\) such that
\[
\|c_t^Q (\omega)\| \leq \C \Big(1 + \sup_{s \in [0, t]} \|\omega (s)\|\Big)
\]
for \((\llambda \otimes Q)\)-a.a. \((t, \omega) \in \of 0, T\gs\). Define a continuous local \(P\)-martingale by
\[
Z^Q := \exp \Big( \int_0^\cdot \langle c^Q_s, d X_s\rangle - \frac{1}{2} \int_0^\cdot \langle c^Q_s, a^Q_s c^Q_s \rangle ds \Big).
\]
For every \(p \geq 1\), we have 
\[
E^Q \big[ (Z^Q_T)^p \big] < \infty.
\]
\end{proposition}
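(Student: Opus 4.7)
My plan is to combine a change-of-measure argument with exponential moment estimates under the new measure. Set \(M_t := \int_0^t \langle c^Q_s, dX_s \rangle\), which is a continuous local \(Q\)-martingale with \(\langle M \rangle_t = \int_0^t \langle c^Q_s, a^Q_s c^Q_s \rangle ds\). The case \(p = 1\) is immediate from Lemma \ref{lem: girsanov} applied with \(b^P = 0\), \(c^P = c^Q\) and \(a^P = a^Q\): its hypotheses (a)--(c) are satisfied because \(c^Q\) is of linear growth, \(a^Q\) is bounded thanks to Condition \ref{cond: unif ellipticity vola}, and \(X_0\) is a constant; hence \(Z^Q\) is a true \(Q\)-martingale.

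For \(p > 1\), I would use the algebraic decomposition
\[
(Z^Q_T)^p \;=\; \cE(pM)_T \cdot \exp\!\Bigl(\tfrac{p(p-1)}{2} \langle M \rangle_T\Bigr).
\]
A second application of Lemma \ref{lem: girsanov}, this time with integrand \(p\,c^Q\) in place of \(c^Q\), shows that \(\cE(pM)\) is itself a true \(Q\)-martingale, so the recipe \(dQ^p/dQ := \cE(pM)_T\) defines a probability measure \(Q^p\) on \(\cF_T\). Girsanov's theorem then tells us that under \(Q^p\) the coordinate process \(X\) is again a continuous semimartingale, with drift \(p\, a^Q c^Q\)---still of linear growth in the path of \(X\)---and the same uniformly elliptic and bounded diffusion \(a^Q\). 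Consequently
\[
E^Q\bigl[(Z^Q_T)^p\bigr] \;=\; E^{Q^p}\!\Bigl[\exp\!\bigl(\tfrac{p(p-1)}{2}\langle M \rangle_T\bigr)\Bigr],
\]
and the linear growth of \(c^Q\) combined with the boundedness of \(a^Q\) gives \(\langle M \rangle_T \leq K_p\bigl(1 + \sup_{s \leq T}\|X_s\|^2\bigr)\) for an explicit constant \(K_p\). The task thus reduces to controlling \(E^{Q^p}\bigl[\exp\bigl(K \sup_{s \leq T}\|X_s\|^2\bigr)\bigr]\).

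The decisive step, which I expect to be the main obstacle, is this exponential moment estimate under \(Q^p\). The uniform ellipticity from Condition \ref{cond: unif ellipticity vola} is crucial: it permits the continuous local \(Q^p\)-martingale part of \(X\) to be represented via Dambis--Dubins--Schwarz as a time-changed Brownian motion on an interval of length at most \(\K T\), so that the exponential moment can be related to classical Fernique-type bounds for Brownian motion in the presence of a linear-growth drift that is absorbed using Gronwall's inequality. Since Fernique's inequality only yields finiteness below a critical threshold, reaching arbitrary \(p \geq 1\) calls for a bootstrapping argument in the spirit of \cite{GriMa03}: one slices \([0, T]\) into subintervals short enough for Novikov's criterion to apply directly on each piece, and glues the resulting local bounds together via iterated H\"older inequalities, using the polynomial moment estimates of \(\sup_{s \leq T}\|X_s\|\) under \(Q^p\) (available from the linear-growth drift plus Gronwall) to absorb the cross terms.
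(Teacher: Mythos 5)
Your overall strategy points in the right direction and shares its two key ingredients with the paper's proof (a Grigelionis--Mackevi\v{c}ius-type reduction to short time intervals, and an exponential moment bound for \(\sup_{s\le T}\|X_s\|^2\) obtained by time-changing to Brownian motion under the uniform ellipticity of Condition \ref{cond: unif ellipticity vola}). However, the order in which you deploy these ingredients creates a genuine gap. After passing to \(Q^p\) you reduce the problem to \(E^{Q^p}[\exp(\tfrac{p(p-1)}{2}\langle M\rangle_T)]\) and propose to bound this by \(E^{Q^p}[\exp(K\sup_{s\le T}\|X_s\|^2)]\) with \(K\) of order \(p^2\C^2\K T\). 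That upper bound is genuinely \(+\infty\) for large \(p\): under the two-sided bound of Condition \ref{cond: unif ellipticity vola}, \(\sup_{s\le T}\|X_s\|^2\) has Gaussian-type tails, so \(E[\exp(K\sup_{s\le T}\|X_s\|^2)]<\infty\) only for \(K\) below a threshold of order \(1/(2\K T)\). The chain of inequalities therefore breaks at this point; the partitioning of \([0,T]\) must be introduced \emph{before} one trades \(\langle M\rangle\) for \(\sup\|X\|^2\), so that the constant in the exponent carries a factor \((t_n-t_{n-1})\) and can be made arbitrarily small. This is precisely what \cite[Corollary 1]{GriMa03} packages: it reduces \(E^Q[(Z^Q_T)^p]<\infty\) directly (under \(Q\), with no measure change and hence no drift) to \(E^Q[\exp(\C_p\int_{t_{n-1}}^{t_n}\langle c^Q_s,a^Q_s c^Q_s\rangle\,ds)]<\infty\) over a sufficiently fine partition, and the paper then only needs \(E^Q[\exp(\epsilon\sup_{s\le T}\|X_s\|^2)]<\infty\) for small \(\epsilon\). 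Your detour through \(Q^p\) is not only unnecessary but counterproductive, because it equips \(X\) with a linear-growth drift, which makes the exponential moment estimate strictly harder.

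Two of the absorption mechanisms you invoke at the decisive step also do not work as stated. Gronwall's inequality controls \emph{polynomial} moments of \(\sup_{s\le T}\|X_s\|\) in the presence of a linear-growth drift; it cannot be used to absorb such a drift inside an \emph{exponential} moment of \(\sup_{s\le T}\|X_s\|^2\). Likewise, the cross terms arising in the iterated H\"older gluing are exponential functionals of \(\langle M\rangle\), not polynomial ones, so polynomial moments of \(\sup_{s\le T}\|X_s\|\) cannot absorb them; one again needs exponential moments with small constants. The paper's actual mechanism for the core estimate is worth noting, since ``DDS plus Fernique'' undersells it: one applies It\^o's formula to \(\|X_t\|^2\), time-changes by \(L_t=\int_0^t\langle X_s,a^Q_sX_s\rangle/(\K\|X_s\|^2)\,ds\) so that the one-dimensional local martingale \(\int\langle X,dX\rangle\) acquires quadratic variation \(\int\K\|X_{S_s}\|^2ds\), represents it by a scalar Brownian motion, compares the resulting one-dimensional SDE pathwise (via Tanaka's formula and a local-time argument) with \(dY_t=2\sqrt{\K|Y_t|}\,dW_t+d\K^3dt\), identifies \(Y\) in law with \(\|B_{\K\cdot}\|^2\) for a \(d\K^2\)-dimensional Brownian motion \(B\), and finishes with Doob's maximal inequality applied to the submartingale \(\exp(\alpha\|B_{\K s}\|^2)\) for \(\alpha\) below the Gaussian threshold. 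If you replace your measure-change step by a direct appeal to (or reproof of) \cite[Corollary 1]{GriMa03} under \(Q\), and carry out this time-change/comparison argument in detail, your plan becomes the paper's proof.
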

\begin{proof}
Throughout the proof, fix \(p \geq 1\).
By virtue of \cite[Corollary 1]{GriMa03}, it suffices to prove that there exists a partition \(0 = t_0 < t_1 < \ldots < t_m = T\) of the interval \([0, T]\) such that 
\[
E^Q \Big[ \exp \Big( \C_p \int_{t_{n - 1}}^{t_n} \langle c^Q_s, a^Q_s c^Q_s\rangle ds \Big) \Big] < \infty, \quad n = 1, 2, \dots, m.
\]
Fix \(n \in \{1, \dots, m\}\).
By the linear growth assumption on \(c^Q\) and the \((\llambda \otimes Q)\)-a.e. boundedness assumption on \(a^Q\) (which stems from the definition of \(\cM\) and Condition \ref{cond: unif ellipticity vola}), we have
\begin{align*}
    E^Q \Big[ \exp \Big( \C_p \int_{t_{n - 1}}^{t_n} \langle c^Q_s, a^Q_s c^Q_s\rangle ds \Big) \Big] &\leq \C E^Q \Big[ \exp \Big( \C_p (t_n - t_{n - 1}) \sup_{s \in [0, T]} \|X_s\|^2 \Big) \Big],
\end{align*}
where \(\C_p > 0\) depends on \(T > 0\) and the power \(p\).
It\^o's formula shows that 
\[
d \|X_t\|^2 = 2 \langle X_t, d X_t \rangle + \on{tr} \big[ a^Q_t \big] dt.
\]
Thanks to Condition \ref{cond: unif ellipticity vola}, there exists a constant \(\K \in \mathbb{N}\) such that 
\begin{align}\label{eq: ellipticity constant}
\frac{\|\xi\|^2}{\K} \leq \langle \xi, a (f, t, \omega) \xi \rangle \leq \K \|\xi\|^2 
\end{align}
for all \((\xi, f, t, \omega) \in \bR^d \times F \times \of 0, T \gs\). 
Next, define 
\[
L := \int_0^\cdot \Big[\frac{\langle X_s, a^Q_s X_s \rangle }{ \K \|X_s\|^2} \1_{\{X_s \not = 0\}} + \1_{\{X_s = 0\}} \Big] ds,
\]
and \(S_t := \inf \{s \in [0, T] \colon L_s \geq t \}\) for \(t \in [0, L_T]\). Notice from \eqref{eq: ellipticity constant} that \(L\) is strictly increasing, continuous and \(L_T \leq T\). 
Hence, \(S\) is continuous and the inverse of \(L\).
In the following we use standard results from \cite[Section~V.1]{RY} on time changed continuous semimartingales without explicitly mentioning them. We obtain that, for \(t \in [0, L_T]\),
\[
d \|X_{S_t}\|^2 = 2 \langle X_{S_t}, d X_{S_t} \rangle + \on{tr} \big[ a^Q_{S_t} \big] d S_t.
\]
Further, we obtain that, for \(t \in [0, L_T]\), 
\begin{align*}
\int_0^{t} \Big[\frac{\K \|X_{S_s}\|^2}{\langle X_{S_s}, a^Q_{S_s} X_{S_s}\rangle} \1_{\{X_{S_s} \not = 0\}} + \1_{\{X_{S_s} = 0\}} \Big] ds &= \int_0^{t} \Big[\frac{\K \|X_{S_s}\|^2}{\langle X_{S_s}, a^Q_{S_s} X_{S_s}\rangle} \1_{\{X_{S_s} \not = 0\}} + \1_{\{X_{S_s} = 0\}} \Big] dL_{S_s}
\\&= \int_0^{S_t} \Big[ \frac{\K \|X_s\|^2}{\langle X_s, a^Q_s X_s\rangle} \1_{\{X_s \not = 0\}} + \1_{\{X_s = 0\}}\Big] dL_s = S_t.
\end{align*}
Hence, 
\[
\1_{[0, L_T]} (s) d S_s = \1_{[0, L_T]} (s) \Big[\frac{\K \|X_{S_s}\|^2}{\langle X_{S_s}, a^Q_{S_s} X_{S_s}\rangle} \1_{\{X_{S_s} \not = 0\}} + \1_{\{X_{S_s} = 0\}} \Big] ds,
\]
which implies, for \(t \in [0, L_T]\), that 
\[
d \|X_{S_t}\|^2 = 2 \langle X_{S_t}, d X_{S_t} \rangle + \on{tr} \big[ a^Q_{S_t} \big] \Big[\frac{\K \|X_{S_t}\|^2}{\langle X_{S_t}, a^Q_{S_t} X_{S_t}\rangle} \1_{\{X_{S_t} \not = 0\}} + \1_{\{X_{S_t} = 0\}} \Big] dt.
\]
Notice that \(\int_0^{\cdot \wedge L_T} \langle X_{S_t}, d X_{S_t}\rangle\) is a continuous local martingale (for a time-changed filtration) with second characteristic
\begin{align*}
\int_0^{\cdot \wedge L_T} \langle X_{S_s}, a^Q_{S_s} X_{S_s} \rangle dS_s 
&= \int_0^{\cdot \wedge L_T}  \frac{ \langle X_{S_s}, a^Q_{S_s} X_{S_s} \rangle \K \|X_{S_s}\|^2}{\langle X_{S_s}, a^Q_{S_s} X_{S_s}\rangle} \1_{\{X_{S_s} \not = 0\}} ds
= \int_0^{\cdot \wedge L_T} \K \|X_{S_s}\|^2 ds.
\end{align*}
By a classical representation theorem for continuous local martingales (see \cite[Theorem III.7.1\('\), p.~90]{IW}), on a standard extension of the underlying filtered probability space, there exists a one-dimensional standard Brownian motion \(W\) such that, for all \(t \in [0, L_T]\),
\[
d \|X_{S_t}\|^2 = 2 \sqrt{\K} \|X_{S_t}\| d W_t + \on{tr} \big[ a^Q_{S_t} \big] \Big[\frac{\K \|X_{S_t}\|^2}{\langle X_{S_t}, a^Q_{S_t} X_{S_t}\rangle} \1_{\{X_{S_t} \not = 0\}} + \1_{\{X_{S_t} = 0\}} \Big] dt.
\]
By virtue of \eqref{eq: ellipticity constant}, we get that \(Q\)-a.s. for \(\llambda\)-a.a. \(t \in [0, L_T]\)
\begin{align} \label{eq: bound drift}
\on{tr} \big[ a^Q_{S_t} \big] \Big[ \frac{\K \|X_{S_t}\|^2}{\langle X_{S_t}, a^Q_{S_t} X_{S_t}\rangle} \1_{\{X_{S_t} \not = 0\}} + \1_{\{X_{S_t} = 0\}} \Big] 
& \leq \on{tr} \big[ a^Q_{S_t} \big] \big[ \K^2 \1_{\{X_{S_t} \not = 0\}} + \1_{\{X_{S_t} = 0\}} \big]
\leq d \K^3.
\end{align}
Let \(Y\) be a continuous semimartingale with dynamics
\begin{align} \label{eq: SDE Y}
d Y_t = 2 \sqrt{\K |Y_t|} d W_t + d \K^{3}\hspace{0.05cm} dt, \quad Y_0 = \|x_0\|^2.
\end{align}
Such a process exists as its SDE satisfies strong existence (see, e.g., \cite[Chapter IX]{RY} or \cite[Chapter~5]{KaraShre}). Furthermore, as the SDE 
\[
d Z_t = 2 \sqrt{\K |Z_t|} d W_t, \quad Z_0 = 0,
\]
has the (up to indistinguishability) unique solution \(Z \equiv 0\), it follows from \cite[Proposition IX.3.6]{RY} that \(Q\)-a.s. \(Y \geq 0\).
Next, we use a comparison argument as in the proofs of \cite[Theorem IX.3.7]{RY} or \cite[Lemma 5.6]{C20} to relate the processes \(\|X_S\|^2\) and \(Y\).
Notice that \(Q\)-a.s. for all \(t \in [0, T]\)
\begin{align*}
\int_0^{t \wedge L_T} &\frac{\1_{\{Y_s < \|X_{S_s}\|^2\}}}{4 \K |\|X_{S_s}\|^2 - Y_s|} d [ \|X_{S}\|^2 - Y,\|X_S\|^2 - Y]_s
\\&\qquad = \int_0^{t \wedge L_T}\frac{\1_{\{Y_s < \|X_{S_s}\|^2\}}}{4 \K |\|X_{S_s}\|^2 - Y_s|} 4 \K \big(\|X_{S_s}\| - \sqrt{Y_s}\big)^2 ds 
\\&\qquad \leq \int_0^{t \wedge L_T}\frac{\1_{\{Y_s < \|X_{S_s}\|^2\}}}{|\|X_{S_s}\|^2 - Y_s|} |\|X_{S_s}\|^2 - Y_s| ds \leq t.
\end{align*}
Hence, by \cite[Lemma IX.3.3]{RY}, \(Q\)-a.s. \(L^0_{\cdot \wedge L_T} (\|X_S\|^2 - Y) = 0\), where \(L^0\) denotes the semimartingale local time in zero. 
Using this observation, Tanaka's formula and \eqref{eq: bound drift} yield that \(Q\)-a.s. for all \(t \in [0, L_T]\)
\begin{align*}
\big(\|X_{S_t}\|^2 - Y_t\big)^+ &= \int_0^t \1_{\{Y_s < \|X_{S_s}\|^2\}} d \big( \|X_{S_s}\|^2 - Y_s \big)
\\&\leq \int_0^t \1_{\{Y_s < \|X_{S_s}\|^2\}} 2 \K \big[\|X_{S_s}\| - \sqrt{Y_s} \big] d W_s.
\end{align*}
As the coefficients of the SDEs for \(Y\) and \(\|X_{S_{\cdot \wedge L_T}}\|\) satisfy standard linear growth conditions, these processes have polynomial moments and it follows readily that the It\^o integral process
\[
\int_0^{\cdot \wedge L_T} \1_{\{Y_s < \|X_{S_s}\|^2\}} 2 \K \big[\|X_{S_s}\| - \sqrt{Y_s} \big] d W_s
\]
is a martingale. Consequently, for all \(t \in [0, T]\),
\[
E^Q \Big[ \big(\|X_{S_{t \wedge L_T}}\|^2 - Y_{t \wedge L_T}\big)^+ \Big] = 0.
\]
By the continuous paths of \(Y\) and \(X_{S_{\cdot \wedge L_T}}\), we conclude that \(Q\)-a.s. \(Y_t \geq \|X_{S_t}\|^2\) for all \(t \in [0, L_T]\).
Let \(B = (B^{(1)}, \dots, B^{(d \K^2)})\) be a \(d \K^2\)-dimensional standard Brownian motion such that \(\|B_0\|^2 = \|x_0\|^2\).
By L\'evy's characterization of Brownian motion, the process
\[
\overline{B} := \sum_{k = 1}^{d \K^2} \int_0^\cdot \frac{B^{(k)}_{\K s}}{\sqrt{\K}\|B_{\K s}\|} d B^{(k)}_{\K s}
\]
is a one-dimensional standard Brownian motion, and, by It\^o's fomula, 
\[
d \|B_{\K t}\|^2 = 2\sqrt{\K} \|B_{\K t}\| d \overline{B}_t + d \K^3 dt.
\]
As the SDE \eqref{eq: SDE Y} satisfies uniqueness in law (see, e.g., \cite[Chapter IX]{RY} or \cite[Chapter~5]{KaraShre}), we conclude that \(Y = \|B_{\K \cdot}\|^2\) in law. 
For the remainder of this proof, we presume that the partition \(t_1, \dots, t_m\) is choosen such that \(t_n - t_{n - 1} < 1/(2\C_p \K T)\) for all \(n = 1, \dots, m\). Then, by \cite[Proposition 1.3.6]{KaraShre}, the process \((\exp ( \C_p (t_n - t_{n - 1}) \|B_{\K s}\|^2))_{s \in [0, T]}\) is a positive submartingale. 
Using that \(L_T \leq T\) and Doob's maximal inequality, we obtain
\begin{align*}
    E^Q \Big[ \exp \Big( \C_p (t_n - t_{n - 1}) \sup_{s \in [0, T]} \|X_s\|^2 \Big) \Big] &= E^Q \Big[ \exp \Big( \C_p (t_n - t_{n - 1}) \sup_{s \in [0, T]} \|X_{S_{L_s}}\|^2 \Big) \Big]
    \\&\leq E^Q \Big[ \exp \Big( \C_p (t_n - t_{n - 1}) \sup_{s \in [0, L_T]} \|Y_{s}\|^2 \Big) \Big]
    \\&\leq E^Q \Big[ \exp \Big( \C_p (t_n - t_{n - 1}) \sup_{s \in [0, T]} \|Y_{s}\|^2 \Big) \Big]
    \\&= E \Big[ \sup_{s \in [0, T]} \exp \Big( \C_p (t_n - t_{n - 1}) \|B_{\K s}\|^2 \Big) \Big]
    \\&\leq  4 E \Big[ \exp \Big( \C_p (t_n - t_{n - 1}) \|B_{\K T}\|^2 \Big) \Big] < \infty.
\end{align*}
The proof is complete.
\end{proof}

\begin{lemma} \label{lem: moments density}
Assume that the Conditions \ref{cond: compact, LG, cont} and \ref{SA: MPR} hold. Additionally,
suppose that either Condition~\ref{cond: MPR bounded} or Condition \ref{cond: unif ellipticity vola} holds.
Then, for every \(P \in \cP\) there exists a probability measure \(Q \in \fM_e(\cP)\) such that
\begin{align} \label{eq: moment bound density}
E^Q \Big[ \Big( \frac{dP}{dQ} \Big)^p \Big] < \infty, \quad \forall\hspace{0.05cm} p > 0.
\end{align}
\end{lemma}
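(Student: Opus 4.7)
The plan is to construct $Q$ as the Girsanov transform that removes the drift of $X$ under $P$, and then to bound all polynomial moments of $dP/dQ$ under $Q$. Given $P \in \cP$, Lemma \ref{lem: MPR meas selection} supplies a predictable selector $\f$ with $(b^P, a^P) = (a(\f)\theta(\f), a(\f))$ $(\llambda \otimes P)$-a.e., and the first half of the proof of Theorem \ref{thm: M = D} shows that $dQ/dP := Z^P_T$, with $Z^P$ as in (\ref{eq: ZP}), defines a measure $Q \sim P$ inside $\cM = \fM_e(\cP)$ (invoking Corollary \ref{cor: M = D}). A direct Girsanov computation, using that $X$ has vanishing drift under $Q$, identifies $dP/dQ = (Z^P_T)^{-1}$ with the stochastic exponential
\[
Z^Q_T := \exp\Big(\int_0^T \langle \theta(\f_s), dX_s\rangle - \tfrac{1}{2}\int_0^T \langle \theta(\f_s), a(\f_s)\theta(\f_s)\rangle ds\Big)
\]
of $\int_0^\cdot \langle \theta(\f_s), dX_s\rangle$ under $Q$.

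For $p \geq 1$ I would use the algebraic identity
\[
(Z^Q_T)^p = Y^{(p)}_T \cdot \exp\Big(\tfrac{p(p-1)}{2}\int_0^T \langle \theta(\f_s), a(\f_s)\theta(\f_s)\rangle ds\Big),
\]
where $Y^{(p)}$ is the stochastic exponential of $\int_0^\cdot \langle p\theta(\f_s), dX_s\rangle$ under $Q$, to split the moment estimate into a stochastic-exponential factor and an exponential-of-drift factor. Under Condition \ref{cond: MPR bounded}, the term $\langle p\theta, a p\theta\rangle$ is globally bounded, so Novikov's criterion (or directly Lemma \ref{lem: girsanov}) yields $E^Q[Y^{(p)}_T] = 1$ while the second factor is almost surely bounded; this gives $E^Q[(Z^Q_T)^p] \leq e^{|p(p-1)|CT/2}$. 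Under Condition \ref{cond: unif ellipticity vola}, the operator norm of $a^{-1}$ is at most $\K$, so $\|\theta(\f_t)\| \leq \K \|b(\f_t, t, X)\|$; combined with the linear growth assumption from Condition \ref{cond: compact, LG, cont}(iii), this makes $\theta(\f)$ a predictable integrand of linear growth, and Proposition \ref{prop: integrability of stochastic exponential} applied with $c^Q := \theta(\f)$ delivers $E^Q[(Z^Q_T)^p] < \infty$ for all $p \geq 1$. The remaining case $p \in (0,1)$ follows from Jensen's inequality and the supermartingale property of the positive $Q$-local martingale $Z^Q$, via $E^Q[(Z^Q_T)^p] \leq (E^Q[Z^Q_T])^p \leq 1$.

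The main obstacle is the elliptic case: Condition \ref{cond: unif ellipticity vola} does not provide a uniform bound on $\theta$ but only linear growth, so no direct Novikov-type argument is available. The nontrivial work is thus packaged inside Proposition \ref{prop: integrability of stochastic exponential}, whose proof converts the linear growth of the integrand into exponential integrability of $\sup_{s \leq T}\|X_s\|^2$ under $Q$ via a time change against a uniformly elliptic clock and a comparison argument with a Brownian-driven squared Bessel process. Once that proposition is in hand, the present lemma is assembled from Lemma \ref{lem: MPR meas selection}, Lemma \ref{lem: girsanov}, and Girsanov's theorem with no further surprises.
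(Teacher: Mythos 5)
Your proposal is correct and follows essentially the same route as the paper: construct \(Q\) via the \(P\)-martingale \(Z^P\) from \eqref{eq: ZP} (Lemmata \ref{lem: MPR meas selection} and \ref{lem: girsanov} plus Girsanov), identify \(dP/dQ\) with the stochastic exponential of \(\int_0^\cdot \langle \theta(\f_s), dX_s\rangle\) under \(Q\), and then invoke boundedness of \(\langle\theta,a\theta\rangle\) (the paper cites \cite[Theorem~1]{GriMa03}, whose proof is exactly your \(Y^{(p)}\)-splitting) under Condition \ref{cond: MPR bounded}, or Proposition \ref{prop: integrability of stochastic exponential} with the linear-growth integrand \(\theta(\f)=a^{-1}(\f)b(\f)\) under Condition \ref{cond: unif ellipticity vola}. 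Your explicit treatment of \(p\in(0,1)\) via Jensen and the supermartingale property is a harmless addition the paper leaves implicit.
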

\begin{proof}
Take \(P \in \cP\) and denote the differential characteristics of \(X\) under \(P\) by \((b^P, a^P)\). By Lemma \ref{lem: MPR meas selection}, there exists a predictable function \(\f \colon \of 0, T\gs \to F\) such that \((\llambda \otimes P)\)-a.e. \((b^P, a^P) = (b (\f), a (\f))\).
Let \(Z^P\) be as in \eqref{eq: ZP} and recall that it is a \(P\)-martingale by Lemma \ref{lem: girsanov}. We define a probability measure \(Q \sim P\) by the Radon--Nikodym derivative \(dQ/dP = Z^P_T\).
Then, Girsanov's theorem (\cite[Theorem~III.3.24]{JS}) shows that \(Q \in \fM_e(\cP)\)
and simple computation yield that \(Q\)-a.s.
\[
\frac{d P}{d Q} = \Big( \frac{d Q}{d P} \Big)^{-1} = \exp \Big( \int_0^T \langle \theta (\f_s) , d X_s \rangle - \frac{1}{2} \int_0^T \langle \theta (\f_s), a(\f_s) \theta (\f_s) \rangle ds \Big).
\]
In case Condition \ref{cond: MPR bounded} holds, \(\int_0^T \langle \theta (\f_s), a(\f_s) \theta (\f_s) \rangle ds\) is \(Q\)-a.s. bounded and \eqref{eq: moment bound density} follows from \cite[Theorem~1]{GriMa03}.
Further, if Condition \ref{cond: unif ellipticity vola} holds, \(\theta (\f) = a^{-1} (\f) b (\f)\) is of linear growth by Condition~\ref{cond: compact, LG, cont}, and Proposition~\ref{prop: integrability of stochastic exponential} yields \eqref{eq: moment bound density}. This completes the proof.
\end{proof}

\begin{lemma} \label{lem: bound sequence}
Assume that the Conditions \ref{cond: compact, LG, cont} and \ref{SA: MPR} hold. Additionally,
suppose that either Condition~\ref{cond: MPR bounded} or Condition \ref{cond: unif ellipticity vola} holds.
Let \( x > 0\) and \( (g_n)_{n \in \bN} \subset \cC(x)\). Then, for every \(P \in \cP\)
and every \( \epsilon \in (0,1)\), we have
\[
\sup_{n \in \bN} E^P \big[ (g_n)^\epsilon \big] < \infty.
\]
\end{lemma}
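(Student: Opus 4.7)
The plan is to transfer the integrability question from \(P\) to a well-chosen equivalent local martingale measure \(Q\) under which the claims \(g_n\) admit a uniform first moment bound, and then to return to \(P\) via H\"older's inequality, exploiting that \(dP/dQ\) has all polynomial moments under \(Q\).

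Concretely, I first invoke Lemma~\ref{lem: moments density} to produce, for the given \(P \in \cP\), some \(Q \in \fM_e(\cP)\) with \(Q \sim P\) and \(E^Q[(dP/dQ)^{1/(1-\epsilon)}] < \infty\). The hypotheses of that lemma coincide with those at hand, and Corollary~\ref{cor: M = D} delivers the \NFLVR condition I need below. Unfolding the definition of \(\cC(x)\), each \(g_n\) comes with \(H_n \in \cH^\cP\) such that
\[
x + \int_0^T H_{n,s}\, dX_s \geq g_n \quad \cP\text{-q.s.},
\]
and hence \(Q\)-a.s., since \(Q \ll P \in \cP\). Under \NFLVR, Lemma~\ref{lem: integrands} identifies \(\cH^\cP\) with \(\cH^{\fM_a(\cP)}\), so in particular \(H_n \in L(X,Q)\). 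Because \(X\) is a local \(Q\)-martingale, \(\int_0^\cdot H_{n,s}\, dX_s\) is a local \(Q\)-martingale bounded from below by a constant, and therefore a \(Q\)-supermartingale. Taking \(Q\)-expectations in the superhedging inequality yields \(E^Q[g_n] \leq x\), uniformly in \(n\).

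The conclusion then follows from H\"older's inequality with conjugate exponents \(1/\epsilon\) and \(1/(1-\epsilon)\): writing
\[
E^P\bigl[g_n^\epsilon\bigr] = E^Q\bigl[g_n^\epsilon \cdot (dP/dQ)\bigr] \leq \bigl(E^Q[g_n]\bigr)^\epsilon \cdot \bigl(E^Q\bigl[(dP/dQ)^{1/(1-\epsilon)}\bigr]\bigr)^{1-\epsilon},
\]
both factors on the right-hand side are finite uniformly in \(n\), which gives \(\sup_n E^P[g_n^\epsilon] < \infty\). I do not expect a genuine obstacle here: the substantive work has been absorbed into Lemma~\ref{lem: moments density} (which in turn rests on Proposition~\ref{prop: integrability of stochastic exponential} when Condition~\ref{cond: unif ellipticity vola} is in force, and on a straightforward Novikov/Kazamaki-type bound under Condition~\ref{cond: MPR bounded}); the remaining steps are a routine supermartingale-plus-H\"older exercise.
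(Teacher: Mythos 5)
Your proof is correct and follows essentially the same route as the paper: obtain \(Q \in \fM_e(\cP)\) with \(E^Q[(dP/dQ)^{1/(1-\epsilon)}] < \infty\) from Lemma~\ref{lem: moments density}, bound \(E^Q[g_n] \leq x\), and conclude by H\"older. The only cosmetic difference is that you re-derive \(E^Q[g_n] \leq x\) from the definition of \(\cC(x)\) via the supermartingale property of the superhedging integral, whereas the paper simply cites Proposition~\ref{prop: first duality} (i.e., \(\fM_a(\cP) \subset \cD\)), whose proof is exactly the argument you inline.
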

\begin{proof}
The lemma follows similar to \cite[Lemma 5.11]{kupper}.
Fix \(\epsilon \in (0,1)\), \( P \in \cP\) and let \( (g_n)_{n \in \bN} \subset \cC(x)\). Set \( p := \frac{1}{1-\epsilon} \). By virtue of Lemma \ref{lem: moments density}, there exists a probability measure \(Q \in \fM_e(\cP)\) with 
\[
E^Q \Big[ \Big( \frac{dP}{dQ}  \Big)^p \Big] < \infty.
\]
Hence, Hölder's inequality together with Proposition \ref{prop: first duality} implies
\[ \sup_{n \in \bN} E^P \big[ (g_n)^\epsilon \big] \leq E^Q \Big[ \Big( \frac{dP}{dQ} \Big)^p \Big]^{1-\epsilon} E^Q \big[ g_n\big]^\epsilon 
\leq E^Q \Big[ \Big( \frac{dP}{dQ} \Big)^p \Big]^{1-\epsilon} x^\epsilon < \infty,
\]
which gives the claim.
\end{proof}

The following estimate can be extracted from the proof of \cite[Theorem 2.10]{kupper}.

\begin{lemma} \label{lem: weak duality}
Let \(P \in \cP\), and let \(Q \in \cD\) be such that \(Q \ll P\).
Then, for every \(x, y > 0\), we have
\begin{equation} \label{eq: weak duality}
    u(x) \leq \overline{u}(x) \leq E^P \Big[ \max\Big\{ V_1\Big(y \frac{dQ}{dP} \Big), 0 \Big\}\Big] + xy.
\end{equation}
\end{lemma}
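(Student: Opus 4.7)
The plan is to establish \eqref{eq: weak duality} via a standard robust Fenchel--Young argument, as the hint after the statement indicates that the estimate is extracted from \cite[proof of Theorem 2.10]{kupper}. The first inequality $u(x) \leq \overline{u}(x)$ is immediate because every $g \in \cC$ is the medial limit of the constant sequence $g_n \equiv g$, so $\cC(x) \subset \overline{\cC}(x)$. The main content is therefore the right-hand inequality, which I would first prove for $u$ and then upgrade to $\overline{u}$ via a medial-limit argument.

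Set $Z := dQ/dP$, which is well-defined since $Q \ll P$. The Fenchel--Young inequality, applied to the conjugate pair $(U, V)$ with arguments $g \geq 0$ and $yZ > 0$, yields the pointwise bound
\[
U(g) \leq V(yZ) + y\, g\, Z \leq \max\{V_1(yZ), 0\} + y\, g\, Z,
\]
where passing from $V$ to the positive part of $V_1$ only weakens the inequality and ensures integrability of the right-hand side. Taking $P$-expectations and using $E^P[gZ] = E^Q[g]$, I obtain
\[
E^P[U(g)] \leq E^P\!\big[\max\{V_1(yZ), 0\}\big] + y\, E^Q[g].
\]
For $g \in \cC(x)$, the rescaled claim $g/x$ lies in $\cC$, and since $Q \in \cD$ we have $E^Q[g/x] \leq 1$, so $y\, E^Q[g] \leq xy$. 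Combining these bounds, and using $\inf_{P' \in \cP} E^{P'}[U(g)] \leq E^P[U(g)]$, the supremum over $g \in \cC(x)$ yields the bound for $u(x)$.

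To upgrade to $\overline{u}(x)$, consider $g^* = \limmed_n g_n$ with $(g_n) \subset \cC(x)$. The Fenchel--Young inequality still applies pointwise to $g^*$, so it suffices to show $E^P[g^* Z] \leq x$. This I would obtain from the defining interchange property of medial limits, $E^P[\limmed_n X_n] = \limmed_n E^P[X_n]$, applied to the bounded truncations $(g_n Z) \wedge M$. Since this truncated sequence is bounded by $M$, the interchange gives $E^P[\limmed_n ((g_n Z) \wedge M)] = \limmed_n E^P[(g_n Z) \wedge M] \leq \limmed_n E^Q[g_n] \leq x$, where I used $E^P[g_n Z] = E^Q[g_n] \leq x$. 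A monotone passage as $M \nearrow \infty$, together with the positivity and monotonicity of $\limmed$, then yields $E^P[g^* Z] \leq x$. Plugging this into the Fenchel--Young bound and taking supremum over $g^* \in \overline{\cC}(x)$ completes the argument.

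The main obstacle is the last step: medial limits are only guaranteed to behave linearly on bounded sequences, so the interchange of $\limmed$ with $E^P$ and the subsequent passage to the limit in $M$ require a careful truncation scheme. One needs to verify that the pointwise identification $\limmed_n (g_n Z) = g^* Z$ (on the set where the sequence is bounded) combines correctly with the positivity of $\limmed$ to give the desired monotone upper bound. Once this bookkeeping is in place, the remainder is routine Fenchel--Young computation.
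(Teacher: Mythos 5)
The paper provides no proof of its own for this lemma---it simply cites \cite[Theorem~2.10]{kupper}---and the Fenchel--Young argument you give is the one underlying that citation, so your route is the expected one. The chain $U(g)\leq V(yZ)+ygZ\leq\max\{V_1(yZ),0\}+ygZ$ is correct because $V\leq V_1$ (as $U$ is nondecreasing) and one then passes to the positive part; and the inclusion $\cC(x)\subset\overline{\cC}(x)$ via constant sequences gives $u\leq\overline{u}$. The one point you defer is exactly the one worth making explicit: the truncation step rests on the convention $\limmed_n X_n:=\sup_{M>0}\limmed_n(X_n\wedge M)$ for $[0,\infty]$-valued sequences, which is the ``natural extension'' the paper alludes to. Under this convention, positive homogeneity $\limmed_n(cX_n)=c\,\limmed_n X_n$ for finite $c\geq0$ reduces to the bounded case, so $\limmed_n(g_nZ)=Zg^*$ $P$-a.s.\ (using $Z<\infty$ $P$-a.s.); and monotone convergence in $M$ together with the interchange property on the bounded truncations yields
\[
E^P\big[Zg^*\big]=\sup_{M>0}\limmed_n E^P\big[(g_nZ)\wedge M\big]\leq x,
\]
since $E^P[(g_nZ)\wedge M]\leq E^Q[g_n]\leq x$ and $\limmed$ is monotone and fixes constant sequences. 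That closes the gap you flag without any new idea; just state the extension convention.
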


Finally, we present two lemmata which deal with the power and the log utility separately.
\begin{lemma} \label{lem: power auxiliary}
Assume that the Conditions \ref{cond: compact, LG, cont} and \ref{SA: MPR} hold. Additionally,
suppose that either Condition~\ref{cond: MPR bounded} or Condition \ref{cond: unif ellipticity vola} holds.
Let \(U\) be a power utility function \(U(x) = \frac{x^p}{p}\) with exponent \( p \in (0,1)\). Then, 
\begin{enumerate}
\item[\textup{(i)}] there exists an \( x > 0\) such that \( \overline{u}(x) < \infty\),
    
\item[\textup{(ii)}] for every \(x > 0\) and \( (g_n)_{n \in \bN} \subset \cC(x)\), the sequence of random variables 
\[ \max \big\{ U\big(g_n + \tfrac{1}{n}\big), 0 \big\}, \quad n \in \bN, \]
is uniformly integrable for every \( P \in \cP\).
\end{enumerate}
\end{lemma}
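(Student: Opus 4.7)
The plan is to reduce both assertions to moment estimates via Lemmas~\ref{lem: weak duality},~\ref{lem: moments density} and~\ref{lem: bound sequence}. First I would compute the conjugate of $U(x) = x^p/p$ for $p \in (0,1)$, which a direct calculation gives as
\[
V(y) = \tfrac{1-p}{p}\, y^{\alpha}, \qquad \alpha := \tfrac{p}{p-1} < 0, \quad y > 0.
\]
In particular $V \geq 0$, and the relevant exponent turns out to be $1 - \alpha = 1/(1-p) > 1$, tailored to match Lemma~\ref{lem: moments density}.

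For part~(i), I would fix any $P \in \cP$ and use Lemma~\ref{lem: moments density} to produce a measure $Q \in \fM_e(\cP)$ with $Q \sim P$ and $E^Q[(dP/dQ)^r] < \infty$ for every $r > 0$. Since $Q \in \fM_a(\cP) = \cD$ by Proposition~\ref{prop: first duality}, Lemma~\ref{lem: weak duality} applies and yields, for all $x, y > 0$,
\[
\overline{u}(x) \leq \tfrac{1-p}{p}\, y^\alpha\, E^P\!\big[(dQ/dP)^\alpha\big] + xy.
\]
Because $Q \sim P$, a change-of-measure computation rewrites $E^P[(dQ/dP)^\alpha] = E^Q[(dP/dQ)^{1-\alpha}] = E^Q[(dP/dQ)^{1/(1-p)}]$, which is finite by the choice of $Q$. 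This proves (i) for every $x > 0$, not only for some $x$.

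For part~(ii), I would fix $P \in \cP$, $x > 0$ and $(g_n)_{n \in \bN} \subset \cC(x)$. Since $U \geq 0$ on $[0,\infty)$, the claim reduces to uniform $P$-integrability of $((g_n + 1/n)^p/p)_{n \in \bN}$. I would pick an exponent $q \in (1, 1/p)$, so that $\epsilon := pq \in (p, 1)$, and estimate $(g_n + 1/n)^{pq} \leq (g_n + 1)^{pq} \leq 2^{pq}(1 + g_n^\epsilon)$. Lemma~\ref{lem: bound sequence} applied with this $\epsilon$ then yields
\[
\sup_{n \in \bN} E^P\!\Big[\big((g_n + 1/n)^p/p\big)^q\Big] < \infty,
\]
i.e.\ an $L^q(P)$-bound with $q > 1$, from which uniform integrability is immediate. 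I do not anticipate any real obstacle: the two moment lemmas are essentially designed for this step, and the only care needed is in selecting the exponents $1/(1-p)$ in (i) and $q \in (1, 1/p)$ in (ii).
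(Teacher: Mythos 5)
Your part~(ii) is correct and follows the same route the paper takes (reducing to an $L^q$-bound via Lemma~\ref{lem: bound sequence}; the paper simply cites the corresponding argument from \cite{kupper}). For part~(i), however, there is a genuine misstep: Lemma~\ref{lem: weak duality} is stated with $V_1(y) := \sup_{x \geq 0}\big[U(x+1) - xy\big]$, the conjugate of the \emph{shifted} utility $U(\cdot+1)$, not with the conjugate $V$ of $U$ that you computed. Since $U$ is nondecreasing, $V_1 \geq V$ pointwise, so the inequality
\[
\overline{u}(x) \leq E^P\Big[V\Big(y\,\tfrac{dQ}{dP}\Big)\Big] + xy
\]
that you claim to read off from the lemma is \emph{stronger} than what the lemma actually gives, and is therefore not justified by it. The gap is small and easily repaired: substituting $z = x+1$ yields $V_1(y) = y + \sup_{z \geq 1}[U(z) - zy] \leq y + V(y)$, and since $V \geq 0$ here, $\max\{V_1(\cdot),0\} = V_1(\cdot)$, so
\[
E^P\Big[\max\Big\{V_1\Big(y\,\tfrac{dQ}{dP}\Big),0\Big\}\Big] \leq y\,E^P\Big[\tfrac{dQ}{dP}\Big] + \tfrac{1-p}{p}\, y^{\alpha}\, E^Q\Big[\big(\tfrac{dP}{dQ}\big)^{1/(1-p)}\Big] = y + \tfrac{1-p}{p}\, y^{\alpha}\, E^Q\Big[\big(\tfrac{dP}{dQ}\big)^{1/(1-p)}\Big],
\]
which is finite by Lemma~\ref{lem: moments density}. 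With that correction your argument matches the paper's (which points to \cite[Lemmas 5.13, 5.16]{kupper} modulo lemma substitutions), and your observation that the finiteness in (i) in fact holds for \emph{every} $x>0$ is correct.
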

\begin{proof}
We start with (i). Let \(P \in \cP\).
By virtue of Lemma \ref{lem: weak duality}, it suffices to construct \(Q \in \cD\) such that \(Q \ll P\) and 
\[ 
E^P \Big[ \max\Big\{ V_1\Big(y \frac{dQ}{dP} \Big), 0 \Big\}\Big] < \infty,
\qquad 
V_1(y) = \sup_{x \geq 0} \Big[ \frac{(x+1)^p}{p} - xy \Big].
\]
This follows as in \cite[Lemma 5.16]{kupper}, when replacing \cite[Lemma 5.10]{kupper} by Lemma \ref{lem: moments density}.
Regarding (ii), this can be shown as \cite[Lemma 5.13]{kupper} by using Lemma \ref{lem: bound sequence} instead of \cite[Lemma 5.11]{kupper}.
\end{proof}

\begin{lemma} \label{lem: log auxiliary}
Assume that the Conditions \ref{cond: compact, LG, cont} and \ref{SA: MPR} hold. Additionally,
suppose that either Condition~\ref{cond: MPR bounded} or Condition \ref{cond: unif ellipticity vola} holds.
Let \(U\) be the log utility \(U(x) = \log(x)\). Then,
\begin{enumerate}
    \item[\textup{(i)}] there exists an \( x > 0\) such that \( \overline{u}(x) < \infty\),
    
    \item[\textup{(ii)}] for every \(x > 0\) and \( (g_n)_{n \in \bN} \subset \cC(x)\), the sequence of random variables 
    \[ \max \big\{ U \big(g_n + \tfrac{1}{n} \big), 0 \big \}, \quad n \in \bN, \]
    is uniformly integrable for every \( P \in \cP\),
    
    \item[\textup{(iii)}]  for each \( y > 0\), and each \( P \in \cP\), there exists \( Q \in \cD\) with \(Q \ll P\) such that
    \[ E^P \Big[ \max \Big\{ V_1 \Big(y \frac{dQ}{dP}\Big), 0 \Big\} \Big] < \infty, \]
    where
    \[
    V_1(y) := \sup_{x \geq 0} \big[ \log(x+1) - xy \big].
    \]
\end{enumerate}
\end{lemma}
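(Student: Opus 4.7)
The plan is to mirror the strategy of Lemma~\ref{lem: power auxiliary}, proving (iii) first so that (i) follows at once from Lemma~\ref{lem: weak duality}, while (ii) is essentially independent and rests on Lemma~\ref{lem: bound sequence}. The new subtlety, compared to the power case, is that the conjugate $V_1$ now has only logarithmic growth at zero, so I will need both the positive \emph{and} negative parts of $\log(dQ/dP)$ to be integrable.

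For (iii), fix $P\in\cP$. Lemma~\ref{lem: moments density} produces $Q\in\fM_e(\cP)\subset\cD$ with $Q\sim P$ and $E^Q[(dP/dQ)^p]<\infty$ for every $p>0$. A direct computation for $U=\log$ yields, for $y>0$,
\[
V_1(y)=\max\{-\log y+y-1,0\}\leq |\log y|,
\]
so the claim reduces to showing $E^P\big[|\log(dQ/dP)|\big]<\infty$. Writing $Z:=dQ/dP$ and decomposing $|\log Z|=(\log Z)^+ + (\log Z)^-$, the positive part is dispatched by $(\log Z)^+\leq Z$, giving $E^P[(\log Z)^+]\leq E^P[Z]=1$. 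For the negative part, the equivalence $Q\sim P$ allows a change of measure:
\[
E^P\big[(\log Z)^-\big]=E^Q\big[(dP/dQ)\,(\log(dP/dQ))^+\big].
\]
Applying the elementary inequality $w(\log w)^+\leq C_\epsilon w^{1+\epsilon}$ for any $\epsilon>0$ and all $w>0$, the right-hand side is bounded by $C_\epsilon E^Q[(dP/dQ)^{1+\epsilon}]$, which is finite by the choice of $Q$.

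Part (i) then follows from Lemma~\ref{lem: weak duality} applied with any $x>0$, $y=1$, and the $Q$ just constructed. For (ii), I invoke de la Vall{\'e}e Poussin with $\phi(z):=z^{1+\epsilon}$. Since $\log^+(g_n+1/n)\leq \log(g_n+2)$ for every $n\geq 1$, and since $(\log z)^{1+\epsilon}\leq C_{\epsilon,\delta}z^\delta$ for $z\geq 2$ with $\delta\in(0,1)$ chosen small, I obtain
\[
E^P\big[(\log^+(g_n+1/n))^{1+\epsilon}\big]\leq C\big(1+E^P[g_n^\delta]\big),
\]
which is uniformly bounded in $n$ by Lemma~\ref{lem: bound sequence}. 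Hence $\{\max\{U(g_n+1/n),0\}\}_{n\in\bN}$ is uniformly integrable under $P$, as required.

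The main obstacle is the negative part of $\log Z$ in (iii): unlike the power case, the bound $E^P[Z]=1$ only controls the upper tail, and the logarithmic blow-up as $Z\downarrow 0$ cannot be absorbed directly under $P$. The resolution is the change-of-measure identity above, which converts the troublesome $P$-integral into a $Q$-integral of $w(\log w)^+$ that is tamed by the arbitrary-order moment bound on $dP/dQ$ supplied by Lemma~\ref{lem: moments density}; this is precisely the point where either Condition~\ref{cond: MPR bounded} or Condition~\ref{cond: unif ellipticity vola} enters the picture.
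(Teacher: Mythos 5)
Your proof is correct and follows the same route as the paper, which simply defers to \cite[Lemmata~5.12, 5.15]{kupper} with Lemma~\ref{lem: moments density} and Lemma~\ref{lem: bound sequence} substituted for their L\'evy-case counterparts; you have written out in full the details that the paper outsources to that reference (the change of measure for the negative part of \(\log(dQ/dP)\), the moment bound of arbitrary order on \(dP/dQ\), and de la Vall\'ee Poussin via the \(\epsilon\)-moment estimate for part (ii)). One small slip: the identity \(V_1(y)=\max\{-\log y+y-1,0\}\) is false for \(y>1\) (there \(V_1(y)=0\) while \(-\log y+y-1>0\)), so the displayed chain \(V_1(y)=\max\{\cdot\}\le|\log y|\) breaks at its second link; the bound you actually use downstream, namely \(V_1(y)\le(\log y)^-\le|\log y|\), is nevertheless correct, so the argument goes through unchanged.
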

\begin{proof}
By virtue of Lemma \ref{lem: weak duality}, (iii) implies (i).
To see (iii), one argues as in \cite[Lemma 5.15]{kupper}, replacing 
\cite[Lemma 5.10]{kupper} by Lemma \ref{lem: moments density}.
Regarding (ii), this can be shown as \cite[Lemma 5.12]{kupper}, using 
Lemma~\ref{lem: bound sequence} instead of \cite[Lemma 5.11]{kupper}.
\end{proof}

\subsubsection{Duality for utilities bounded from below: proof of Theorem \ref{thm: main positive power exponential}}
Recall that in this section \(U\) is either a \emph{power utility} \(U(x) = \frac{x^p}{p}\), \( p \in (0,1)\), or an \emph{exponential utility} \(U(x) = -e^{-\lambda x}\), \( \lambda >0\).
Note that both utilities are bounded from below, and that the power utility is unbounded from above.
Corollary~\ref{cor: M = D} implies that \NFLVR holds. Hence, we deduce from Theorem \ref{thm: dualities} that \(\cC\) and \(\cD\) are in duality and that \(\cD = \fM_a(\cP)\).
Applying Corollary \ref{cor: M = D} once more, Theorem \ref{thm: compactness} shows that the sets 
\(\cP\) and \(\cD = \cM\) are convex and compact.
Using Corollary \ref{cor: M = D} a third time proves that \eqref{eq: cond 2.1} holds.
Hence, by virtue of \cite[Theorem 2.10]{kupper}, the claim follows directly in case of a exponential utility. To handle the power utility, we additionally apply Lemma \ref{lem: power auxiliary}.\qed

\subsubsection{Duality for utilities unbounded from below: proof of Theorem \ref{thm: main negative power log}}
Recall that in this section \(U\) is either a
\emph{power utility} \(U(x) = \frac{x^p}{p}\), \( p \in (-\infty,0)\),
or the \emph{log utility} \(U(x) = \log(x)\).
Note that both utilities are unbounded from below, i.e., \(U(0) = \lim_{x \to 0} U(x) = -\infty\), and that the log utility is unbounded from above.
Corollary \ref{cor: M = D} implies that \NFLVR holds. Hence, we deduce from Theorem \ref{thm: dualities} that \(\cC\) and \(\cD\) are in duality and that \(\cD = \fM_a(\cP)\).
Applying Corollary \ref{cor: M = D} once more, Theorem \ref{thm: compactness} shows that the sets 
\(\cP\) and \(\cD = \cM\) are convex and compact.
Using Corollary \ref{cor: M = D} a third time proves that \eqref{eq: cond 2.1} holds.
Hence, by virtue of \cite[Theorem 2.16]{kupper}, the claim follows directly in case of a power utility.
To handle the log utility, we additionally apply Lemma \ref{lem: log auxiliary}.\qed





\begin{thebibliography}{1}

\bibitem{hitchi}
C.~D.~Aliprantis and K.~B.~Border. 
\newblock {\em Infinite Dimensional Analysis: A Hitchhiker's Guide}.
\newblock Springer Berlin Heidelberg, 3rd ed., 2006.

\bibitem{bartl}
D.~Bartl, P.~Cheridito and M.~Kupper.
\newblock Robust expected utility maximization with medial limits.
\newblock {\em Journal of Mathematical Analysis and Applications}, 471(3):752--775, 2019.

\bibitem{kupper}
D.~Bartl, M.~Kupper and A.~Neufeld.
\newblock Duality theory for robust utility maximisation.
\newblock {\em Finance and Stochastics}, 25(3):469--503, 2021.

\bibitem{biagini}
S.~Biagini and M.~{\c{C}}.~P{\i}nar.
\newblock The robust Merton problem of an ambiguity averse investor.
\newblock {\em Mathematics and Financial Economics}, 11(1):1--24, 2017.

\bibitem{BC18}
R.~Blanchard and L.~Carassus.
\newblock Multiple-priors optimal investment in discrete time for unbounded utility functions.
\newblock {\em The Annals of Applied Probability}, 28(3):1856--1892, 2018.

\bibitem{bogachev}
V.~I.~Bogachev.
\newblock {\em Measure Theory}.
\newblock Springer Berlin Heidelberg, 2007.

\bibitem{nutz_nondom}
B.~Bouchard and M.~Nutz.
\newblock Arbitrage and Duality in Nondominated Discrete-Time Models.
\newblock {\em Annals of Applied Probability}, 25(2):823--859, 2015.

\bibitem{BR16}
C.~Bruggeman and J.~Ruf.
\newblock A one-dimensional diffusion hits points fast.
\newblock {\em Electronic Communications in Probability}, 21(22):1--7, 2016.

\bibitem{COW19}
L.~Carassus, J.~Ob\l{}\'{o}j and J.~Wiesel.
\newblock The robust superreplication problem: A dynamic approach.
\newblock {\em SIAM Journal on Financial Mathematics}, 10(4):907--941, 2019.

\bibitem{CFY}
P.~Cheridito, D.~Filipovi\'c and M.~Yor
\newblock Equivalent and absolutely continuous measure changes for jump-diffusion processes.
\newblock {\em The Annals of Applied Probability}, 15(3):1713--1732, 2005.

\bibitem{ChuStr}
T.~Choulli and C.~Stricker.
\newblock Deux applications de la d\'ecompositionde Glatchouk--Kunita--Watanabe.
\newblock {\em S{\'e}minaire de Probabilit{\'e}s XXX}, pp. 12–23, Lecture Notes in Mathematics vol. 1626, Springer, 1996.

\bibitem{CH89}
J.~C.~Cox and C.~F.~Huang.
\newblock Optimal consumption and portfolio policies when asset prices follow a diffusion process.
\newblock {\em Journal of Economic Theory}, 49:33--83, 1989.

\bibitem{CH91}
J.~C.~Cox and C.~F.~Huang.
\newblock A variational problem arising in financial economics.
\newblock {\em Journal of Mathematical Economics}, 20:465--487, 1991.

\bibitem{C20}
D.~Criens.
\newblock No arbitrage in continuous financial markets.
\newblock {\em Mathematics and Financial Economics}, 14:461--506, 2020.

\bibitem{CN22}
D.~Criens and L.~Niemann.
\newblock Nonlinear continuous semimartingales.
\newblock arXiv:2204.07823v3, 2023.

\bibitem{CN22b}
D.~Criens and L.~Niemann.
\newblock Markov selections and Feller properties of nonlinear diffusions.
\newblock arXiv:2205.15200v2, 2022.

\bibitem{DS}
F.~Delbaen and W.~Schachermayer.
\newblock {\em The Mathematics of Arbitrage}.
\newblock Springer Berlin Heidelberg, 2006.

\bibitem{DM}
C.~Dellacherie and P.~A.~Meyer.
\newblock {\em Probabilities and Potential A}.
\newblock North Holland, Amsterdam, 1978.

\bibitem{denis}
L.~Denis and M.~Kervarec. 
\newblock Optimal investment under model uncertainty in nondominated models.
\newblock {\em SIAM Journal on Control and Optimization}, 51(3):1803--1822, 2013.

\bibitem{diestel}
J.~Diestel and J.~J.~Uhl, Jr.
\newblock {\em Vector Measures}.
\newblock American Mathematical Society, 1977.

\bibitem{ElKa15}
N.~El Karoui and X.~Tan.
\newblock Capacities, measurable selection and dynamic programming part II: application in stochastic control problems.
\newblock arXiv:1310.3364v2, 2015.

\bibitem{skorokhod}
I.~I.~Gikhman and A.~V.~Skorokhod.
\newblock {\em The Theory of Stochastic Processes III}.
\newblock Springer Berlin Heidelberg, reprint of the 1974 ed., 2007.

\bibitem{fadina2019affine}
T.~Fadina, A.~Neufeld, and T.~Schmidt. 
\newblock Affine processes under parameter uncertainty. 
\newblock {\em Probability, Uncertainty and Quantitative Risk}, 4(5), 2019.

\bibitem{Fed11}
S.~Federico.
\newblock A stochastic control problem with delay arising in a pension fund model.
\newblock {\em Finance and Stochastics}, 15:421–459, 2011.

\bibitem{GriMa03}
B.~Grigelionis and V.~Mackevi\u{c}ius.
\newblock The finiteness of moments of a stochastic exponential.
\newblock {\em Statistics \& Probability Letters}, 64:243--248, 2003.

\bibitem{HP91}
H.~He and N.~D.~Pearson.
\newblock Consumption and portfolio policies with incomplete markets and short-sale constraints: the infinite-dimensional case.
\newblock {\em Journal of Economic Theory}, 54(2):259--304, 1991.

\bibitem{HWY}
S.-W.~He, J.-G.~Wang and J.-A-~Yan.
\newblock {\em Semimartingale Theory and Stochastic Calculus}.
\newblock Routledge, 1992.

\bibitem{himmelberg}
C.~J.~Himmelberg.
\newblock Measurable relations.
\newblock {\em Fundamenta Mathematicae}, 87:53--72, 1975.

\bibitem{hol16}
J.~Hollender.
\newblock { \em L{\'e}vy-Type Processes under Uncertainty and Related Nonlocal Equations. }
\newblock PhD thesis, TU Dresden, 2016. 

\bibitem{IW}
N.~Ikeda and S.~Watanabe.
\newblock {\em Stochastic differential equations and diffusion processes}.
\newblock North--Holland Publishing Company Amsterdam Oxford New York, 2nd ed., 1989.

\bibitem{jacod79}
J.~Jacod.
\newblock {\em Calcul stochastique et probl\`emes de martingales}.
\newblock Springer Berlin Heidelberg New York, 1979.

\bibitem{jacod80}
J.~Jacod. 
\newblock Weak and strong solutions of stochastic differential equations.
\newblock {\em Stochastics}, 3:171--191, 1980.

\bibitem{JS}
J.~Jacod and A.~N.~Shiryaev.
\newblock {\em Limit Theorems for Stochastic Processes}.
\newblock Springer Berlin Heidelberg, 2nd ed., 2003.

\bibitem{Kallenberg}
O.~Kallenberg.
\newblock {\em Foundations of Modern Probability}.
\newblock Springer New York, 3rd ed., 2021.

\bibitem{KarKar}
I.~Karatzas and C.~Kardaras.
\newblock The num{\'e}raire portfolio in semimartingale financial models.
\newblock {\em Finance and Stochastics}, 11, 447–493, 2007.

\bibitem{KLS87}
I.~Karatzas, J.~P.~Lehoczky and S.~E.~Shreve.
\newblock Optimal portfolio and consumption decisions for a “small investor” on a finite horizon.
\newblock {\em SIAM Journal on Control and Optimization}, 25:1557--1586, 1987.

\bibitem{KLSX91}
I.~Karatzas, J.~P.~Lehoczky, S.~E.~Shreve and G.~L.~Xu.
\newblock Martingale and duality methods for utility maximisation in an incomplete market.
\newblock {\em SIAM Journal on Control and Optimization}, 29:702--730, 1991.

\bibitem{KaraShre}
I.~Karatzas and S.~E.~Shreve.
\newblock {\em Brownian Motion and Stochastic Calculus}.
\newblock Springer New York, 2nd ed., 1991.

\bibitem{kramkov}
D.~Kramkov and W.~Schachermayer
\newblock The asymptotic elasticity of utility functions and optimal investment in incomplete markets.
\newblock {\em The Annals of Applied Probability} 9(3):904--950, 1999.

\bibitem{K21}
F.~K\"uhn.
\newblock On infinitesimal generators of sublinear Markov semigroups.
\newblock {\em Osaka Journal of Mathematics}, 58(3):487--508, 2021.

\bibitem{liang}
Z.~Liang and M.~Ma.
\newblock Consumption–investment problem with pathwise ambiguity under logarithmic utility.
\newblock {\em Mathematics and Financial Economics}, 13(4):519--541, 2019.

\bibitem{liang2}
Z.~Liang and M.~Ma.
\newblock Robust consumption‐investment problem under CRRA and CARA utilities with time‐varying confidence sets.
\newblock {\em Mathematical Finance}, 30(3):1035--1072, 2020.

\bibitem{lin}
Q.~Lin and F.~Riedel.
\newblock Optimal consumption and portfolio choice with ambiguous interest rates and volatility.
\newblock {\em  Economic Theory}, 71(3):1189-1202, 2021.

\bibitem{neufeld}
C.~Liu and A.~Neufeld.
\newblock Compactness criterion for semimartingale laws and semimartingale optimal transport.
\newblock {\em Transactions of the American Mathematical Society}, 372(1):187--231, 2019.

\bibitem{GL}
G. Lowther (https://mathoverflow.net/users/1004/george-lowther).
\newblock Compactness of the set of densities of equivalent martingale measures.
\newblock {\em MathOverflow}, URL:https://mathoverflow.net/q/101784 (version: 2022-02-12).


\bibitem{memin}
J.~M{\'e}min.
\newblock Espaces de semi martingales et changement de probabilit{\'e}.
\newblock {\em Zeitschrift f{\"u}r Wahrscheinlichkeitstheorie und verwandte Gebiete}, 52(1):9--39, 1980.

\bibitem{meyer}
P.~A.~Meyer.
\newblock Limites m{\'e}diales, d'apr{\`e}s Mokobodzki.
\newblock In {\em S{\'e}minaire de Probabilit{\'e}s VII}, pages 198--204, Springer Berlin Heidelberg, 1973.

\bibitem{MU12ECP}
A.~Mijatovi\'c and M.~Urusov.
\newblock Convergence of integral functionals of one-dimensional diffusions.
\newblock {\em Electronic Communications in Probability}, 17(61):1--13, 2012.

\bibitem{neufeld2017nonlinear}
A.~Neufeld and M.~Nutz.
\newblock Nonlinear L{\'e}vy processes and their characteristics. \newblock {\em Transactions of the American Mathematical Society}, 369:69--95, 2017.

\bibitem{nutz_levy}
A.~Neufeld and M.~Nutz.
\newblock Robust utility maximization with L{\'e}vy processes. 
\newblock {\em Mathematical Finance}, 28(1):82--105, 2018.

\bibitem{normann}
D.~Normann.
\newblock Martin's axiom and medial functions.
\newblock {\em Mathematica Scandinavica}, 38(1):167--176, 1976.

\bibitem{nutz_integrals}
M.~Nutz.
\newblock Pathwise construction of stochastic integrals.
\newblock {\em Electronic Communications in Probability}, 17(24):1--7, 2012.

\bibitem{nutz}
M.~Nutz.
\newblock Random G-expectations.
\newblock {\em Annals of Applied Probability}, 23(5):1755–1777, 2013.

\bibitem{nutz_superhedging}
M.~Nutz.
\newblock Robust superhedging with jumps and diffusion.
\newblock {\em Stochastic Processes and their Applications}, 125(12):4543--4555, 2015.

\bibitem{nutz_superreplication}
M.~Nutz.
\newblock Superreplication under model uncertainty in discrete time.
\newblock {\em Finance and Stochastics}, 18(4):791--803, 2014.

\bibitem{nutz_utility}
M.~Nutz.
\newblock Utility maximization under model uncertainty in discrete time.
\newblock {\em Mathematical Finance}, 26(2):252--268, 2016. 

\bibitem{NVH}
M.~Nutz and R.~van Handel.
\newblock Constructing sublinear expectations on path space. 
\newblock {\em Stochastic Processes and their Applications}, 123(8):3100--3121, 2013.

\bibitem{park}
K.~Park and H.~Y.~ Wong.
\newblock Robust consumption-investment with return ambiguity: A dual approach with volatility ambiguity.
\newblock {\em SIAM Journal on Financial Mathematics}, 13(3):802–-843, 2022.

\bibitem{peng2010}
S.~G.~Peng.
\newblock Nonlinear expectations and stochastic calculus under uncertainty.
\newblock {\em arXiv:1002.4546}, 2010.

\bibitem{pliska86}
S.~R.~Pliska.
\newblock A stochastic calculus model of continuous trading: optimal portfolio.
\newblock {\em Mathematics of Operations Research}, 11:371--382, 1986.

\bibitem{pollard}
D.~Pollard.
\newblock {\em Convergence of Stochastic Processes}.
\newblock Springer New York, 1984.

\bibitem{RY}
D.~Revuz and M.~Yor. 
\newblock Continuous Martingales and Brownian Motion.
\newblock Springer Berlin Heidelberg, 3rd ed., 1999.

\bibitem{shiryaevProb}
 A.~N.~Shiryaev.
\newblock {\em Probability 1}.
\newblock Springer New York, 3rd ed., 2016.

\bibitem{SV}
D.~W.~Stroock and S.~R.~S.~Varadhan.
\newblock {\em Multidimensional Diffusion Processes}.
\newblock Springer Berlin Heidelberg, reprint of 1997 ed., 2006.


\end{thebibliography}
\end{document}